\definecolor{gray}{RGB}{128,128,128}
\newtheorem{theorem}{Theorem}
\newtheorem{assumption}{Assumption}
\newtheorem{lemma}{Lemma}
\newtheorem{remark}{Remark}
\definecolor{gray}{RGB}{128,128,128}
\begin{document}
\title{\textbf{Reduced Network Cumulative Constraint Violation for Distributed Bandit Convex Optimization under Slater’s Condition}}
\author{Kunpeng~Zhang,
        Xinlei~Yi,
        Jinliang~Ding,
        Ming~Cao,\\
        Karl~H.~Johansson,
        and Tao~Yang
\thanks{K. Zhang, J. Ding and T. Yang are with the State Key Laboratory of Synthetical Automation for Process Industries, Northeastern University, Shenyang 110819, China {\tt\small 2110343@stu.neu.edu.cn; \{jlding; yangtao\}@mail.neu.edu.cn}}%
\thanks{X. Yi is with Department of Control Science and Engineering, College of Electronics and Information Engineering, Tongji University, Shanghai 201800, China {\tt\small xinleiyi@tongji.edu.cn}}%
\thanks{M. Cao is with the Engineering and Technology Institute Groningen, Faculty of Science and Engineering, University of Groningen, AG 9747 Groningen, The Netherlands {\tt\small m.cao@rug.nl}}%
\thanks{K. H. Johansson is with Division of Decision and Control Systems, School of Electrical Engineering and Computer
Science, KTH Royal Institute of Technology, and he is also affiliated with Digital Futures, 10044, Stockholm, Sweden {\tt\small kallej@kth.se}}%
}


\maketitle

\begin{abstract}
This paper studies the distributed bandit convex optimization problem with time-varying inequality constraints, where the goal is to minimize network regret and cumulative constraint violation. 
To calculate network cumulative constraint violation,
existing distributed bandit online algorithms solving this problem directly use the clipped constraint function to replace its original constraint function.
However, the use of the clipping operation renders Slater’s condition (i.e, there exists a point that strictly satisfies the inequality constraints at all iterations) ineffective to achieve reduced network cumulative constraint violation.
To tackle this challenge, we propose a new distributed bandit online primal--dual algorithm. If local loss functions are convex, we show that the proposed algorithm establishes an $\mathcal{O}\big( {{T^{\max \{ {c,1-c} \}}}} \big)$ network regret bound and an $\mathcal{O}( {{T^{1 - c/2}}} )$ network cumulative constraint violation bound, where $T$ is the total number of iterations and $c \in ( {0,1} )$  is a user-defined trade-off parameter. When Slater’s condition holds, the network cumulative constraint violation bound is reduced to $\mathcal{O}( {{T^{1 - c}}} )$.
In addition, if local loss functions are strongly convex, for the case where strongly convex parameters are unknown, the network regret bound is reduced to $\mathcal{O}( {{T^{1 - c}}} )$, and the network cumulative constraint violation bound is $\mathcal{O}( {{T^{1 - c/2}}} )$ and $\mathcal{O}( {{T^{1 - c}}} )$ without and with Slater’s condition, respectively. For the case where strongly convex parameters are known, the network regret bound is further reduced to $\mathcal{O}\big( {\log ( T )} \big)$, and the network cumulative constraint violation bound is reduced to $\mathcal{O}\big( {\sqrt {\log ( T )T} } \big)$ and $\mathcal{O}\big( {\log ( T )} \big)$ without and with Slater’s condition, respectively. To the best of our knowledge, this paper is among the first to establish reduced (network) cumulative constraint violation bounds for (distributed) bandit convex optimization with time-varying constraints under Slater’s condition.
Finally, a numerical example is provided to verify the theoretical results.
\end{abstract}

\begin{IEEEkeywords}
Bandit convex optimization, cumulative constraint violation, distributed optimization, Slater’s condition, time-varying constraints.
\end{IEEEkeywords}


\section{INTRODUCTION}
Bandit convex optimization is a sequential decision process in dynamic environments, which can be understood as a structured repeated game with $T$ iterations between a decision maker and an adversary \cite{Hazan2016a}.
Specifically, at iteration $t$, a decision maker chooses $x_t$ from a convex set $\mathbb{X} \in {\mathbb{R}^p}$ in an Euclidean space.
After committing to this choice, she receives one-point bandit feedback for a convex loss function ${f_t}:\mathbb{X} \to \mathbb{R}$ from the adversary (i.e., the value of the function ${f_t}$ at $x_t$ is revealed to the decision maker by the adversary), where $\mathbb{R}$ denotes the set of all real numbers.
Accordingly, the decision maker suffers a loss ${f_t}( {{x_t}} )$.
The goal of the decision maker is to minimize the accumulative loss $\sum\nolimits_{t = 1}^T {{f_t}( {{x_t}} )}$ over $T$ iterations. The standard measure metric is the regret
\begin{flalign}
\nonumber
\sum\limits_{t = 1}^T {{f_t}( {{x_t}} )}  - \mathop {\min }\limits_{_{x \in \mathbb{X}}} \sum\limits_{t = 1}^T {{f_t}( x )},
\end{flalign}
which measures the difference between the accumulative loss and the loss induced by the best fixed decision in hindsight.
Over the past decades, bandit convex optimization has garnered substantial interest, see, e.g., \cite{Li2023} and references therein, due to its wide applications including smart grids with uncertain supply of renewable energy \cite{Lu2013, Zhang2018a} and data centers with uncertain user demands \cite{Lin2012, Liu2013, Liu2014}.

Various bandit online algorithms with sublinear regret have been developed. For example, in~\cite{Flaxman2005}, the authors propose a bandit online projection gradient descent where the gradient of the loss function is approximated by using a one-point estimate, establishing an ${\cal O}({T^{3/4}})$ regret bound for general convex loss functions. The bound is further reduced to $\mathcal{O}( {{T^{2/3}}} )$ for more stringent strong convex loss functions in \cite{Saha2011}.
However, a lower bound from \cite{Dani2008} implies that the regret of this algorithm in \cite{Flaxman2005} will be $\Omega ( {\sqrt T } )$, even for strongly convex loss functions. The lower bound is much worse than the $\mathcal{O}\big( {\log ( T )} \big)$ regret bound established by the algorithm in \cite{Hazan2007} for the full-information feedback setting (the convex loss function is revealed to the decision maker at each iteration).
To deal with this challenge, the authors of \cite{Agarwal2010} extend bandit convex optimization by allowing that the values of the convex loss function at multiple points are simultaneously revealed to the decision maker.
Moreover, the authors propose a bandit online projection gradient descent where the gradient of the loss function is approximated by using two-point estimate, and establish an $\tilde {\mathcal{O}}( {\sqrt T } )$ regret bound and an $\mathcal{O}\big( {\log ( T )} \big)$ regret bound for general convex loss functions and strong convex loss functions, respectively.
These bounds closely resemble the optimal bounds (i.e., the $\mathcal{O}( {\sqrt T } )$ regret bound established by the algorithm in \cite{Zinkevich2003} for general convex loss functions and the $\mathcal{O}\big( {\log ( T )} \big)$ regret bound  established by the algorithm in \cite{Hazan2007} for strong convex loss functions) for the full-information feedback setting.

Note that the aforementioned algorithms require the projection operator onto the feasible set at each iteration. The operator would be straightforward if the feasible set is a simple set, e.g., a cube, a ball, or a simplex, while it would yield heavy computation burden if the feasible set is complicated. In practice, the feasible set is often characterized by inequality constraints, i.e.,
\begin{flalign}
\nonumber
\mathcal{X} = \{ {x:g( x ) \le {\mathbf{0}_m},x \in \mathbb{X}} \},
\end{flalign}
where $g( x ):{\mathbb{R}^p} \to {\mathbb{R}^m}$ is the static convex constraint function, $m$ is a positive integer, and $\mathbb{X}$ is normally a simple set. In this case, long term constraints are considered in \cite{Mahdavi2012} for bandit convex optimization, where decisions are chosen from the simple set $\mathbb{X}$ and static inequality constraints should be satisfied in the long term on average. To measure accumulative violation of inequality constraints, a performance metric, constraint violation, is defined as
\begin{flalign}
\Big\| {{{\Big[ {\sum\limits_{t = 1}^T {g( {{x_t}} )} } \Big]_ +} }} \Big\|, \label{intro-eq1}
\end{flalign}
where $\|  \cdot  \|$ denotes the Euclidean norm for vectors, $[  \cdot  ]_ + $ is the projection onto the nonnegative space. For such bandit convex optimization with long term constraints, the goal of the decision maker is to minimize regret and constraint violation. It is worth mentioning that the authors of \cite{Mahdavi2012} establish an $\mathcal{O}( {\sqrt T } )$ regret bound and an ${\cal O}({T^{3/4}})$ constraint violation bound for general convex loss functions. In \cite{Cao2019}, this problem is further extended to the time-varying constraints setting where the inequality constraint function is time-varying, and its values at several random points are revealed to the decision maker along with the values of the loss function after choosing its decision at each iteration. Moreover, the same bounds as those in \cite{Mahdavi2012} are established for general convex loss functions. 

Distributed paradigm presents a promising framework for overcoming the limitations of centralized ones including single point of the failure, data privacy and heavy computation overhead \cite{Nedic2018, Yang2019}. Therefore, distributed bandit convex optimization is increasingly garnering significant attention, see \cite{Yuan2021, Wang2020, Cao2021, Li2021a, Tu2022, Patel2022, Xiong2023}. In this problem, the loss function ${f_t}( x )$ at each iteration $t$ is decomposed across a network of $n$ agents by ${f_t}( x ) = \frac{1}{n}\sum\nolimits_{i = 1}^n {{f_{i,t}}(x)} $, where ${{f_{i,t}}}$ is called the local loss function. Each agent chooses its own decision ${x_{i,t}}$ from the set $\mathbb{X}$, and then the values of its local loss function ${{f_{i,t}}}$ at some points are revealed to itself only by the adversary. The goal of the agents is to minimize the network-wide accumulated loss, and the corresponding performance metric can be the network regret
\begin{flalign}
\nonumber
\frac{1}{n}\sum\limits_{i = 1}^n {\Big( {\sum\limits_{t = 1}^T {{f_t}( {{x_{i,t}}} )}  - \mathop {\min }\limits_{x \in X} \sum\limits_{t = 1}^T {{f_t}( {{x}} )} } \Big)}.
\end{flalign}
Recently, the authors of \cite{Yuan2021b} extend this problem by using the idea of long term constraints, and use a new form of constraint violation metric proposed in \cite{Yuan2018}. This metric is called cumulative constraint violation, and is given by
\begin{flalign}
\Big\| {\sum\limits_{t = 1}^T {{{[ {g( {{x_t}} )} ]}_ + }} } \Big\|. \label{intro-eq2}
\end{flalign}
The constraint violation defined in \eqref{intro-eq1} takes the summation across iterations before the projection operation ${[  \cdot  ]_ + }$ such that it allows strict feasible decisions having large margins compensate constraint violations at many iterations, cumulative constraint violation defined in \eqref{intro-eq2} considers all constraints that are not satisfied, and thus cumulative constraint violation is stricter than constraint violation.
To calculate the network-wide accumulated cumulative constraint violation, the corresponding performance metric can be the network cumulative constraint violation
\begin{flalign}
\nonumber
\frac{1}{n}\sum\limits_{i = 1}^n {\sum\limits_{t = 1}^T {\| {{{[ {{g}( {{x_{i,t}}} )} ]}_ + }} \|} }. 
\end{flalign}
When local loss functions are quadratic and constraint functions are linear, the authors of \cite{Yuan2021b} establish an $\mathcal{O}\big( {{T^{\max \{ {c, 1 - c} \}}}} \big)$ network regret bound and an $\mathcal{O}( {{T^{1 - c/2}}} )$ network cumulative constraint violation bound with $c \in ( {0,1} )$.
For more general convex local loss and constraint functions, the authors of \cite{Yuan2022} propose a distributed online primal--dual algorithm with one-point bandit feedback, and establish an $\mathcal{O}\big( {{T^{\max \{ {c, 1 - c/3} \}}}} \big)$ network regret bound and an $\mathcal{O}( {{T^{1 - c/2}}} )$ network cumulative constraint violation bound with $c \in ( {0,1} )$. When local loss functions are strongly convex, an $\mathcal{O}\big( {{T^{2/3}}\log ( T )} \big)$ network regret bound and an $\mathcal{O}\big( {\sqrt {T\log ( T )} } \big)$ network cumulative constraint violation bound are established.
The authors of \cite{Yi2023} further extend this problem to the time-varying constraints setting, where the time-varying convex constraint function is denoted by $g_t( x):{\mathbb{R}^p} \to {\mathbb{R}^m}$. Moreover, they propose a distributed online primal--dual algorithm with two-point bandit feedback, and establish a reduced $\mathcal{O}\big( {{T^{\max \{ {c, 1 - c} \}}}} \big)$ network regret bound and an $\mathcal{O}( {{T^{1 - c/2}}} )$ network cumulative constraint violation bound for general convex loss functions, and a reduced $\mathcal{O}( {{T^c}} )$ network regret bound and an $\mathcal{O}( {{T^{1 - c/2}}} )$ for strong convex loss functions, where $c \in ( {0,1} )$.
For calculating (network) cumulative constraint violation, a key idea in \cite{Yuan2021b, Yuan2018, Yuan2022, Yi2023} is to use the clipped constraint function ${[ g ]_ + }$ or ${[ {{g_t}} ]_ + }$ to replace the original constraint function $g$ or ${{g_t}}$, respectively.
However, in this way, reduced network cumulative constraint violation bounds cannot be established under Slater’s condition \cite{Yi2024}, i.e., the use of the clipping operation renders Slater’s condition ineffective. 
Slater’s condition is a sufficient condition for strong duality to hold in convex optimization problems \cite{Boyd2004}, which is used to establish reduced constraint violation for the full-information feedback setting in \cite{Yu2017, Neely2017}.
Moreover, the authors of \cite{Yi2024} propose a distributed online primal--dual algorithm with full-information feedback where the clipped constraint function ${[ {{g_t}} ]_ + }$ is not directly used to replace the original constraint function ${{g_t}}$. Instead, the algorithm updates the dual variables by directly maximizing the regularized Lagrangian function. In particular, with this idea, Slater’s condition is still effective.

Note that such a distributed bandit online algorithm that can achieve reduced network cumulative constraint violation under Slater’s condition is still missing, which motivates our study. In particular, in this paper, we consider the distributed bandit convex optimization problem with time-varying constraints where the decision makers receive bandit feedback for both loss and constraint functions at each iteration, and use network regret and cumulative constraint violation as performance metrics.
The contributions are summarized as follows.\vspace{-1pt}
\begin{itemize}
\item[$\bullet$]
This paper proposes a new distributed bandit online primal--dual algorithm. Different from the distributed bandit online algorithm in \cite{Yi2023} where the clipped constraint function ${[ {{g_t}} ]_ + }$ is directly used to replace the original constraint function ${{g_t}}$, the proposed algorithm updates the dual variables by directly maximizing the regularized Lagrangian function, which can be explicitly calculated using the clipped constraint function. Different from the distributed online algorithm in \cite{Yi2024} where the dual variables are updated by using the composite objective mirror descent and the subgradients of local loss and constraint functions are directly used, the proposed algorithm updates the dual variables by using the projected gradient descent instead of the composite objective mirror descent used in \cite{Yi2024}. Moreover, the proposed algorithm uses two-point stochastic estimators to approximate these subgradients as they are unavailable in the bandit feedback setting.
Note that the gaps between the estimators and the subgradients cause nontrivial challenges for performance analysis, which will be explained in detail in Remark~2. More importantly, the proposed algorithm enables the use of the stricter cumulative constraint violation metric while preserving the effectiveness of Slater’s condition.

\item[$\bullet$]
For convex local loss functions, we show in Theorem~1 that the proposed algorithm establishes an $\mathcal{O}\big( {{T^{\max \{ {c,1-c} \}}}} \big)$ network regret bound and an $\mathcal{O}( {{T^{1 - c/2}}} )$ network cumulative constraint violation bound with $c \in ( {0,1} )$, which are the same as the results established in \cite{Yi2023, Yi2024}, generalize the results established in \cite{Mahdavi2012, Cao2019, Yuan2021b}, and also improve the results established in \cite{Yuan2022}. When Slater’s condition holds, we show in Theorem~2 that the network cumulative constraint violation bound is reduced to $\mathcal{O}( {{T^{1 - c}}} )$. To the best of our knowledge, this paper is among the first to establish a reduced cumulative constraint violation bound for bandit convex optimization with long term constraints under Slater’s condition.
\item[$\bullet$]
For strongly convex local loss functions, under unknown strongly convex parameters, we show in Theorem~3 that the proposed algorithm establishes an $\mathcal{O}( {{T^{1 - c}}} )$ network regret bound and an $\mathcal{O}( {{T^{1 - c/2}}} )$ network cumulative constraint violation bound with $c \in ( {0,1} )$, moreover, the network cumulative constraint violation bound is reduced to $\mathcal{O}( {{T^{1 - c}}} )$ when Slater’s condition holds. Under known strongly convex parameters, we show in Theorem~4 that the proposed algorithm establishes an $\mathcal{O}\big( {\log ( T )} \big)$ network regret bound and an $\mathcal{O}\big( {\sqrt {\log ( T )T} } \big)$ network cumulative constraint violation bound, which are the same as the results established in \cite{Yi2024}, and improve the results established in \cite{Yuan2022, Yi2023}. Moreover, the network cumulative constraint violation bound is reduced to $\mathcal{O}\big( {\log ( T )} \big)$ when Slater’s condition holds. Note that this paper is among the first to establish such a result for (distributed) bandit convex optimization with long term constraints.
\end{itemize}

\begin{table*}
\centering
\caption{Comparison of this paper to related works on bandit convex optimization \\with long term constraints.}
\resizebox{\linewidth}{!}{
\begin{tabular}{c|c|c|c|c|c|c|c|c}
\Xcline{1-9}{1pt}
\multicolumn{2}{c|}{\multirow{3}{*}{Reference}} & {\multirow{3}{*}{\makecell{Problem \\type}}} & \multirow{3}{*}{\makecell{Loss and \\constraint \\functions}} & \multirow{3}{*}{\makecell{Slater's \\condition}} & \multirow{3}{*}{\makecell{Information \\feedback}} & \multirow{3}{*}{Regret} & \multirow{3}{*}{\makecell{Constraint \\violation}} & \multirow{3}{*}{\makecell{Cumulative \\constraint \\violation}}\\
\multicolumn{2}{c|}{} & & & & & & &\\
\multicolumn{2}{c|}{} & & & & & & &\\
\cline{1-9}
\multicolumn{2}{c|}{\multirow{3}{*}{\cite{Mahdavi2012}}} & {\multirow{3}{*}{Centralized}} & \multirow{3}{*}{\makecell{Convex, \\convex and \\static}} & \multirow{3}{*}{\makecell{$\times$}} & \multirow{3}{*}{\makecell{$\nabla {f_t}$, and two-point \\bandit feedback \\for  ${g}$}} & \multirow{3}{*}{$\mathcal{O}( {\sqrt T } )$} & \multirow{3}{*}{$\mathcal{O}( {{T^{3/4}}} )$} & \multirow{3}{*}{Not given}\\
\multicolumn{2}{c|}{} & & & &  & &\\
\multicolumn{2}{c|}{} & & & & & &\\
\cline{1-9}
\multicolumn{2}{c|}{\multirow{3}{*}{\cite{Cao2019}}} & {\multirow{3}{*}{Centralized}} & \multirow{3}{*}{\makecell{Convex, \\convex and \\time-varying}} & \multirow{3}{*}{\makecell{$\times$}} & \multirow{3}{*}{\makecell{Two-point \\bandit feedback \\for ${f_t}$ and ${g_t}$}} & \multirow{3}{*}{$\mathcal{O}( {\sqrt T } )$} & \multirow{3}{*}{$\mathcal{O}( {{T^{3/4}}} )$} & \multirow{3}{*}{Not given}\\
\multicolumn{2}{c|}{} & & & & & &\\
\multicolumn{2}{c|}{} & & & & & &\\
\cline{1-9}
\multicolumn{2}{c|}{\multirow{3}{*}{\cite{Yuan2021b}}} & {\multirow{3}{*}{Distributed}} & \multirow{3}{*}{\makecell{Quadratic, \\linear and \\static}} & \multirow{3}{*}{\makecell{$\times$}} & \multirow{3}{*}{\makecell{$\nabla {g}$, and two-point \\bandit feedback \\for ${f_t}$}} & \multirow{3}{*}{$\mathcal{O}\big( {{T^{\max \{ {c, 1 - c} \}}}} \big)$} & \multicolumn{2}{c}{\multirow{3}{*}{$\mathcal{O}( {{T^{1 - c/2}}} )$}} \\
\multicolumn{2}{c|}{} & & & & & \\
\multicolumn{2}{c|}{} & & & & & \\
\cline{1-9}
\multicolumn{2}{c|}{\multirow{6}{*}{\cite{Yuan2022}}} & {\multirow{6}{*}{Distributed}} & \multirow{3}{*}{\makecell{Convex, \\convex and \\static}} & \multirow{6}{*}{\makecell{$\times$}} & \multirow{6}{*}{\makecell{$\nabla {g}$, and one-point \\bandit feedback \\for ${f_t}$}} & \multirow{3}{*}{$\mathcal{O}\big( {{T^{\max \{ {c, 1 - c/3} \}}}} \big)$} & \multicolumn{2}{c}{\multirow{3}{*}{$\mathcal{O}( {{T^{1 - c/2}}} )$}}\\
\multicolumn{2}{c|}{} & & & & & \\
\multicolumn{2}{c|}{} & & & & & \\
\cline{4-4}
\cline{7-9}
\multicolumn{2}{c|}{} & & \multirow{3}{*}{\makecell{Strongly convex, \\convex and \\static}} & & & \multirow{3}{*}{$\mathcal{O}\big( {{T^{2/3}}\log ( T )} \big)$} & \multicolumn{2}{c}{\multirow{3}{*}{$\mathcal{O}\big( {\sqrt {T\log ( T )} } \big)$}} \\
\multicolumn{2}{c|}{} & & & & & \\
\multicolumn{2}{c|}{} & & & & & \\
\cline{1-9}
\multicolumn{2}{c|}{\multirow{6}{*}{\cite{Yi2023}}} & {\multirow{6}{*}{Distributed}} & \multirow{3}{*}{\makecell{Convex, \\convex and \\time-varying}} & \multirow{6}{*}{\makecell{$\times$}} & \multirow{6}{*}{\makecell{Two-point \\bandit feedback \\for ${f_t}$ and ${g_t}$}} & \multirow{3}{*}{$\mathcal{O}\big( {{T^{\max \{ {c, 1 - c} \}}}} \big)$} & \multicolumn{2}{c}{\multirow{3}{*}{$\mathcal{O}( {{T^{1 - c/2}}} )$}}\\
\multicolumn{2}{c|}{} & & & & & \\
\multicolumn{2}{c|}{} & & & & & \\
\cline{4-4}
\cline{7-9}
\multicolumn{2}{c|}{} & & \multirow{3}{*}{\makecell{Strongly convex, \\convex and \\time-varying}} & & & \multirow{3}{*}{$\mathcal{O}( {{T^c}} )$} & \multicolumn{2}{c}{\multirow{3}{*}{$\mathcal{O}( {{T^{1 - c/2}}} )$}}\\
\multicolumn{2}{c|}{} & & & & & \\
\multicolumn{2}{c|}{} & & & & & \\
\cline{1-9}
\multicolumn{2}{c|}{\multirow{8}{*}{\makecell{This \\paper}}} & {\multirow{8}{*}{Distributed}} & \multirow{4}{*}{\makecell{Convex, \\convex and \\time-varying}} & \multirow{2}{*}{\makecell{$\times$}} & \multirow{8}{*}{\makecell{Two-point \\bandit feedback \\for ${f_t}$ and ${g_t}$}} & \multirow{4}{*}{$\mathcal{O}\big( {{T^{\max \{ {c, 1 - c} \}}}} \big)$} & \multicolumn{2}{c}{\multirow{2}{*}{$\mathcal{O}( {{T^{1 - c/2}}} )$}}\\
\multicolumn{2}{c|}{} & & & & & \\
\cline{5-5}
\cline{8-9}
\multicolumn{2}{c|}{} & & & \multirow{2}{*}{\makecell{$\checkmark$}} & & & \multicolumn{2}{c}{\multirow{2}{*}{$\mathcal{O}( {{T^{1 - c}}} )$}} \\
\multicolumn{2}{c|}{} & & & & & \\
\cline{4-5}
\cline{7-9}
\multicolumn{2}{c|}{} & & \multirow{4}{*}{\makecell{Strongly convex, \\convex and \\time-varying}} & \multirow{2}{*}{\makecell{$\times$}} & & \multirow{4}{*}{$\mathcal{O}\big( {\log ( T )} \big)$} & \multicolumn{2}{c}{\multirow{2}{*}{$\mathcal{O}\big( {\sqrt {\log ( T )T} } \big)$}}\\
\multicolumn{2}{c|}{} & & & & & \\
\cline{5-5}
\cline{8-9}
\multicolumn{2}{c|}{} & & & \multirow{2}{*}{\makecell{$\checkmark$}} & & & \multicolumn{2}{c}{\multirow{2}{*}{$\mathcal{O}\big( {\log ( T )} \big)$}}\\
\multicolumn{2}{c|}{} & & & & & \\
\Xcline{1-9}{1pt}
\end{tabular}
}
\end{table*}

The detailed comparison of this paper to related studies is summarized in TABLE~I, where we only present the static part of regret for the sake of clarity.

The remainder of this paper is organised as follows.
Section~II presents the problem formulation.
Section~III proposes the distributed bandit online primal--dual algorithm, and analyzes its performance.
Section~IV demonstrates a numerical simulation to verify the theoretical results.
Finally, Section~V concludes this paper.
All proofs are given in Appendix.

\textbf{Notations:} ${\mathbb{N}_ + }$, $\mathbb{R}$, ${\mathbb{R}^p}$ and $\mathbb{R}_ + ^p$ denote the sets of all positive integers, real numbers, $p$-dimensional and nonnegative vectors, respectively. Given $m$ and $n \in {\mathbb{N}_ + }$, $[ m ]$ denotes the set $\{ {1, \cdot  \cdot  \cdot ,m} \}$, and $[m, n]$ denotes the set $\{ {m, \cdot  \cdot  \cdot ,n} \}$ for $m < n$. Given vectors $x$ and $y$, ${x^T}$ denotes the transpose of the vector $x$, and $\langle {x,y} \rangle $ and $x \otimes y$ denote the standard inner and Kronecker product of the vectors $x$ and $y$, respectively. ${\mathbf{0}_m}$ denotes the $m$-dimensional column vector whose components are all $0$. $\mathrm{col}( {q_1}, \cdot  \cdot  \cdot ,{q_n} )$ denotes the concatenated column vector of ${q_i} \in {\mathbb{R}^{{m_i}}}$ for $i \in [ n ]$. ${\mathbb{B}^p}$ and ${\mathbb{S}^p}$ denote the unit ball and sphere centered around the origin in ${\mathbb{R}^p}$ under Euclidean norm, respectively. $\mathbf{E}$ denotes the expectation. For a set $\mathbb{K} \in {\mathbb{R}^p}$ and a vector $ x \in {\mathbb{R}^p}$, ${\mathcal{P}_{\mathbb{K}}}(  x  )$ denotes the projection of the vector $x$ onto the set $\mathbb{K}$, i.e., ${\mathcal{P}_{\mathbb{K}}}( x ) = \arg {\min _{y \in {\mathbb{K}}}}{\| {x - y} \|^2}$, and $[  x  ]_+$ denotes ${\mathcal{P}_{\mathbb{R}_ + ^p}}( x )$. For a function $f$ and a vector $ x $, $\partial f( x )$ denotes the subgradient of $f$ at $x$.

\section{PROBLEM FORMULATION}
Consider the distributed bandit convex optimization problem with time-varying constraints.
At iteration $t$, a network of $n$ agents is modeled by a time-varying directed graph ${\mathcal{G}_t} = ( {\mathcal{V},{\mathcal{E}_t}} )$ with the agent set $\mathcal{V} = [ n ]$ and the edge set ${\mathcal{E}_t} \subseteq \mathcal{V} \times \mathcal{V}$. $( {j,i} ) \in {\mathcal{E}_t}$ indicates that agent $i$ can receive information from agent $j$.
The sets of in- and out-neighbors of agent $i$ are $\mathcal{N}_i^{\text{in}}( {{\mathcal{G}_t}} ) = \{ {j \in [ n ]|( {j,i} ) \in {\mathcal{E}_t}} \}$ and $\mathcal{N}_i^{\text{out}}( {{\mathcal{G}_t}} ) = \{ {j \in [ n ]|( {i,j} ) \in {\mathcal{E}_t}} \}$, respectively.
An adversary first erratically selects $n$ convex functions $\{ {{f_{i,t}}:\mathbb{X} \to \mathbb{R}} \}$ and $n$ convex constraint functions $\{ {{g_{i,t}}:\mathbb{X} \to {\mathbb{R}^{{m_i}}}} \}$ for $i \in [n]$, where $\mathbb{X} \subseteq {\mathbb{R}^p}$ is a known convex set, and both ${m_i}$ and $p$ are positive integers. Then, the agents collaborate to select their local decisions $\{ {{x_{i,t}} \in \mathbb{X}} \}$ without prior access to $\{ {{f_{i,t}}} \}$ and $\{ {{g_{i,t}}} \}$. At the same time, the values of ${f_{i,t}}$ and ${g_{i,t}}$ at the point ${x_{i,t}}$ as well as at other potential points are privately revealed to each agent $i$.
The goal of the agents is to choose the decision sequence $\{ {{x_{i,t}}} \}$ for $i \in [n]$ and $t \in [T]$ such that both network regret
\begin{flalign}
{\rm{Net}\mbox{-}\rm{Reg}}( T ) &:= \frac{1}{n}\sum\limits_{i = 1}^n {\Big( {\sum\limits_{t = 1}^T {{f_t}( {{x_{i,t}}} )}  - \mathop {\min }\limits_{x \in {\mathcal{X}_T}} \sum\limits_{t = 1}^T {{f_t}( x )} } \Big)}, \label{regret-eq1}
\end{flalign}
and network cumulative constraint violation
\begin{flalign}
{\rm{Net}\mbox{-}\rm{CCV}}( T ) &:= \frac{1}{n}\sum\limits_{i = 1}^n {\sum\limits_{t = 1}^T {\| {{{[ {{g_t}( {{x_{i,t}}} )} ]}_ + }} \|} }, \label{CCV-eq2}
\end{flalign}
increase sublinearly, where ${f_t}( x ) = \frac{1}{n}\sum\nolimits_{j = 1}^n {{f_{j,t}}( x )} $ is the global loss function of the network at iteration $t$, ${g_t}( x ) = {\rm{col}}\big( {{g_{1,t}}( x ), \cdot  \cdot  \cdot ,{g_{n,t}}( x )} \big) \in {\mathbb{R}^m}$ with $m = \sum\nolimits_{i = 1}^n {{m_i}} $ is the global constraint function of the network at iteration $t$, and
\begin{flalign}
{\mathcal{X}_T} = \{ { x :x \in \mathbb{X}, {g_t}( x ) \le {\mathbf{0}_m},\forall t \in [ T ]} \}, \label{pl-eq6}
\end{flalign}
is the feasible set. To guarantee that the offline optimal static decision always exists, we assume that for any $T \in {\mathbb{N}_ + }$, the feasible set ${\mathcal{X}_T}$ is nonempty.

Note that when ${g_{i,t}} \equiv {\mathbf{0}_{{m_i}}}$, $\forall i \in [ n ]$, $\forall t \in {\mathbb{N}_ + }$, the considered distributed problem becomes the problem studied in \cite{Wang2020, Cao2021, Li2021a, Tu2022, Patel2022, Xiong2023}; when ${g_{i,t}} \equiv g$, $\forall i \in [ n ]$, $\forall t \in {\mathbb{N}_ + }$ with $g$ being a known constraint function, the considered distributed problem becomes the problem studied in \cite{Yuan2021b, Yuan2022}.

The following commonly used assumption on the set $\mathbb{X}$ is made.
\begin{assumption}
The set $\mathbb{X}$ is closed.
Moreover, the convex set $\mathbb{X}$ contains the ball of radius $r( \mathbb{X} )$ and is contained in the ball of radius $R( \mathbb{X} )$, i.e.,
\begin{flalign}
r( \mathbb{X} ){\mathbb{B}^p} \subseteq \mathbb{X} \subseteq R( \mathbb{X} ){\mathbb{B}^p}. \label{ass1-eq1}
\end{flalign}
\end{assumption}

The following assumptions on the loss and constraint functions are made.
\begin{assumption}
For all $i \in [n]$, $t \in {\mathbb{N}_ + }$, there exists a constant ${F}$ such that
\begin{flalign}
| {{f_{i,t}}( x )} | &\le {F}. \label{ass2-eq1a}
\end{flalign}
\end{assumption}

\begin{assumption}
For all $i \in [n]$, $t \in {\mathbb{N}_ + }$, the functions ${f_{i,t}}$ and ${{g_{i,t}}}$ are convex, and the subgradients $\partial {f_{i,t}}( x )$ and $\partial {g_{i,t}}( x )$ exist. Moreover, there exist constants ${G_1}$ and ${G_2}$ such that
\begin{subequations}
\begin{flalign}
\| {\partial {f_{i,t}}( x )} \| &\le {G_1}, \label{ass4-eq1a}\\
\| {\partial {g_{i,t}}( x )} \| &\le {G_2}, x \in \mathbb{X}. \label{ass4-eq1c}
\end{flalign}
\end{subequations}
\end{assumption}

Note that we do not need the assumption that the local constraint functions $\{ {{g_{i,t}}} \}$ are uniformly bounded while \cite{Yi2023} needs it.
In addition, from Assumption~3, and Lemma 2.6 in \cite{ShalevShwartz2012}, for all $i \in [n]$, $t \in {\mathbb{N}_ + }$, we have
\begin{subequations}
\begin{flalign}
| {{f_{i,t}}( x ) - {f_{i,t}}( y )} | &\le {G_1}\| {x - y} \|, \label{ass4-eq2a}\\
\| {{g_{i,t}}( x ) - {g_{i,t}}( y )} \| &\le {G_2}\| {x - y} \|, x,y \in \mathbb{X}. \label{ass4-eq2c}
\end{flalign}
\end{subequations}

The following assumption on the graph is made, which is also used in \cite{Yi2020, Yi2023, Yi2024, Yi2021b}.
\begin{assumption}
For $t \in {\mathbb{N}_ + }$, the time-varying directed graph $\mathcal{G}_t$ satisfies that

\noindent (i) There exists a constant $w  \in ( {0,1} )$ such that ${[ {{W_t}} ]_{ij}} \ge w$ if $( {j,i} ) \in {\mathcal{E}_t}$ or $i = j$, and ${[ {{W_t}} ]_{ij}} = 0$ otherwise.

\noindent (ii) The mixing matrix ${W_t}$ is doubly stochastic, i.e., ${\sum\nolimits_{i = 1}^n {[ {{W_t}} ]} _{ij}} = {\sum\nolimits_{j = 1}^n {[ {{W_t}} ]} _{ij}} = 1$, $\forall i,j \in [ n ]$.

\noindent (iii) There exists an integer $B > 0$ such that the time-varying directed graph $( {\mathcal{V}, \cup _{l = 0}^{B - 1}{\mathcal{E} _{t + l}}} )$ is strongly connected.
\end{assumption}

As stated in the introduction, compared to \cite{Yi2023}, this paper establishes reduced network cumulative constraint violation bounds under Slater’s condition. In the following, we formally introduce Slater’s condition.
\begin{assumption}
(Slater’s condition)
There exists a point ${x_s} \in \mathbb{X}$ and a positive constant ${\varsigma _s}$ such that
\begin{flalign}
{g_t}( {{x_s}} ) \le  - {\varsigma _s}{\mathbf{1}_m},t \in {\mathbb{N}_ + }. \label{ass8-eq1}
\end{flalign}
\end{assumption}
Slater’s condition is a sufficient condition for strong duality to hold in convex optimization problems \cite{Boyd2004}. To the best of our knowledge, there are no studies to show that reduced cumulative constraint violation bounds can be established in bandit convex optimization. To calculate network cumulative constraint violation, \cite{Yi2023} directly replaces the original constraint functions with the corresponding clipped constraint functions, which makes Slater’s condition ineffective. In this paper, we propose a new distributed bandit online algorithm where Slater’s condition remains effective.

\section{DISTRIBUTED BANDIT ONLINE PRIMAL--DUAL ALGORITHM}

\subsection{Algorithm description}
For the global loss function ${f_t}$ and constraint function ${g_t}$, the associated regularized Lagrangian function is
\begin{flalign}
{\mathcal{L}_t}( {{x_t},{q_t}} ): = \frac{1}{n}\sum\limits_{i = 1}^n {{f_{i,t}}( {{x_t}} )} + q_t^T{g_t}( {{x_t}} ) - \frac{1}{{2{\gamma _t}}}{\| {{q_t}} \|^2}, \label{alg-eq1}
\end{flalign}
where ${x_t} \in {\mathbb{R}^p}$ and ${q_t} \in \mathbb{R}_ + ^m$ represent the primal and dual variables, respectively, and ${{\gamma _t}}$ is the regularization parameter.
The primal and dual variables can be updated by the standard projected primal--dual algorithm
\begin{subequations}
\begin{flalign}
{q_{t + 1}} &= {\Big[ {{q_t} + {\alpha _t}\frac{{\partial {\mathcal{L}_t}( {{x_t},q} )}}{{\partial q}}\Big| {_{q = {q_t}}} \Big.} \Big]_ + }, \label{alg-eq2} \\
{x_{t + 1}} &= {\mathcal{P}_\mathbb{X}}\Big( {{x_t} - {\alpha _t}\frac{{\partial {\mathcal{L}_t}( {x,{q_{t + 1}}} )}}{{\partial x}}\Big| {_{x = {x_t}}}} \Big), \label{alg-eq3}
\end{flalign}
\end{subequations}
where ${{\alpha _t}}$ is the stepsize. To calculate cumulative constraint violation, the clipped constraint function ${[ {{g_t}( {{x_t}} )} ]_ + }$ is directly used to replace the original constraint function ${{g_t}( {{x_t}} )}$ in \eqref{alg-eq1} in \cite{Yi2023}.
However, that makes Slater’s condition ineffective. To deal with this dilemma, we still use the original constraint function ${{g_t}( {{x_t}} )}$ in \eqref{alg-eq1} but maximize ${{\mathcal{L}_t}( {{x_t},q} )}$ over all $q \in \mathbb{R}_ + ^m$ to replace \eqref{alg-eq2}, i.e.,
\begin{flalign}
{q_{t + 1}} = \mathop {\arg \max }\limits_{q \in \mathbb{R}_ + ^m} {\mathcal{L}_t}( {{x_t},q} ) = {\gamma _t}{[ {{g_t}( {{x_t}} )} ]_ + }, \label{alg-eq4}
\end{flalign}
where the second equation holds due to $- 2{\gamma _t}{g_t}( {{x_t}} )/ - 2 = {\gamma _t}{g_t}( {{x_t}} )$. The updating rule \eqref{alg-eq4} is also adopted in \cite{Yuan2018, Yuan2022, Yi2024}.

To implement the updating rules in a distributed manner, we use ${x_{i,t}}$ to denote the local copy of the primal variable ${x_t}$, and rewrite the dual variable in an agent-wise manner, i.e., ${q_t} = {\rm{col}}( {{q_{1,t}}, \cdot  \cdot  \cdot ,{q_{n,t}}} )$ with each ${q_{1,t}} \in \mathbb{R}_ + ^{{m_i}}$.
Then, the updating rule \eqref{alg-eq4} can be executed in an agent-wise manner as \eqref{Algorithm1-eq2}. Note that in the bandit setting, the subgradients are unavailable and only the values of ${f_{i,t}}$ and ${g_{i,t}}$ at some potential points are privately revealed. Thus, we use the values of the local loss function ${f_{i,t}}$ at ${x_{i,t}}$ and ${x_{i,t}} + {\delta _t}{u_{i,t}}$ to estimate the subgradient $\partial {f_{i,t}}( {{x_{i,t}}} )$, and use the values of the local constraint function ${g_{i,t}}$ at ${x_{i,t}}$ and ${x_{i,t}} + {\delta _t}{u_{i,t}}$ to estimate the subgradient $\partial{{g_{i,t}}( {{x_{i,t}}} )}$, i.e.,
\begin{flalign}
\nonumber
\hat \partial {f_{i,t}}( {{x_{i,t}}} ) &= \frac{p}{{{\delta _t}}}\big( {{f_{i,t}}( {{x_{i,t}} + {\delta _t}{u_{i,t}}} ) - {f_{i,t}}( {{x_{i,t}}} )} \big){u_{i,t}} \in {\mathbb{R}^p}, \\
\nonumber
\hat \partial {{g_{i,t}}( {{x_{i,t}}} )} &= \frac{p}{{{\delta _t}}}{\big( {{{{g_{i,t}}( {{x_{i,t}} + {\delta _t}{u_{i,t}}} )} } - { {{g_{i,t}}( {{x_{i,t}}} )} }} \big)^T}
\otimes {u_{i,t}} \in {\mathbb{R}^{p \times {m_i}}},
\end{flalign}
where ${\delta _t} \in ( {0,r( \mathbb{X} ){\xi _t}} ]$ is an exploration parameter, ${r( \mathbb{X} )}$ is a positive constant, ${\xi _t} \in ( {0,1} )$ is a shrinkage coefficient, and ${u_{i,t}} \in {\mathbb{S}^p}$ is a uniformly distributed random vector. The idea follows the two-point stochastic subgradient estimator proposed in \cite{Agarwal2010, Shamir2017}, and is also adopted in \cite{Yi2023, Yi2021b}. ${{\hat \omega }_{i, t + 1}}$ defined in \eqref{Algorithm1-eq3} can be understood as an estimator for a portion of $\frac{{\partial {\mathcal{L}_t}( {x,{q_{t + 1}}} )}}{{\partial x}}\Big| {_{x = {x_t}}}$ that is available to agent $i$. Then, each ${z_{i,t + 1}}$ updated by \eqref{Algorithm1-eq4} can be understood as a local estimate of ${x_{t + 1}}$ updated by \eqref{alg-eq3}. To estimate ${x_{t + 1}}$ more accurately, each agent $i$ computes ${x_{i,t + 1}}$ by the consensus protocol \eqref{Algorithm1-eq1}, which tracks the average $\frac{1}{n}\sum\nolimits_{i = 1}^n {{z_{i,t + 1}}}$. As a result, the distributed bandit online primal--dual algorithm is proposed, which is presented in pseudo-code as Algorithm~1.
\begin{algorithm}[!t]
  \caption{Distributed Bandit Online Primal--Dual Algorithm} 
  \begin{algorithmic}
  \renewcommand{\algorithmicrequire}{\textbf{Input:}}
  \REQUIRE
    constant ${r( \mathbb{X} )}$, non-increasing sequences $\{ {\alpha _t}\} \subseteq ( {0, + \infty })$, $\{ {{\xi _t}} \} \subseteq ( {0,1} )$, $\{ {{\delta _t}} \} \subseteq ( {0,r( \mathbb{X} ){\xi _t}} ]$, and non-decreasing sequence $\{ {\gamma _t}\} \subseteq ( {0, + \infty })$.
  \renewcommand{\algorithmicrequire}{\textbf{Initialize:}}
  \REQUIRE
       ${z_{i,1}} \in {(1-\xi_{1})\mathbb{X}}$.
    \FOR {$t = 1, \cdot  \cdot  \cdot, T-1 $}
    \FOR {$i = 1,\cdot  \cdot  \cdot,n$ in parallel}
    \STATE Broadcast $z_{i,t}$ to $\mathcal{N}_i^{\text{out}}( {{\mathcal{G}_t}} )$ and receive $z_{j,t}$ from $j \in \mathcal{N}_i^{\text{in}}( {{\mathcal{G}_t}} )$.
    \STATE Select\par\nobreak\vspace{-10pt}
    \begin{small}
     \begin{flalign}
       {x_{i,t}} &= \sum\limits_{j = 1}^n {{{[{W_t}]}_{ij}}{z_{j,t}}}. \label{Algorithm1-eq1}
    \end{flalign}
      \end{small}%
    \STATE Select vector ${u_{i,t}} \in {\mathbb{S}^p}$ independently and uniformly at random.
    \STATE Observe ${{f_{i,t}}( {{x_{i,t}}} )}$, ${{f_{i,t}}( {{x_{i,t}} + {\delta _t}{u_{i,t}}} )}$, ${{ {{g_{i,t}}( {{x_{i,t}}} )} }}$, ${{ {{g_{i,t}}( {{x_{i,t}} + {\delta _t}{u_{i,t}}} )}  }}$.
    \STATE Update\par\nobreak\vspace{-10pt}
    \begin{subequations}
     \begin{flalign}
       {q_{i,t + 1}} &= {\gamma _t}{[ {{g_{i,t}}( {{x_{i,t}}} )} ]_ + }, \label{Algorithm1-eq2}\\
       {\hat{\omega} _{i,t + 1}} &= \hat{\partial} {f_{i,t}}({x_{i,t}}) + \hat{\partial} {{g_{i,t}}({x_{i,t}}) }{q_{i,t+1}}, \label{Algorithm1-eq3}\\
       {z_{i,t + 1}} &= {\mathcal{P}_{( {1 - {\xi _{t + 1}}} )\mathbb{X}}}( {{x_{i,t}} - {\alpha _t}{{\hat \omega }_{i,t + 1}}} ). \label{Algorithm1-eq4}
      \end{flalign}
      \end{subequations}
    \ENDFOR
    \ENDFOR
  \renewcommand{\algorithmicensure}{\textbf{Output:}}
  \ENSURE
      $\{ x_{i,t} \}$.
  \end{algorithmic}
\end{algorithm}

\subsection{Performance Analysis for Convex Case}
In this subsection, we establish network regret and cumulative constraint violation bounds for Algorithm~1 in the following theorems when local loss functions are convex. We first establish these bounds without Slater’s condition.

\begin{theorem}\label{thm1}
Suppose Assumptions 1--4 hold. For all $i \in [ n ]$, let $\{ {{x_{i,t}}} \}$ be the sequences generated by Algorithm~1 with
\begin{flalign}
{\alpha _t} = \frac{1}{{{t^c}}}, {\gamma _t} = \frac{{{\gamma _0}}}{{{\alpha _t}}}, {\xi _t} = {\alpha _t}, {\delta _t} = r( \mathbb{X} ){\alpha _t}, \label{theorem1-eq1}
\end{flalign}
where $c \in ( {0,1} )$, ${\gamma _0} \in \big( {0, 1 /( {4(p^2+1)G_2^2} )} \big]$ are constants. Then, for any $T \in {\mathbb{N}_ + }$,
\begin{flalign}
\mathbf{E}[{{\rm{Net}\mbox{-}\rm{Reg}}( T )}] &= \mathcal{O}( {T^{\max \{ {c,1 - c} \}}} ), \label{theorem1-eq2}\\
\mathbf{E}[{{\rm{Net} \mbox{-} \rm{CCV}}( T )}] &= \mathcal{O}( {T^{1 - c/2}} ). \label{theorem1-eq3}
\end{flalign}
\end{theorem}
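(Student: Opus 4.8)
The plan is to analyze Algorithm~1 through the averaged iterate $\bar z_t = \frac{1}{n}\sum_{i=1}^n z_{i,t}$ and a drift argument on the potential $\frac{1}{n}\sum_i\|z_{i,t}-\tilde x_t\|^2$, where $\tilde x_t=(1-\xi_{t+1})x^\ast$ is the shrinkage of a comparator $x^\ast\in\mathcal{X}_T$ into the feasible projection set $(1-\xi_{t+1})\mathbb{X}$ (legitimate since $r(\mathbb{X})\mathbb{B}^p\subseteq\mathbb{X}$ puts the origin inside $\mathbb{X}$). First I would record the standard properties of the two-point estimators: writing $\hat f_{i,t}$ and $\hat g_{i,t}$ for the ball-smoothed functions, the estimators are conditionally unbiased, $\mathbf{E}[\hat\partial f_{i,t}(x_{i,t})\mid\mathcal{F}_t]=\nabla\hat f_{i,t}(x_{i,t})$ and likewise for $g$ (using that $q_{i,t+1}$ in \eqref{Algorithm1-eq2} is $\mathcal{F}_t$-measurable, i.e.\ independent of the current random direction $u_{i,t}$); the smoothing error is $\mathcal{O}(\delta_t)$ by \eqref{ass4-eq2a}--\eqref{ass4-eq2c}; and the estimators are bounded, $\|\hat\partial f_{i,t}\|\le pG_1$, $\|\hat\partial g_{i,t}\|\le pG_2$. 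I would also invoke a consensus estimate following from Assumption~4: since $W_t$ is doubly stochastic and the union graph is $B$-connected, $\frac{1}{n}\sum_i\mathbf{E}\|x_{i,t}-\bar z_t\|$ is bounded by a geometrically weighted sum of past step sizes times estimator magnitudes.

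Next I would derive the master per-step inequality. Expanding the nonexpansive projection \eqref{Algorithm1-eq4}, summing over $i$, and taking conditional expectation replaces $\hat\omega_{i,t+1}$ by $\nabla\hat f_{i,t}(x_{i,t})+\nabla\hat g_{i,t}(x_{i,t})q_{i,t+1}$; convexity of $\hat f_{i,t}$ and $\hat g_{i,t}$ together with $q_{i,t+1}\ge\mathbf{0}_{m_i}$ then lower-bounds the inner-product term by the smoothed loss gap plus $q_{i,t+1}^\top\hat g_{i,t}(x_{i,t})$. The doubly-stochastic average gives $\frac{1}{n}\sum_i\|x_{i,t}-\tilde x_t\|^2\le\frac{1}{n}\sum_i\|z_{i,t}-\tilde x_t\|^2$, so the potential telescopes. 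The decisive algebraic facts are $[g]_+^\top g=\|[g]_+\|^2$, which turns the dual term into $\alpha_t\gamma_t\|[g_{i,t}(x_{i,t})]_+\|^2$, and $\|\hat\omega_{i,t+1}\|^2\le 2p^2G_1^2+2p^2G_2^2\gamma_t^2\|[g_{i,t}(x_{i,t})]_+\|^2$. Since the parameter choice \eqref{theorem1-eq1} makes $\alpha_t\gamma_t=\gamma_0$ \emph{constant}, both the penalty ($-2\gamma_0\|[g_{i,t}(x_{i,t})]_+\|^2$) and the dominant part of the variance ($+2p^2G_2^2\gamma_0^2\|[g_{i,t}(x_{i,t})]_+\|^2$) carry a constant coefficient. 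The condition $\gamma_0\le 1/(4(p^2+1)G_2^2)$ makes the net coefficient of $\|[g_{i,t}(x_{i,t})]_+\|^2$ negative, leaving a usable penalty of order $\gamma_0\|[g_{i,t}(x_{i,t})]_+\|^2$ on the correct side.

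For the constraint bound I would sum the master inequality over $t$ \emph{without} dividing by $\alpha_t$: the potential telescopes to $\mathcal{O}(1)$ (comparator drift $\tilde x_t\to\tilde x_{t+1}$ adds a summable $\mathcal{O}(\sum_t|\xi_t-\xi_{t+1}|)$ term), the bounded loss gaps contribute $\mathcal{O}(\sum_t\alpha_t)=\mathcal{O}(T^{1-c})$, the stray variance contributes $\mathcal{O}(\sum_t\alpha_t^2)$, and smoothing/shrinkage/consensus errors are $\mathcal{O}(T^{1-c})$; hence $\frac{1}{n}\sum_{i,t}\mathbf{E}\|[g_{i,t}(x_{i,t})]_+\|^2=\mathcal{O}(T^{1-c})$. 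I then pass from local violations to the network metric \eqref{CCV-eq2} via $\|[g_{j,t}(x_{i,t})]_+-[g_{j,t}(x_{j,t})]_+\|\le G_2\|x_{i,t}-x_{j,t}\|$ (nonexpansiveness of $[\cdot]_+$ with \eqref{ass4-eq2c}) and the consensus bound, and a final Cauchy--Schwarz over $t$ supplies the $\sqrt T$ factor, giving $\mathbf{E}[{\rm{Net}\mbox{-}\rm{CCV}}(T)]=\sqrt{T\cdot\mathcal{O}(T^{1-c})}=\mathcal{O}(T^{1-c/2})$, which is \eqref{theorem1-eq3}. For the regret I would instead divide by $2\alpha_t$ and drop the nonpositive penalty; telescoping $\tfrac{1}{2\alpha_t}(\|x_{i,t}-\tilde x_t\|^2-\|z_{i,t+1}-\tilde x_t\|^2)$ with nonincreasing $\alpha_t$ yields the $\mathcal{O}(1/\alpha_T)=\mathcal{O}(T^c)$ term, while $\sum_t p^2G_1^2\alpha_t=\mathcal{O}(T^{1-c})$, and after converting the per-agent smoothed losses back to $f_t$ through the smoothing and consensus errors I obtain $\mathbf{E}[{\rm{Net}\mbox{-}\rm{Reg}}(T)]=\mathcal{O}(T^{\max\{c,1-c\}})$, which is \eqref{theorem1-eq2}.

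I expect the main obstacle to be exactly the coupling flagged in Remark~2: the variance $\|\hat\omega_{i,t+1}\|^2$ contains $\|q_{i,t+1}\|^2=\gamma_t^2\|[g_{i,t}(x_{i,t})]_+\|^2$, which grows like $\gamma_t^2$ and would, under a naive bound, swamp the penalty. The resolution — and the delicate part — is that $\alpha_t^2\gamma_t^2=\gamma_0^2$ is constant, so the variance is absorbable by the penalty precisely under $\gamma_0\le 1/(4(p^2+1)G_2^2)$; keeping track of \emph{all} constant-coefficient $\|[g_{i,t}(x_{i,t})]_+\|^2$ contributions, including those hidden in the estimator cross terms and in the consensus corrections, so that the surviving net coefficient stays negative, is where the argument must be handled with care.
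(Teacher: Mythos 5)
Your proposal is correct and follows essentially the same route as the paper's own proof: the same two-point-estimator properties (unbiasedness for the smoothed functions, $pG_1$/$pG_2$ bounds), the same shrunk comparator and projection/telescoping argument, the same key identity $[g]_+^\top g=\|[g]_+\|^2$ with the absorption of the $\gamma_t^2\|[g_{i,t}(x_{i,t})]_+\|^2$ variance into the dual penalty via $\alpha_t\gamma_t=\gamma_0$ and $\gamma_0\le 1/(4(p^2+1)G_2^2)$, and the same two normalizations (weight $1/\alpha_t$ for regret, constant weight — equivalently the paper's division by $\gamma_t$ — for the squared-violation sum) followed by the consensus-based local-to-network conversion and a Cauchy--Schwarz over the $nT$ terms. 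The only cosmetic differences are your use of nonexpansiveness of the projection in place of the paper's three-point projection lemma, and your explicit treatment of the comparator drift, which the paper glosses over.
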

The proof and the explicit expressions of the right-hand sides of \eqref{theorem1-eq2}--\eqref{theorem1-eq3} are given in Appendix B.
\begin{remark}\label{rem1}
In Theorem~1, we show that Algorithm~1 establishes sublinear network regret and cumulative constraint violation bounds.
The bounds \eqref{theorem1-eq2} and \eqref{theorem1-eq3} are the same as the state-of-the-art bounds established by the distributed bandit online algorithm in \cite{Yi2023}. Note that the potential drawback of Algorithm~1 is that it uses ${G_2}$ to design the algorithm parameter ${\gamma _0}$. However, we do not use the assumption that local constraint functions are uniformly bounded while \cite{Yi2023} uses it. The bounds \eqref{theorem1-eq2} and \eqref{theorem1-eq3} are also the same as the bounds established by the distributed online algorithm with full-information feedback in \cite{Yi2024}. If setting $c = 1/2$, they then generalize the results established in \cite{Mahdavi2012, Cao2019, Yuan2021b}, even though the bandit online algorithms in \cite{Mahdavi2012, Cao2019} are centralized and the more tolerant constraint violation metric is used, and in \cite{Yuan2021b} local loss functions are quadratic and local constraint functions are linear, static and known in advance. Moreover, the bounds \eqref{theorem1-eq2} and \eqref{theorem1-eq3} improve the results established by the distributed online algorithm with one-point bandit feedback in \cite{Yuan2022}, even though in \cite{Yuan2022} local constraint functions are static and known in advance.
\end{remark}

\begin{remark}\label{rem2}
Different from the algorithm in \cite{Yi2024} that directly uses the subgradients of local loss and constraint functions, Algorithm~1 is based on the two-point stochastic gradient estimators, which are unbiased subgradients of the uniformly smoothed versions of local loss and constraint functions. In the one-dimensional case ($p=1$), the intuition is readily seen that the expectations of the estimators $\hat \partial {f_{i,t}}( {{x_{i,t}}} )$ and $\hat \partial {{g_{i,t}}( {{x_{i,t}}} )}$ equal $\frac{1}{{2{\delta _t}}}({f_{i,t}}({x_{i,t}} + {\delta _t}) - {f_{i,t}}({x_{i,t}} - {\delta _t}))$ and $\frac{1}{{2{\delta _t}}}({g_{i,t}}({x_{i,t}} + {\delta _t}) - {g_{i,t}}({x_{i,t}} - {\delta _t}))$. We know that they indeed approximate the derivatives of ${f_{i,t}}$ and ${g_{i,t}}$ at ${x_{i,t}}$ if ${\delta _t}$ is infinitesimal.
However, ${\delta _t}$ cannot be small enough, and thus there exist gaps between $\hat \partial {f_{i,t}}( {{x_{i,t}}} )$, $\hat \partial {{g_{i,t}}( {{x_{i,t}}} )}$ and their true subgradients. These gaps prevent us from simply using the uniformly bounds of the subgradients in Assumption~3 as done in \cite{Yi2024} to bound the estimators, and thus cause nontrivial challenges for performance analysis. To cope with this challenge, we need to analyze some properties of the estimators such as the uniformly bounds of the estimators and the gaps between the local loss and constraint functions and the correspondingly uniformly smoothed functions. Moreover, Algorithm~1 updates the dual variables by using the projected gradient descent instead of the composite objective mirror descent used in \cite{Yi2024}. Therefore, the proof of our Theorem~1 has significant differences compared to that of Theorem~1 in \cite{Yi2024}.
\end{remark}

With Slater’s condition, we show that Algorithm~1 establishes a reduced network cumulative constraint violation bound than the bound established in \eqref{theorem1-eq3} in the following theorem.
\begin{theorem}\label{thm2}
Suppose Assumptions 1--5 hold. For all $i \in [ n ]$, let $\{ {{x_{i,t}}} \}$ be the sequences generated by Algorithm~1 with \eqref{theorem1-eq1}. Then, for any $T \in {\mathbb{N}_ + }$,
\begin{flalign}
\mathbf{E}[{{\rm{Net}\mbox{-}\rm{Reg}}( T )}] &= \mathcal{O}( {T^{\max \{ {c,1 - c} \}}} ), \label{theorem2-eq1} \\
\mathbf{E}[{{\rm{Net} \mbox{-} \rm{CCV}}( T )}] &= \mathcal{O}( {T^{1 - c}} ). \label{theorem2-eq2}
\end{flalign}
\end{theorem}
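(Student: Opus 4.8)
The plan is to leave the network regret bound untouched: Assumptions~1--4 already deliver \eqref{theorem1-eq2}, and since adding Slater's condition (Assumption~5) alters neither Algorithm~1 nor the regret argument, \eqref{theorem2-eq1} is immediate from Theorem~1. All the new work goes into sharpening the violation bound from $\mathcal{O}(T^{1-c/2})$ to $\mathcal{O}(T^{1-c})$, and the single new ingredient is the strict margin $-\varsigma_s$ from \eqref{ass8-eq1}. The idea is to replace the Cauchy--Schwarz step used for Theorem~1 by a per-step drift analysis on the distance to the Slater point, in which Slater's condition produces a \emph{linear} negative term in the dual that survives on its own.

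First I would start from the nonexpansiveness of the projection in \eqref{Algorithm1-eq4}, written for each agent $i$ against the shrunk Slater point $\tilde x_s=(1-\xi_{t+1})x_s\in(1-\xi_{t+1})\mathbb{X}$,
\begin{flalign}
\nonumber
\|z_{i,t+1} - \tilde x_s\|^2 \le \|x_{i,t} - \tilde x_s\|^2 - 2\alpha_t\langle \hat\omega_{i,t+1}, x_{i,t} - \tilde x_s\rangle + \alpha_t^2\|\hat\omega_{i,t+1}\|^2 .
\end{flalign}
Taking the conditional expectation over $u_{i,t}$ (which is independent of $x_{i,t}$, hence of $q_{i,t+1}=\gamma_t[g_{i,t}(x_{i,t})]_+$), the estimator $\hat\omega_{i,t+1}$ acts as an unbiased subgradient of the uniformly smoothed Lagrangian, so convexity converts the inner product into a Lagrangian gap. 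Using $q_{i,t+1}=\gamma_t[g_{i,t}(x_{i,t})]_+$ together with $[g]_+^T g=\|[g]_+\|^2$, the loss part contributes $-2\alpha_t(f_{i,t}(x_{i,t})-f_{i,t}(x_s))\le 4F\alpha_t$ by Assumption~2, a negative quadratic term $-2\alpha_t\gamma_t\|[g_{i,t}(x_{i,t})]_+\|^2=-2\gamma_0\|[g_{i,t}(x_{i,t})]_+\|^2$, and, crucially, the Slater term
\begin{flalign}
\nonumber
-2\alpha_t q_{i,t+1}^T g_{i,t}(x_s) \ge 2\alpha_t\varsigma_s\, q_{i,t+1}^T\mathbf{1}_m \ge 2\varsigma_s\gamma_0\|[g_{i,t}(x_{i,t})]_+\| ,
\end{flalign}
where I used $g_{i,t}(x_s)\le-\varsigma_s\mathbf{1}_m$, $q_{i,t+1}\ge\mathbf{0}_m$, $\|\cdot\|_1\ge\|\cdot\|_2$, and $\alpha_t\gamma_t=\gamma_0$.

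The second step is to absorb the quadratic terms. Since $\|\hat\omega_{i,t+1}\|^2\le 2p^2G_1^2+2p^2G_2^2\|q_{i,t+1}\|^2$ and $\|q_{i,t+1}\|^2=\gamma_t^2\|[g_{i,t}(x_{i,t})]_+\|^2$, the step-size piece $\alpha_t^2\|\hat\omega_{i,t+1}\|^2$ produces $2p^2G_2^2\gamma_0^2\|[g_{i,t}(x_{i,t})]_+\|^2$ (again $\alpha_t\gamma_t=\gamma_0$); the choice $\gamma_0\le 1/(4(p^2+1)G_2^2)$ makes the net quadratic coefficient nonpositive so those terms are dropped. The smoothing errors from replacing $g_{i,t},f_{i,t}$ by their smoothed versions are of order $\delta_t\|q_{i,t+1}\|=\mathcal{O}(\gamma_0\alpha_t\|[g_{i,t}(x_{i,t})]_+\|)$, i.e.\ one factor $\alpha_t$ smaller than the Slater term, precisely because $\delta_t=r(\mathbb{X})\alpha_t$ and $\xi_t=\alpha_t$ shrink; for $t$ large enough they are dominated by half the Slater margin, leaving a clean recursion of the form
\begin{flalign}
\nonumber
\varsigma_s\gamma_0\,\mathbf{E}\|[g_{i,t}(x_{i,t})]_+\| \le \mathbf{E}\|x_{i,t}-\tilde x_s\|^2 - \mathbf{E}\|x_{i,t+1}-\tilde x_s\|^2 + \mathcal{O}(\alpha_t) + \mathcal{O}(\alpha_t^2) .
\end{flalign}
Telescoping over $t$ (the distance terms collapse to a constant bounded via $r(\mathbb{X})\mathbb{B}^p\subseteq\mathbb{X}\subseteq R(\mathbb{X})\mathbb{B}^p$, with the drift of the moving comparator $\tilde x_s$ controlled since $\xi_t\to0$), the dominant term $\sum_{t=1}^T\alpha_t=\mathcal{O}(T^{1-c})$ gives $\sum_{t=1}^T\mathbf{E}\|[g_{i,t}(x_{i,t})]_+\|=\mathcal{O}(T^{1-c})$ for each $i$. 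I would then pass to ${\rm{Net}\mbox{-}\rm{CCV}}(T)$ through the Lipschitz estimate $\|[g_{j,t}(x_{i,t})]_+\|\le\|[g_{j,t}(x_{j,t})]_+\|+G_2\|x_{i,t}-x_{j,t}\|$ and the consensus bounds already established for Theorem~1, whose contribution is of lower order.

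The main obstacle is the interaction between the bandit estimator and the dual variable in the quadratic term: the two-point estimator inflates the gradient norm by $p/\delta_t$, so the dual enters $\|\hat\omega_{i,t+1}\|^2$ as $p^2\gamma_t^2\|[g_{i,t}(x_{i,t})]_+\|^2$, and the entire argument hinges on this being swallowed by the single negative quadratic term of the regularized Lagrangian. Making the constants line up is exactly what forces $\gamma_0\le 1/(4(p^2+1)G_2^2)$, and it is what permits the Slater linear term to survive alone; without this cancellation one is pushed back to the Cauchy--Schwarz route of Theorem~1 yielding only $\mathcal{O}(T^{1-c/2})$. A secondary nuisance is bookkeeping: the comparator $\tilde x_s$ lives in the shrinking set $(1-\xi_{t+1})\mathbb{X}$, so the telescoping runs over a moving target, and the dual-weighted smoothing errors must be verified to remain a factor $\alpha_t$ below the Slater margin rather than competing with it.
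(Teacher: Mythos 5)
Your proposal is correct and is essentially the paper's own argument, merely inlined: your per-step drift inequality against the shrunk Slater point is exactly the paper's Lemma~4 combined with Lemma~5's bound \eqref{lemma5-eqb} specialized to $y=x_s$ (multiplying by $2\alpha_t$ instead of dividing by $\gamma_t$ is the same normalization since $\alpha_t\gamma_t=\gamma_0$), your Slater step $-q_{i,t+1}^Tg_{i,t}(x_s)\ge\varsigma_s\|q_{i,t+1}\|_1\ge\varsigma_s\gamma_t\|[g_{i,t}(x_{i,t})]_+\|$ is precisely \eqref{theorem2-2-eq1}--\eqref{theorem2-2-eq2}, and your closing Lipschitz-plus-consensus passage to ${\rm{Net}\mbox{-}\rm{CCV}}(T)$ is Lemma~6's \eqref{lemma6-eqc}. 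The only deviations are cosmetic (dominating the dual-weighted smoothing errors by half the Slater margin for large $t$ rather than absorbing them via Young's inequality as in \eqref{lemma5-2-eq4}, and telescoping the squared distances directly, which—as you implicitly need—requires summing over agents and double stochasticity of $W_{t+1}$ as in \eqref{lemma4-eq12}), so both the mechanism and the resulting $\mathcal{O}(\sum_t\alpha_t)=\mathcal{O}(T^{1-c})$ rate coincide with the paper's proof.
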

The proof and the explicit expressions of the right-hand sides of \eqref{theorem2-eq1}--\eqref{theorem2-eq2} are given in Appendix C.

\begin{remark}\label{rem3}
As pointed out in \cite{Yuan2018}, it is an open problem how to establish a reduced cumulative constraint violation bound for online convex optimization. Such a bound is first established by the distributed online algorithm with full-information feedback in \cite{Yi2024}. In bandit convex optimization, such a bound is still missing as the method that directly using the clipped constraint functions to establish network cumulative constraint violation bound as used in \cite{Yi2023} makes Slater’s condition ineffective. Theorem~2 establishes the reduced cumulative constraint violation bound \eqref{theorem2-eq2}, which is the same as the result established in \cite{Yi2024}, and thus fills the gap.
\end{remark}
\begin{remark}\label{rem4}
Slater’s condition plays an important role for establishing the reduced network cumulative constraint violation bound \eqref{theorem2-eq2}. We will provide an elucidation of why network cumulative constraint violation bounds can be reduced under Slater's condition and why directly using clipped constraint functions as done in \cite{Yi2023} makes Slater's condition ineffective. For the first question, it is worth noting that \eqref{lemma5-eqe} in Lemma~5 is very important to establish network cumulative constraint violation bounds. Without Slater's condition, from \eqref{lemma5-2-eq6}, we have $\frac{{q_{i,t + 1}^T{g_{i,t}}( {{x_{i,t}}} )}}{{{\gamma _t}}} \ge 0$, and then we have $\sum\nolimits_{i = 1}^n {\sum\nolimits_{t = 1}^T {\frac{{ {\mathbf{E}_{{\mathfrak{U}_t}}}[{{\| {\varepsilon _{i,t}^z} \|}^2}]}}{{4{\gamma _0}}}} }  \le {{\tilde h}_T}( {\hat y} )$ since \eqref{lemma5-eqb} and \eqref{lemma6-2-eq4} (i.e., $\sum\nolimits_{i = 1}^n {\sum\nolimits_{t = 1}^T {\frac{{q_{i,t + 1}^T{g_{i,t}}( y )}}{{{\gamma _t}}}} }  \le 0$) hold. Therefore, we have the result in \eqref{lemma6-2-eq5}. Based on the result, we can get \eqref{lemma6-eqb} in Lemma~6, and thus establish the network cumulative constraint violation bound $\mathcal{O}\big( {{{( {T\mathbf{E}[{{\tilde h}_T}( \hat{y} )} )]}^{1/2}}} \big)$ for the general cases. With Slater's condition, we have \eqref{theorem2-2-eq1} (i.e., $\sum\nolimits_{i = 1}^n {\sum\nolimits_{t = 1}^T {\frac{{q_{i,t + 1}^T{g_{i,t}}( x_s )}}{{{\gamma _t}}}} }  \le  - {\varsigma _s}\sum\nolimits_{i = 1}^n {\sum\nolimits_{t = 1}^T {\| {{{[ {{g_{i,t}}({x_{i,t}})} ]}_ + }} \|} }$). Note that the result in \eqref{theorem2-2-eq1} is tighter than that in \eqref{lemma6-2-eq4}. By using ${\frac{{ {\mathbf{E}_{{\mathfrak{U}_t}}}[{{\| {\varepsilon _{i,t}^z} \|}^2}]}}{{4{\gamma _0}}}} \ge 0$, we have a new lower bound ${\varsigma _s}\sum\nolimits_{i = 1}^n {\sum\nolimits_{t = 1}^T {\| {{{[ {{g_{i,t}}({x_{i,t}})} ]}_ + }} \|} }$ for ${{\tilde h}_T}( {\hat y} )$ to replace the lower bound $\sum\nolimits_{i = 1}^n {\sum\nolimits_{t = 1}^T {\frac{{ {\mathbf{E}_{{\mathfrak{U}_t}}}[{{\| {\varepsilon _{i,t}^z} \|}^2}]}}{{4{\gamma _0}}}} }$. As a result, by using \eqref{lemma6-eqc} in Lemma~6, we establish the network cumulative constraint violation bound $\mathcal{O}\big( \mathbf{E}[{{{\tilde h}_T}( {\hat y} )}] \big)$. By appropriately designing the stepsize sequence $\{ {{\alpha _t}} \}$, we can guarantee $\mathcal{O}\big( \mathbf{E}[{{{\tilde h}_T}( {\hat y} )}] \big) = \mathbf{o}( T )$. Therefore, the intuition is readily seen that $\mathcal{O}\big( \mathbf{E}[{{{\tilde h}_T}( {\hat y} )}] \big)$ is in general smaller with respect to $T$ than $\mathcal{O}\big( {{{( {T\mathbf{E}[{{\tilde h}_T}( \hat{y} )} )]}^{1/2}}} \big)$. Based on the above elucidation, we know that network cumulative constraint violation bounds can be reduced under Slater's condition. For the second question, it should be pointed out that \eqref{theorem2-2-eq1} is a key result to reduce network cumulative constraint violation bounds as discussed in the elucidation of the first question. However, if we directly use clipped constraint functions, the term $\frac{1}{n}\sum\nolimits_{i = 1}^n {q_{i,t + 1}^T{g_{i,t}}(y)}$ in \eqref{lemma4-eq1} would be replaced by $\frac{1}{n}\sum\nolimits_{i = 1}^n {q_{i,t + 1}^T{{[ {{g_{i,t}}(y)} ]}_ + }}$, and then the result in \eqref{theorem2-2-eq1} would become $\sum\nolimits_{i = 1}^n {\sum\nolimits_{t = 1}^T {\frac{{q_{i,t + 1}^T{{[ {{g_{i,t}}({x_s})} ]}_ + }}}{{{\gamma _t}}}} }  \ge 0$. Note that the property of Slater's condition in \eqref{ass8-eq1} does not work to establish the lower bound for ${{\tilde h}_T}( {\hat y} )$, i.e., Slater’s condition is ineffective. Based on the above elucidation, we know that directly using clipped constraint functions makes Slater’s condition ineffective.
\end{remark}
\subsection{Performance Analysis for Strongly Convex Case}
In the subsection, we establish network regret and cumulative constraint violation bounds for Algorithm~1 in the following theorems when local loss functions are strongly convex.

\begin{assumption}
For any $i \in [ n ]$ and $t \in {\mathbb{N}_ + }$, $\{ {{f_{i,t}}( x )} \}$ are strongly convex with convexity parameter $\mu  > 0$ over $\mathbb{X}$, i.e., for all $x,y \in \mathbb{X}$,
\begin{flalign}
{f_{i,t}}( x ) \ge {f_{i,t}}( y ) + \langle {x - y,\partial {f_{i,t}}( y )} \rangle  + \frac{\mu }{2}\| {x - y} \|^2. \label{ass9-eq1}
\end{flalign}
\end{assumption}

When the convex parameter $\mu$ is unknown, we use the natural vanishing stepsize sequence as in \eqref{theorem1-eq1} of Theorem~1.
\begin{theorem}\label{thm3}
Suppose Assumptions 1--4 and 6 hold. For all $i \in [ n ]$, let $\{ {{x_{i,t}}} \}$ be the sequences generated by Algorithm~1 with \eqref{theorem1-eq1}. Then, for any $T \in {\mathbb{N}_ + }$,
\begin{flalign}
\mathbf{E}[{{\rm{Net}\mbox{-}\rm{Reg}}( T )}] &= \mathcal{O}( {T^{1 - c} } ), \label{Theorem3-eq1} \\
\mathbf{E}[{{\rm{Net} \mbox{-} \rm{CCV}}( T )}] &= \mathcal{O}( {T^{1 - c/2}} ). \label{Theorem3-eq2}
\end{flalign}
Moreover, if Assumptions~5 also holds, then
\begin{flalign}
\mathbf{E}[{{\rm{Net} \mbox{-} \rm{CCV}}( T )}] &= \mathcal{O}( {T^{1 - c}} ). \label{Theorem3-eq3}
\end{flalign}
\end{theorem}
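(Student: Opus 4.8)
The proof splits into the constraint-violation bounds and the regret bound, and only the latter genuinely uses strong convexity. Since Algorithm~1 is run with exactly the parameter choice \eqref{theorem1-eq1} of Theorem~1, the primal--dual trajectory $\{(x_{i,t},q_{i,t})\}$ coincides with the one analyzed in Appendices~B and~C. The cumulative constraint violation bounds rest on the dual dynamics \eqref{Algorithm1-eq2} together with Lemmas~5--6, whose role (as recalled in Remark~4) is to control $\mathrm{Net}\text{-}\mathrm{CCV}(T)$ through the drift quantity $\tilde{h}_T(\hat{y})$, for which one shows $\mathbf{E}[\tilde{h}_T(\hat{y})]=\mathcal{O}(T^{1-c})$ \emph{without} invoking convexity of the $f_{i,t}$. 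I would therefore first observe that this estimate, and hence $\mathrm{Net}\text{-}\mathrm{CCV}(T)=\mathcal{O}((T\,\mathbf{E}[\tilde{h}_T(\hat{y})])^{1/2})=\mathcal{O}(T^{1-c/2})$ in general and $\mathcal{O}(\mathbf{E}[\tilde{h}_T(\hat{y})])=\mathcal{O}(T^{1-c})$ under Slater's condition, carry over verbatim from \eqref{theorem1-eq3} and \eqref{theorem2-eq2}. This yields \eqref{Theorem3-eq2} and \eqref{Theorem3-eq3} with no change to the constraint-violation argument.

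The substance is the sharpened regret bound \eqref{Theorem3-eq1}. I would start from the master regret inequality of Appendix~B, which, for the optimal comparator $x\in\mathcal{X}_T$, bounds $\mathbf{E}[\mathrm{Net}\text{-}\mathrm{Reg}(T)]$ by the sum of: (i) a telescoping term $\frac{1}{n}\sum_{i,t}\frac{1}{2\alpha_t}(\|x_{i,t}-x\|^2-\|z_{i,t+1}-x\|^2)$; (ii) a gradient-accumulation term $\frac{1}{n}\sum_{i,t}\frac{\alpha_t}{2}\mathbf{E}[\|\hat{\omega}_{i,t+1}\|^2]$; (iii) consensus errors between $x_{i,t}$ and the network average; and (iv) the bandit exploration/smoothing gaps between $f_{i,t}$ and its uniformly smoothed surrogate. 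The single new ingredient is to replace the plain convexity estimate of $f_{j,t}(x_{i,t})-f_{j,t}(x)$ by the strong-convexity inequality \eqref{ass9-eq1}, which contributes an extra negative term $-\frac{\mu}{2}\cdot\frac{1}{n}\sum_{i,t}\|x_{i,t}-x\|^2$.

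The crux is the telescoping analysis. Writing (i) by Abel summation as $\frac{1}{n}\sum_{i,t}\big(\frac{1}{2\alpha_t}-\frac{1}{2\alpha_{t-1}}\big)\|x_{i,t}-x\|^2$ up to a harmless boundary term, I would merge it with the strong-convexity term to obtain the coefficient $\frac{1}{2\alpha_t}-\frac{1}{2\alpha_{t-1}}-\frac{\mu}{2}$. Since $\alpha_t=t^{-c}$ gives $\frac{1}{\alpha_t}-\frac{1}{\alpha_{t-1}}=t^c-(t-1)^c\le c(t-1)^{c-1}\to 0$, there is a finite threshold $T_0=T_0(\mu,c)$ beyond which this coefficient is nonpositive; those terms are discarded, and the finitely many terms with $t<T_0$ are bounded by $\mathcal{O}(R(\mathbb{X})^2/\alpha_{T_0})=\mathcal{O}(1)$ via Assumption~1. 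This is precisely where strong convexity annihilates the $\mathcal{O}(T^c)$ contribution, namely the telescoped coefficient sum which otherwise obeys only $\frac{1}{n}\sum_{i,t}(\frac{1}{2\alpha_t}-\frac{1}{2\alpha_{t-1}})\|x_{i,t}-x\|^2\le \mathcal{O}(R(\mathbb{X})^2/\alpha_T)=\mathcal{O}(T^c)$ and forces the $T^{\max\{c,1-c\}}$ rate in the merely convex case. What survives is the gradient-accumulation term together with the consensus and smoothing errors, each of which is $\mathcal{O}(\sum_{t=1}^T\alpha_t)=\mathcal{O}(T^{1-c})$ under \eqref{theorem1-eq1}, giving \eqref{Theorem3-eq1}.

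The main obstacle I anticipate is that $\mu$ is unknown and hence absent from the stepsize, so the classical tuning $\alpha_t\sim 1/(\mu t)$ that would yield logarithmic regret is unavailable; the cancellation is only asymptotic, and one must verify that the unabsorbed initial segment and the Abel boundary term remain $\mathcal{O}(1)$ uniformly in $T$. A second, bandit-specific difficulty is that strong convexity is a property of $f_{i,t}$ rather than of the smoothed surrogate underlying the estimator $\hat{\partial}f_{i,t}$; I would resolve this by noting that uniform smoothing preserves $\mu$-strong convexity and that the smoothing gaps are $\mathcal{O}(\delta_t)=\mathcal{O}(\alpha_t)$ per step, so they are absorbed into the $\mathcal{O}(T^{1-c})$ budget without disturbing the telescoping argument.
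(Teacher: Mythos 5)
Your proposal is correct and follows essentially the same route as the paper's Appendix D: the constraint-violation bounds are inherited from the analysis of Theorems 1--2 (the paper re-derives them only to note that the extra strong-convexity term $\hat h_T(\hat y)$ is nonpositive, so the bounds are unchanged), and the regret bound is obtained exactly as you describe. In particular, the paper also applies strong convexity to the smoothed surrogates (via Lemma 2), merges the resulting $-\mu/2$ term with the telescoped coefficients $\tfrac{1}{2\alpha_t}-\tfrac{1}{2\alpha_{t-1}}$, discards all terms past the finite threshold $\varepsilon_5=\big\lceil (1/\mu)^{1/(1-c)}\big\rceil$ where the coefficient becomes nonpositive, and bounds the initial segment by a $T$-independent constant, leaving the $\mathcal{O}(\sum_{t=1}^T\alpha_t)=\mathcal{O}(T^{1-c})$ terms.
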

The proof and the explicit expressions of the right-hand sides of \eqref{Theorem3-eq1}--\eqref{Theorem3-eq3} are given in Appendix D.

\begin{remark}\label{rem5}
Theorem~3 shows that a reduced network regret bound \eqref{Theorem3-eq1} is established compare to the bounds \eqref{theorem1-eq2} and \eqref{theorem2-eq1} established in Theorems~1 and~2, respectively. The bounds \eqref{Theorem3-eq1} and \eqref{Theorem3-eq2} are the same as those established by the distributed bandit online algorithm in \cite{Yi2023}. It is worth noting that Algorithm~1 establishes a reduced network cumulative constraint violation bound \eqref{Theorem3-eq3} under Slater’s condition than the bound \eqref{Theorem3-eq2} without Slater's condition, while the algorithm in \cite{Yi2023} does not achieve such a result.
\end{remark}

When the convex parameter $\mu$ is known, we appropriately design the stepsize sequence in the following theorem.
\begin{theorem}\label{thm4}
Suppose Assumptions 1--4 and 6 hold. For all $i \in [ n ]$, let $\{ {{x_{i,t}}} \}$ be the sequences generated by Algorithm~1 with
\begin{flalign}
{\alpha _t} = \frac{1}{{\mu t}}, {\gamma _t} = \frac{{{\gamma _0}}}{{{\alpha _t}}}, {\xi _t} = {\alpha _t}, {\delta _t} = r( \mathbb{X} ){\alpha _t}, \label{theorem4-eq1}
\end{flalign}
where ${\gamma _0} \in \big( {0,1 /( {4(p^2+1)G_2^2} )} \big]$ is a constant. Then, for any $T \in {\mathbb{N}_ + }$,
\begin{flalign}
\mathbf{E}[{{\rm{Net}\mbox{-}\rm{Reg}}( T )}] &= \mathcal{O}( \log ( T ) ), \label{theorem4-eq2} \\
\mathbf{E}[{{\rm{Net} \mbox{-} \rm{CCV}}( T )}] &= \mathcal{O}( \sqrt {\log ( T )T} ). \label{theorem4-eq3}
\end{flalign}
Moreover, if Assumptions~5 also holds, then
\begin{flalign}
\mathbf{E}[{{\rm{Net} \mbox{-} \rm{CCV}}( T )}] &= \mathcal{O}( \log ( T ) ). \label{theorem4-eq4}
\end{flalign}
\end{theorem}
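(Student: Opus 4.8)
The plan is to reuse the primal--dual machinery already built for Theorems~1--3, namely the per-iteration drift inequality behind Lemma~5 and the two constraint-violation estimates \eqref{lemma6-eqb}--\eqref{lemma6-eqc} of Lemma~6, and to show that the strong convexity of Assumption~6 together with the harmonic stepsize $\alpha_t = 1/(\mu t)$ upgrades every quantity that was $\mathcal{O}(T^{1-c})$ in Theorem~3 to $\mathcal{O}(\log T)$. The crux is to bound the expected auxiliary quantity $\mathbf{E}[\tilde h_T(\hat y)]$, since the regret bound \eqref{theorem4-eq2} follows from the same drift estimate and the violation bounds \eqref{theorem4-eq3}--\eqref{theorem4-eq4} follow by feeding $\mathbf{E}[\tilde h_T(\hat y)]$ into Lemma~6. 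First I would start from the nonexpansiveness of $\mathcal{P}_{(1-\xi_{t+1})\mathbb{X}}$ applied to \eqref{Algorithm1-eq4} against the shrunk comparator $\hat y$, producing on each agent the usual three pieces: $\frac{1}{2\alpha_t}(\|x_{i,t}-\hat y\|^2 - \|z_{i,t+1}-\hat y\|^2)$, a linear term in $\langle \hat\omega_{i,t+1}, x_{i,t}-\hat y\rangle$, and a quadratic term $\frac{\alpha_t}{2}\|\hat\omega_{i,t+1}\|^2$.

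The decisive step is the telescoping. Since ball smoothing preserves the modulus $\mu$, Assumption~6 applies to the smoothed surrogate for which $\hat\partial f_{i,t}$ is an unbiased gradient, so routing the linear term through the smoothed convexity inequality injects a correction $-\frac{\mu}{2}\|x_{i,t}-\hat y\|^2$ on the loss part. Because $\alpha_t = 1/(\mu t)$ gives exactly $\frac{1}{2\alpha_t} - \frac{1}{2\alpha_{t-1}} = \frac{\mu}{2}$ (and $\frac{1}{2\alpha_1} = \frac{\mu}{2}$), the coefficient multiplying each $\|x_{i,t}-\hat y\|^2$ vanishes when summed over $t$, so the distance terms collapse and leave only a nonpositive boundary term $-\frac{1}{2\alpha_T}\|z_{i,T+1}-\hat y\|^2 \le 0$, rather than the $\mathcal{O}(T^{1-c})$ residual that the vanishing stepsize of Theorem~3 produced. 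What remains to control is $\sum_t \frac{\alpha_t}{2}\mathbf{E}\|\hat\omega_{i,t+1}\|^2$ plus the smoothing-bias and consensus contributions.

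Next I would bound those remaining terms. Using $\|\hat\partial f_{i,t}\| \le pG_1$, $\|\hat\partial g_{i,t}\| \le pG_2$, and the dual identity $q_{i,t+1} = \gamma_t[g_{i,t}(x_{i,t})]_+$ from \eqref{Algorithm1-eq2}, the quadratic term splits into a summable part $\sum_t \alpha_t p^2 G_1^2 = \mathcal{O}(\sum_t 1/t) = \mathcal{O}(\log T)$ and a dangerous part controlled by $\alpha_t\gamma_t^2\|[g_{i,t}(x_{i,t})]_+\|^2$; substituting $\gamma_t = \gamma_0/\alpha_t$ rewrites the latter as $\gamma_0^2 p^2 G_2^2\,\gamma_t\|[g_{i,t}(x_{i,t})]_+\|^2$, which I would absorb into the negative regularizer $-\frac{1}{2\gamma_t}\|q_{i,t+1}\|^2 = -\frac{\gamma_t}{2}\|[g_{i,t}(x_{i,t})]_+\|^2$ inherited from the Lagrangian \eqref{alg-eq1}; the choice $\gamma_0 \le 1/(4(p^2+1)G_2^2)$ is precisely what keeps the net coefficient nonpositive. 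The smoothing bias contributes $\sum_t \delta_t = r(\mathbb{X})\sum_t 1/(\mu t) = \mathcal{O}(\log T)$, and the consensus error $\|x_{i,t}-\bar x_t\|$ is controlled by the geometric mixing of Assumption~4 against the summable series $\sum_t \alpha_t\|\hat\omega_{i,t+1}\|$, again of order $\log T$. Summing the drift inequality over $i$ and $t$ and taking expectations then yields $\mathbf{E}[\tilde h_T(\hat y)] = \mathcal{O}(\log T)$ and \eqref{theorem4-eq2}.

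Finally I would substitute $\mathbf{E}[\tilde h_T(\hat y)] = \mathcal{O}(\log T)$ into Lemma~6: the general estimate \eqref{lemma6-eqb} gives $\mathbf{E}[\text{Net-CCV}(T)] = \mathcal{O}((T\,\mathbf{E}[\tilde h_T(\hat y)])^{1/2}) = \mathcal{O}(\sqrt{\log(T)\,T})$, proving \eqref{theorem4-eq3}, while under Slater's condition the sharper estimate \eqref{lemma6-eqc}, powered by the inequality \eqref{theorem2-2-eq1}, gives $\mathbf{E}[\text{Net-CCV}(T)] = \mathcal{O}(\mathbf{E}[\tilde h_T(\hat y)]) = \mathcal{O}(\log T)$, proving \eqref{theorem4-eq4}. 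I expect the main obstacle to be the absorption step: unlike in Theorem~3, the harmonic stepsize makes $\gamma_t$ grow linearly in $t$, so the destabilizing contribution $\alpha_t\gamma_t^2\|[g_{i,t}(x_{i,t})]_+\|^2$ is of exactly the same order as the stabilizing regularizer, and the whole argument hinges on tracking constants carefully enough that $\gamma_0 \le 1/(4(p^2+1)G_2^2)$ suffices, while simultaneously ensuring that the exact telescoping identity is not spoiled by the time-varying shrinkage $(1-\xi_{t+1})\mathbb{X}$ in the projection.
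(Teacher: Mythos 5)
Your proposal is correct and follows essentially the same route as the paper's own proof: the exact telescoping $\tfrac{1}{\alpha_t}-\tfrac{1}{\alpha_{t-1}}-\mu=0$ under $\alpha_t = 1/(\mu t)$ for the $\mathcal{O}(\log T)$ regret, harmonic-sum bounds showing $\mathbf{E}[\tilde h_T(\hat y)] = \mathcal{O}(\log T)$ fed into \eqref{lemma6-eqb} for the $\mathcal{O}(\sqrt{\log(T)T})$ violation bound, and the Slater-powered inequality \eqref{theorem2-2-eq1} combined with \eqref{lemma6-eqc} for the $\mathcal{O}(\log T)$ bound. The only deviations are cosmetic — you work with $\tilde h_T$ alone where the paper carries the additional nonpositive term $\hat h_T(\hat y)$ from strong convexity (equations \eqref{theorem3-eq3}--\eqref{theorem3-eq4}), which only sharpens constants, and your rewriting of $\alpha_t\gamma_t^2$ should give $\gamma_0\gamma_t$ rather than $\gamma_0^2\gamma_t$, a harmless slip since the absorption condition on $\gamma_0$ is unchanged.
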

The proof and the explicit expressions of the right-hand sides of \eqref{theorem4-eq2}--\eqref{theorem4-eq4} are given in Appendix E.
\begin{remark}\label{rem6}
Theorem~4 shows that the reduced network regret bound \eqref{theorem4-eq2} and network cumulative constraint violation bounds \eqref{theorem4-eq3} and \eqref{theorem4-eq4} compare to the bounds \eqref{Theorem3-eq1}--\eqref{Theorem3-eq3}, respectively. It should be pointed out that the bound \eqref{theorem4-eq4} is achieved for the first time in the literature. In addition, the bounds \eqref{theorem4-eq2} and \eqref{theorem4-eq3} improve the results established by the distributed online algorithm with one-point bandit feedback in \cite{Yuan2022}, even though in \cite{Yuan2022} local constraint functions are static and known in advance.
\end{remark}

\section{NUMERICAL EXAMPLE}
To evaluate the performance of Algorithm~1, we consider a distributed online linear regression problem with time-varying linear inequality constraints over a network of $n$ agents.
At iteration~$t$, the local loss and constraint functions are  ${f_{i,t}}( x ) = \frac{1}{2}{( {{A_{i,t}}x - {\vartheta _{i,t}}} )^2}$ and ${g_{i,t}}( x ) = {B_{i,t}}x - {b_{i,t}}$, respectively, where each component of ${A_{i,t}} \in {\mathbb{R}^{{q_i} \times p}}$ is randomly generated from the uniform distribution in the interval $[ { - 1,1} ]$, ${\vartheta _{i,t}} = {A_{i,t}}{\mathbf{1}_p} + {\zeta _{i,t}}$ with ${\vartheta _{i,t}} \in {\mathbb{R}^{{q_i}}}$ and ${\zeta _{i,t}}$ being a standard normal random vector, each component of ${B_{i,t}} \in {\mathbb{R}^{{m_i} \times p}}$ is randomly generated from the uniform distribution in the interval $[ {0,2} ]$, each component of ${b_{i,t}} \in {\mathbb{R}^{{m_i}}}$ is randomly generated from the uniform distribution in the interval $[ {b,b+1} ]$ with $b > 0$. Note that $b > 0$ guarantees Slater’s condition holds.
We use an time-varying undirected graph to model the communication topology.
Specifically, at each iteration~$t$, the graph is first randomly generated where the probability of any two agents being connected is $\rho $. Then, to make sure that Assumption~4 is satisfied, we add edges~$( {i,i + 1} )$ for $i \in [ 24 ]$ when $t \in \{ {4c + 1} \}$, edges~$( {i,i + 1} )$ for $i \in [ 25, 49 ]$ when $t \in \{ {4c + 2} \}$, edges~$( {i,i + 1} )$ for $i \in [ 50, 74 ]$ when $t \in \{ {4c + 3} \}$, edges~$( {i,i + 1} )$ for $i \in [ 75, 99 ]$ when $t \in \{ {4c + 4} \}$ for $c = \{0, 1, \cdot  \cdot  \cdot \}$. Moreover, let
${[ {{W_t}} ]_{ij}} = \frac{1}{n}$ if $( {j,i} ) \in {\mathcal{E}_t}$ and ${[ {{W_t}} ]_{ii}} = 1 - \sum\nolimits_{j = 1}^n {{{[ {{W_t}} ]}_{ij}}} $.

In this paper, without Slater’s condition, we show that Algorithm~1 establishes the same network regret and cumulative constraint violation bounds as those in \cite{Yi2023}. More importantly, with Slater’s condition, we show that Algorithm~1 establishes the reduced network cumulative constraint violation bounds, which is significant results not found in existing literature. To verify our theoretical results, we compare Algorithm~1 with the distributed online algorithm with two-point bandit feedback in \cite{Yi2023} that uses the cumulative constraint violation metric but does not consider Slater’s condition, and the distributed online algorithm with full-information feedback in \cite{Yi2024} that uses the cumulative constraint violation metric and consider Slater’s condition.
We set $n = 100$, ${q_i} = 4$, $p = 10$, ${m_i} = 2$, $\mathbb{X} = {[ { - 5,5} ]^p}$, $b = 0.01$, and $\rho  = 0.1$. The inputs
of the algorithms are listed in TABLE~II.
\begin{table*}
\centering
\caption{Input of algorithms.}
\begin{tabular}{c|c|c}
\Xcline{1-3}{1pt}
\multicolumn{2}{c|}{\multirow{2}{*}{Algorithms}} & {\multirow{2}{*}{Inputs}}\\
\multicolumn{2}{c|}{} & \\
\cline{1-3}
\multicolumn{2}{c|}{\multirow{2}{*}{Algorithm~1 in this paper}} & {\multirow{2}{*}{${\alpha _t} = 1/t,{\gamma _t} = 0.15/{\alpha _t},{\xi _t} = 1/t,{\delta _t} = 0.01/t$}}\\
\multicolumn{2}{c|}{} & \\
\cline{1-3}
\multicolumn{2}{c|}{\multirow{2}{*}{Algorithm~2 in \cite{Yi2023}}} & {\multirow{2}{*}{${\alpha _t} = 1/{t},{\beta _t} = 1/{t^{0.5}},{\gamma _t} = 1/{t^{0.5}},{\xi _t} = 1/( {t + 1} ),{\delta _t} = 0.01/( {t + 1} )$}}\\
\multicolumn{2}{c|}{} & \\
\cline{1-3}
\multicolumn{2}{c|}{\multirow{2}{*}{Algorithm~1 in \cite{Yi2024}}} & {\multirow{2}{*}{${\alpha _t} = 2/t,{\gamma _t} = 0.15/{\alpha _t},\psi ( x ) = {\| x \|^2}$}}\\
\multicolumn{2}{c|}{} & \\
\Xcline{1-3}{1pt}
\end{tabular}
\end{table*}
\begin{figure}[!ht]
 \centering
  \includegraphics[width=10cm]{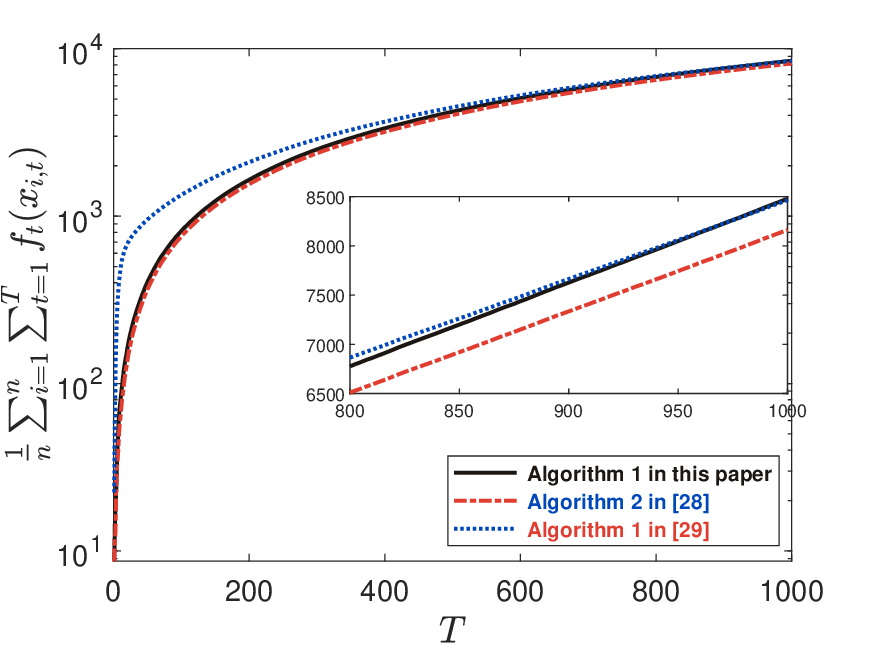}
  \caption{Evolutions of $\frac{1}{n}\sum\nolimits_{i = 1}^n {\sum\nolimits_{t = 1}^T {{f_t}( {{x_{i,t}}} )} }$.}
\end{figure}

\begin{figure}[!ht]
 \centering
  \includegraphics[width=10cm]{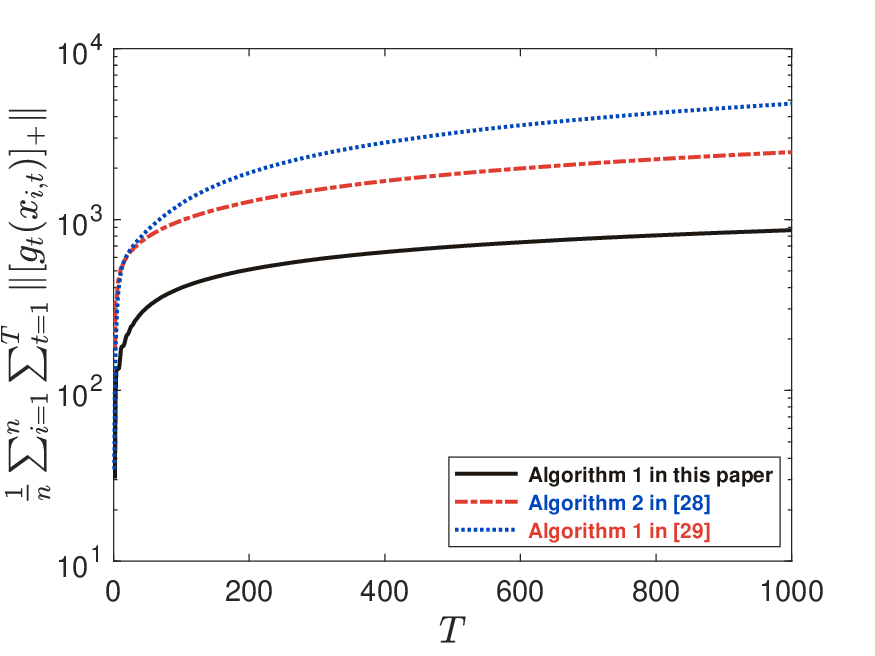}
  \caption{Evolutions of $\frac{1}{n}\sum\nolimits_{i = 1}^n {\sum\nolimits_{t = 1}^T {\| {{{[ {{g_t}( {{x_{i,t}}} )} ]}_ + }} \|} } $.}
\end{figure}

Figs.~1 and 2 illustrate the evolutions of the cumulative loss $\frac{1}{n}\sum\nolimits_{i = 1}^n {\sum\nolimits_{t = 1}^T {{f_t}( {{x_{i,t}}} )} }$ and the cumulative constraint violation $\frac{1}{n}\sum\nolimits_{i = 1}^n {\sum\nolimits_{t = 1}^T {\| {{{[ {{g_t}( {{x_{i,t}}} )} ]}_ + }} \|} }$, respectively.
Fig. 1 demonstrates that our Algorithm~1 has almost the same accumulated loss as that of Algorithm~2 in \cite{Yi2023}, but has slightly larger accumulated loss than that of Algorithm~1 in \cite{Yi2024}. That is reasonable since Algorithm~1 in \cite{Yi2024} directly uses the subgradients of local loss and constraint functions while our Algorithm~1 uses two-point stochastic estimators to approximate these subgradients. 
Fig. 2 demonstrates that our Algorithm~1 has significantly smaller cumulative constraint violation than that of Algorithm~2 in \cite{Yi2023}, which are consistent with the theoretical results in Theorem~4. The key reason is that Slater’s condition remains effective in our Algorithm~1 but becomes ineffective in Algorithm~2 in \cite{Yi2023}. 

\section{CONCLUSIONS}
This paper studied the distributed bandit convex optimization problem with time-varying inequality constraints.
We proposed a new distributed bandit online primal--dual algorithm, and established network regret and cumulative constraint violation bounds for convex and strongly convex cases, respectively.
Without Slater’s condition, the bounds were the same as the state-of-the-art those in the literature. With Slater’s condition, the network cumulative constraint violation bounds were reduced.
In the future, we will explore the scenario where communication resources are limited, and investigate distributed bandit online algorithms with compressed communication.

\appendix

\hspace{-3mm}\emph{A. Useful Lemmas}

Some preliminary results are given in this subsection.
We first provide some results on the projection in the following lemma.
\begin{lemma}\label{lem1}
Let $\mathbb{K}$ be a nonempty closed convex subset of ${\mathbb{R}^p}$ and let $a$ and $b$ be two vectors in ${\mathbb{R}^p}$.  
If ${x_c} = {\mathcal{P}_\mathbb{K}}( {b - a} )$, then
\begin{flalign}
2\langle {{x_c} - y,a} \rangle  \le {\| {y - b} \|^2} - {\| {y - {x_c}} \|^2} - {\| {{x_c} - b} \|^2},\forall y \in \mathbb{K}. \label{lemma1-eq1a}
\end{flalign}
In addition, let $\Phi ( y ) = {\| {b - y} \|^2} + 2\langle {a,y} \rangle$, then we know $\Phi$ is a strongly convex function with convexity parameter $\sigma = 2$ and ${x_c} = \mathop {\arg \min }\limits_{y \in \mathbb{K}} \Phi ( y )$. We can obtain
\begin{flalign}
\| {{x_c} - b} \| \le \| a \|. \label{lemma1-eq1b}
\end{flalign}
\end{lemma}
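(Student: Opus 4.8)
The plan is to handle both parts through a single mechanism: strong convexity evaluated at a constrained minimizer. The two displayed claims are really two readings of the same optimality inequality, so I would set that inequality up once and then specialize.

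First I would identify $x_c$ with the minimizer of $\Phi$. Completing the square gives $\Phi(y) = \|b-y\|^2 + 2\langle a,y\rangle = \|y-(b-a)\|^2 + (\|b\|^2 - \|b-a\|^2)$, so minimizing $\Phi$ over $\mathbb{K}$ is equivalent to minimizing $\|y-(b-a)\|^2$ over $\mathbb{K}$, whose minimizer is by definition $\mathcal{P}_{\mathbb{K}}(b-a) = x_c$. Since $\nabla^2\Phi \equiv 2I$, the function $\Phi$ is strongly convex with $\sigma = 2$. This records both auxiliary claims in the statement and furnishes the key tool. Next I would invoke the standard strong-convexity-at-a-minimizer estimate: for a $\sigma$-strongly convex $\Phi$ whose minimizer over the convex set $\mathbb{K}$ is $x_c$, one has $\Phi(y) - \Phi(x_c) \ge \frac{\sigma}{2}\|y-x_c\|^2$ for all $y \in \mathbb{K}$. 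This follows by combining the strong-convexity lower bound $\Phi(y) \ge \Phi(x_c) + \langle \nabla\Phi(x_c), y-x_c\rangle + \frac{\sigma}{2}\|y-x_c\|^2$ with the first-order optimality condition $\langle \nabla\Phi(x_c), y-x_c\rangle \ge 0$ on $\mathbb{K}$, which for this $\Phi$ is exactly the variational inequality $\langle (b-a)-x_c, y-x_c\rangle \le 0$ characterizing the projection.

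To obtain \eqref{lemma1-eq1a} I would expand $\Phi(y) - \Phi(x_c) = \|b-y\|^2 - \|b-x_c\|^2 - 2\langle a, x_c - y\rangle$, insert this into the estimate above with $\sigma = 2$ (so the coefficient on $\|y-x_c\|^2$ is $1$), and rearrange; this yields $2\langle x_c - y, a\rangle \le \|y-b\|^2 - \|y-x_c\|^2 - \|x_c-b\|^2$ directly. For \eqref{lemma1-eq1b} I would specialize the same estimate to $y = b$, which is legitimate because in every application $b$ lies in $\mathbb{K}$ (for instance $b = x_{i,t} \in (1-\xi_{t+1})\mathbb{X}$ in \eqref{Algorithm1-eq4}). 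Taking $y = b$ gives $\langle a, b - x_c\rangle \ge \|b-x_c\|^2$, and Cauchy--Schwarz then yields $\|a\|\,\|b-x_c\| \ge \|b-x_c\|^2$; dividing by $\|b-x_c\|$ (the case $\|b-x_c\| = 0$ being trivial) gives $\|x_c - b\| \le \|a\|$. Equivalently, \eqref{lemma1-eq1b} is just nonexpansiveness of the projection, $\|\mathcal{P}_{\mathbb{K}}(b-a) - \mathcal{P}_{\mathbb{K}}(b)\| \le \|a\|$, once $b = \mathcal{P}_{\mathbb{K}}(b)$.

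All of these steps are routine. The only genuinely nontrivial ingredient is the strong-convexity-at-a-minimizer inequality, which rests on the first-order optimality condition for constrained minimization; the only point requiring care in the statement is that \eqref{lemma1-eq1b} needs $b \in \mathbb{K}$, which is why it is invoked solely at the in-set point $b = x_{i,t}$. Beyond keeping the sign conventions consistent in the norm expansions, I expect no real obstacle.
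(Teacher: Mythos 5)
Your proposal is correct, and on the half of the lemma the paper actually writes out it coincides with the paper's argument: for \eqref{lemma1-eq1b} the paper likewise applies the strong-convexity inequality of $\Phi$ at the point $b$, invokes the constrained first-order optimality condition $(\nabla\Phi(x_c))^T(b-x_c)\ge 0$, and finishes with Cauchy--Schwarz, exactly as you do. The difference is in \eqref{lemma1-eq1a}: the paper does not prove it but cites Lemma~3 of \cite{Yi2021b}, whereas you derive it from the same estimate $\Phi(y)-\Phi(x_c)\ge\|y-x_c\|^2$ on $\mathbb{K}$, which makes the lemma self-contained and exhibits both displays as one optimality inequality read two ways. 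You also make explicit a hypothesis the paper uses silently: the optimality condition evaluated at $y=b$ (equivalently, your nonexpansiveness reading $\|\mathcal{P}_{\mathbb{K}}(b-a)-\mathcal{P}_{\mathbb{K}}(b)\|\le\|a\|$) is valid only when $b\in\mathbb{K}$, and \eqref{lemma1-eq1b} is in fact false without some such hypothesis (take $\mathbb{K}=\{0\}$, $a=0$, $b\ne 0$). Your remark that in the only place the bound is used one has $b=x_{i,t}\in(1-\xi_{t+1})\mathbb{X}$ (since $x_{i,t}$ is a convex combination of the $z_{j,t}\in(1-\xi_t)\mathbb{X}$, $\{\xi_t\}$ is non-increasing, and $\mathbb{X}$ contains the origin) closes this gap, so your version is, if anything, more careful than the paper's own proof.
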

\begin{proof}
First, we know \eqref{lemma1-eq1a} holds from Lemma~3 in \cite{Yi2021b}.
Then, 
Since $\Phi$ is a strongly convex function with convexity parameter $\sigma = 2$, we have
\begin{flalign}
\nonumber
\Phi ( b ) \ge \Phi ( {{x_c}} ) + {\big( {\nabla \Phi ( {{x_c}} )} \big)^T}( {b - {x_c}} ) + \frac{\sigma }{2}\| {b - {x_c}} \|.
\end{flalign}
From the optimality condition, we have
\begin{flalign}
\nonumber
{\big( {\nabla \Phi ( {{x_c}} )} \big)^T}( {b - {x_c}} ) \ge 0.
\end{flalign}
Thus, we have
\begin{flalign}
\nonumber
\Phi ( b ) \ge \Phi ( {{x_c}} ) + \frac{\sigma }{2}\| {b - {x_c}} \|.
\end{flalign}
From $\Phi ( y ) = {\| {b - y} \|^2} + 2\langle {a,y} \rangle$, we have
\begin{flalign}
\nonumber
2\langle {a,b} \rangle  \ge {\| {b - {x_c}} \|^2} + 2\langle {a,{x_c}} \rangle  + \frac{\sigma }{2}{\| {b - {x_c}} \|^2}.
\end{flalign}
Thus, we have
\begin{flalign}
\nonumber
2\langle {a,b - {x_c}} \rangle  \ge {\| {b - {x_c}} \|^2} + \frac{\sigma }{2}{\| {b - {x_c}} \|^2}.
\end{flalign}
From the Cauchy-Schwarz inequality, we have
\begin{flalign}
\nonumber
2\langle {a,b - {x_c}} \rangle  \le 2\| a \|\| {b - {x_c}} \|.
\end{flalign}
Thus, we have
\begin{flalign}
\nonumber
{\| {b - {x_c}} \|^2} + \frac{\sigma }{2}{\| {b - {x_c}} \|^2} \le 2\| a \|\| {b - {x_c}} \|.
\end{flalign}
Due to $\sigma = 2$, we have
\begin{flalign}
\nonumber
\| {b - {x_c}} \| \le \| a \|.
\end{flalign}
Therefore, we know that \eqref{lemma1-eq1b} holds.
\end{proof}

We then present some properties of the subgradient estimators $\partial {{\hat f}_{i,t}}$ and $\partial {{{{\hat g}_{i,t}}} }$ in the following lemma.
\begin{lemma}\label{lem1}
(Lemma~8 in \cite{Yi2023})
If Assumption~3 holds. Then, ${{\hat f}_{i,t}}( x )$ and ${{{{\hat g}_{i,t}}( x )} }$ are convex on $( {1 - {\xi _t}} )\mathbb{X}$. If ${{f}_{i,t}}( x )$ and ${{{{g}_{i,t}}( x )} }$ are strongly convex with convexity parameter  $\mu  > 0$ over $\mathbb{X}$, Then, ${{\hat f}_{i,t}}( x )$ and ${{{{\hat g}_{i,t}}( x )} }$ are strongly convex with convexity parameter  $\mu  > 0$ over $( {1 - {\xi _t}} )\mathbb{X}$. Moreover, for any $i \in [ n ]$, $t \in {\mathbb{N}_ + }$, $x \in ( {1 - {\xi _t}} )\mathbb{X}$, $q \in \mathbb{R}_ + ^{{m_i}}$,
\begin{subequations}
\begin{flalign}
&\partial {{\hat f}_{i,t}}( x ) = {\mathbf{E}_{{\mathfrak{U}_t}}}[ {\hat \partial {f_{i,t}}( x )} ], \label{lemma2-eq1a}\\
&{f_{i,t}}( x ) \le {{\hat f}_{i,t}}( x ) \le {f_{i,t}}( x ) + {G_1}{\delta _t}, \label{lemma2-eq1b}\\
&\| {\hat \partial {f_{i,t}}( x )} \| \le p{G_1}, \label{lemma2-eq1c}\\
&\partial {{{{\hat g}_{i,t}}( x )} } = {\mathbf{E}_{{\mathfrak{U}_t}}}\big[ {\hat \partial {{{g_{i,t}}( x )} }} \big], \label{lemma2-eq1d} \\
&{q^T}{{{g_{i,t}}( x )} } \le {q^T}{{{{\hat g}_{i,t}}( x )} }
\le {q^T}{{{g_{i,t}}( x )} } + {G_2}{\delta _t}\| q \|, \label{lemma2-eq1e}\\
&\| {\hat \partial {{{g_{i,t}}( x )} }} \| \le p{G_2}, \label{lemma2-eq1f}
\end{flalign}
\end{subequations}
where ${{\hat f}_{i,t}}( x ) = {\mathbf{E}_{v \in {\mathbb{B}^p}}}[ {{f_{i,t}}( {x + {\delta _t}v} )} ]$ and ${{{{\hat g}_{i,t}}( x )} } = {\mathbf{E}_{v \in {\mathbb{B}^p}}}\big[ {{{{g_{i,t}}( {x + {\delta _t}v} )}}} \big]$ with $v$ being chosen uniformly at random, and ${\mathfrak{U}_t}$ is the $\sigma$-algebra induced by the independent and identically distributed variables ${u_{1,t}}, \cdot  \cdot  \cdot ,{u_{n,t}}$.
\end{lemma}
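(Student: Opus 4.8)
The plan is to establish the six relations by leaning on two elementary facts about the ball-averaged (smoothed) functions $\hat f_{i,t}(x) = \mathbf{E}_{v \in \mathbb{B}^p}[f_{i,t}(x + \delta_t v)]$ and $\hat g_{i,t}(x) = \mathbf{E}_{v \in \mathbb{B}^p}[g_{i,t}(x + \delta_t v)]$: Jensen's inequality (which delivers convexity together with every lower bound) and the Lipschitz estimates \eqref{ass4-eq2a}--\eqref{ass4-eq2c} (which deliver the upper bounds and the norm bounds). As a preliminary, I would note that these smoothed functions are well defined on $(1-\xi_t)\mathbb{X}$: for $x = (1-\xi_t)y$ with $y \in \mathbb{X}$ and any $v \in \mathbb{B}^p$, the choice $\delta_t \le r(\mathbb{X})\xi_t$ together with $r(\mathbb{X})\mathbb{B}^p \subseteq \mathbb{X}$ gives $\delta_t v \in \xi_t\mathbb{X}$, so $x + \delta_t v = (1-\xi_t)y + \xi_t z \in \mathbb{X}$ by convexity of $\mathbb{X}$. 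The single genuinely non-elementary ingredient is the unbiasedness identities \eqref{lemma2-eq1a} and \eqref{lemma2-eq1d}.

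First I would handle convexity: since each $x \mapsto f_{i,t}(x+\delta_t v)$ is a shift of the convex $f_{i,t}$ and an expectation of convex functions is convex, $\hat f_{i,t}$ is convex on $(1-\xi_t)\mathbb{X}$, and the argument applies verbatim to each coordinate of $\hat g_{i,t}$. For strong convexity I would use that $\mu$-strong convexity of $f_{i,t}$ is equivalent to convexity of $f_{i,t}(\cdot)-\frac{\mu}{2}\|\cdot\|^2$; then $\hat f_{i,t}(x)-\frac{\mu}{2}\mathbf{E}_v[\|x+\delta_t v\|^2]$ is convex, and because $\mathbf{E}_v[v]=\mathbf{0}_p$ one has $\mathbf{E}_v[\|x+\delta_t v\|^2]=\|x\|^2+\delta_t^2\mathbf{E}_v[\|v\|^2]$, i.e. the quadratic equals $\|x\|^2$ up to an additive constant, so $\hat f_{i,t}$ is $\mu$-strongly convex. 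For the unbiasedness identities I would write $\hat f_{i,t}(x)$ as a normalized volume integral of $f_{i,t}$ over $x+\delta_t\mathbb{B}^p$, differentiate under the integral, and apply the divergence theorem to convert the gradient of the volume integral into a surface integral, obtaining $\nabla \hat f_{i,t}(x) = \frac{p}{\delta_t}\mathbf{E}_{u \in \mathbb{S}^p}[f_{i,t}(x+\delta_t u)u]$. Since $u$ is symmetric on the sphere, $\mathbf{E}_{\mathfrak{U}_t}[u]=\mathbf{0}_p$, so subtracting the data-independent term $f_{i,t}(x)u$ leaves the expectation unchanged; hence $\mathbf{E}_{\mathfrak{U}_t}[\hat\partial f_{i,t}(x)] = \nabla\hat f_{i,t}(x) = \partial\hat f_{i,t}(x)$, and the coordinatewise version gives \eqref{lemma2-eq1d}.

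The remaining bounds are then immediate. The lower bound in \eqref{lemma2-eq1b} is Jensen combined with $\mathbf{E}_v[v]=\mathbf{0}_p$, and the upper bound follows from $f_{i,t}(x+\delta_t v)-f_{i,t}(x)\le G_1\|\delta_t v\|\le G_1\delta_t$ via \eqref{ass4-eq2a} with $\|v\|\le 1$, after taking expectation. For \eqref{lemma2-eq1e}, pairing with $q\ge\mathbf{0}_{m_i}$ keeps $q^T g_{i,t}$ convex, giving the lower bound by Jensen, while Cauchy--Schwarz together with \eqref{ass4-eq2c} yields $q^T(g_{i,t}(x+\delta_t v)-g_{i,t}(x))\le \|q\|G_2\delta_t$ for the upper bound. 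The norm bounds \eqref{lemma2-eq1c} and \eqref{lemma2-eq1f} come straight from the estimator definitions and the Lipschitz estimates, e.g. $\|\hat\partial f_{i,t}(x)\| = \frac{p}{\delta_t}|f_{i,t}(x+\delta_t u_{i,t})-f_{i,t}(x)|\,\|u_{i,t}\| \le pG_1$ since $\|u_{i,t}\|=1$. I expect the divergence-theorem step underlying \eqref{lemma2-eq1a} and \eqref{lemma2-eq1d} to be the only real obstacle, everything else reducing to Jensen's inequality and Lipschitz continuity; alternatively, since this is exactly Lemma~8 of \cite{Yi2023}, one may simply cite it.
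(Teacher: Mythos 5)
Your proposal is correct, but it does something the paper never does: the paper offers no proof of this lemma at all, simply importing it verbatim as Lemma~8 of \cite{Yi2023} (the option you mention in your last sentence is exactly the paper's entire ``proof''). Your reconstruction is the standard smoothing argument originating in the work of Flaxman--Kalai--McMahan and its two-point refinements \cite{Agarwal2010, Shamir2017}, and each step checks out: well-definedness of $\hat f_{i,t},\hat g_{i,t}$ on $(1-\xi_t)\mathbb{X}$ from $\delta_t \le r(\mathbb{X})\xi_t$ and \eqref{ass1-eq1}; convexity by taking expectations of convex shifts; strong convexity via convexity of $f_{i,t}(\cdot)-\tfrac{\mu}{2}\|\cdot\|^2$ together with $\mathbf{E}_v[v]=\mathbf{0}_p$, so the smoothed quadratic differs from $\tfrac{\mu}{2}\|x\|^2$ only by a constant; the sphere-to-ball identity $\nabla\hat f_{i,t}(x)=\frac{p}{\delta_t}\mathbf{E}_{u\in\mathbb{S}^p}[f_{i,t}(x+\delta_t u)u]$ via the divergence theorem, with the subtracted term $f_{i,t}(x)u$ vanishing in expectation by symmetry, giving \eqref{lemma2-eq1a} and \eqref{lemma2-eq1d}; Jensen plus the Lipschitz bounds \eqref{ass4-eq2a}--\eqref{ass4-eq2c} for \eqref{lemma2-eq1b} and \eqref{lemma2-eq1e}; and the estimator definitions for \eqref{lemma2-eq1c} and \eqref{lemma2-eq1f} (for the matrix estimator, the norm of the Kronecker product factors as $\frac{p}{\delta_t}\|g_{i,t}(x+\delta_t u)-g_{i,t}(x)\|\,\|u\| \le pG_2$). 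Two points deserve explicit care if you were to write this out in full: (i) $f_{i,t}$ is only assumed to admit subgradients, so the interchange of gradient and integral plus the divergence-theorem step must be justified for merely convex (locally Lipschitz, a.e.\ differentiable) integrands -- this is standard but is precisely where the cited proofs spend their effort; and (ii) your equation $\partial\hat f_{i,t}=\nabla\hat f_{i,t}$ should be justified by noting the smoothed function is convex and differentiable, so the subdifferential is the singleton gradient. What your route buys is a self-contained argument independent of \cite{Yi2023}; what the paper's route buys is brevity, at the cost of the reader having to trust an external lemma.
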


We next quantify the disagreement among the local temporary primal variables $\{ {{z_{i,t}}} \}$.
\begin{lemma}\label{lem2}
(Lemma 4 in \cite{Yi2023})
If Assumption 4 holds. For all $i \in [ n ]$ and $t \in {\mathbb{N}_ + }$, ${ z_{i,t}}$ generated by Algorithm~1 satisfy
\begin{flalign}
\| {{z_{i,t}} - {{\bar z}_t}} \| &\le \tau {\lambda ^{t - 2}}\sum\limits_{j = 1}^n {\| {{z_{j,1}}} \|}  + \frac{1}{n}\sum\limits_{j = 1}^n {\| {\varepsilon _{j,t - 1}^z} \|}  + \| {\varepsilon _{i,t - 1}^z} \|
+ \tau \sum\limits_{s = 1}^{t - 2} {{\lambda ^{t - s - 2}}} \sum\limits_{j = 1}^n {\| {\varepsilon _{j,s}^z} \|}, \label{lemma3-eq1}
\end{flalign}
where ${{\bar z}_t} = \frac{1}{n}\sum\nolimits_{j = 1}^n {{z_{j,t}}} $, $\tau = {( {1 - \omega /4{n^2}} )^{ - 2}} > 1$, $\lambda  = {( {1 - \omega /4{n^2}} )^{1/B}} \in ( {0,1} )$, and $\varepsilon _{i,t - 1}^z = {z_{i,{t}}} - {x_{i,t-1}}$.
\end{lemma}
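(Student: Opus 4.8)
The plan is to treat the update \eqref{Algorithm1-eq4} as a perturbed linear consensus iteration and to unroll it through products of the mixing matrices, then bound the deviation from the running average using a geometric estimate on those products. First, combining \eqref{Algorithm1-eq1}, \eqref{Algorithm1-eq4}, and the definition $\varepsilon _{i,t}^z = z_{i,t+1} - x_{i,t}$, I would write the recursion in scalar-coefficient form,
\[
z_{i,t+1} = \sum_{j=1}^n [W_t]_{ij} z_{j,t} + \varepsilon_{i,t}^z .
\]
Introducing the transition matrices $\Phi(t,s) := W_t W_{t-1}\cdots W_s$ for $t \ge s$ and unrolling from iteration $1$ to $t$ gives
\[
z_{i,t} = \sum_{j=1}^n [\Phi(t-1,1)]_{ij}\, z_{j,1} + \sum_{s=1}^{t-2}\sum_{j=1}^n [\Phi(t-1,s+1)]_{ij}\, \varepsilon_{j,s}^z + \varepsilon_{i,t-1}^z ,
\]
where the most recent perturbation $\varepsilon_{i,t-1}^z$ enters without any mixing.

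Next, since every $W_t$ is doubly stochastic by Assumption~4(ii), each product $\Phi(t-1,s+1)$ is doubly stochastic, so averaging the unrolled identity over $i$ collapses each row coefficient to $1/n$ and yields
\[
\bar z_t = \frac{1}{n}\sum_{j=1}^n z_{j,1} + \frac{1}{n}\sum_{s=1}^{t-1}\sum_{j=1}^n \varepsilon_{j,s}^z .
\]
Subtracting this from the expression for $z_{i,t}$ and applying the triangle inequality produces three groups of terms. I would then bound the mixing coefficients by the geometric estimate
\[
\Big| [\Phi(t,s)]_{ij} - \tfrac{1}{n} \Big| \le \tau \lambda^{t-s}, \quad \tau = \Big(1 - \tfrac{w}{4n^2}\Big)^{-2}, \ \lambda = \Big(1 - \tfrac{w}{4n^2}\Big)^{1/B} .
\]
Applied to the initial-condition term this gives $\tau\lambda^{t-2}\sum_{j}\|z_{j,1}\|$; applied to the perturbation sum it gives $\tau\sum_{s=1}^{t-2}\lambda^{t-s-2}\sum_{j}\|\varepsilon_{j,s}^z\|$; and the final perturbation contributes $\|\varepsilon_{i,t-1}^z\| + \frac{1}{n}\sum_{j}\|\varepsilon_{j,t-1}^z\|$. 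Summing these reproduces \eqref{lemma3-eq1} exactly.

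The hard part is establishing the geometric transition-matrix estimate with the explicit constants $\tau$ and $\lambda$ above. This is the standard but delicate ingredient: under Assumption~4 the uniform lower bound $w$ on positive entries, double stochasticity, and $B$-step strong connectivity of the union graph ensure that each length-$B$ block product is scrambling with coefficient of ergodicity at most $1 - w/(4n^2)$, and chaining these blocks forces $\Phi(t,s)$ to contract toward $\frac{1}{n}\mathbf{1}_n\mathbf{1}_n^T$ geometrically at rate $\lambda$ with prefactor $\tau$. Since the statement is quoted verbatim as Lemma~4 of \cite{Yi2023}, I would invoke this bound from the distributed-optimization literature rather than reprove it; the remaining work is only the index bookkeeping in the unrolled recursion described above.
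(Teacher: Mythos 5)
Your proposal is correct: the perturbed-consensus recursion $z_{i,t+1}=\sum_j [W_t]_{ij}z_{j,t}+\varepsilon_{i,t}^z$, its unrolling through the transition matrices $\Phi(t,s)$, the collapse of the average via double stochasticity, and the geometric estimate $\big|[\Phi(t,s)]_{ij}-\tfrac{1}{n}\big|\le\tau\lambda^{t-s}$ reproduce \eqref{lemma3-eq1} with exactly the right exponents ($\tau\lambda^{t-2}$ on the initial term, $\tau\lambda^{t-s-2}$ on the perturbation at index $s$, and the unmixed last-step terms $\|\varepsilon_{i,t-1}^z\|+\tfrac{1}{n}\sum_j\|\varepsilon_{j,t-1}^z\|$). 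The paper itself gives no proof of this statement---it is quoted verbatim as Lemma~4 of \cite{Yi2023}---and your argument is precisely the standard one underlying that cited result, so the approaches coincide.
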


We finally analyze regret at one iteration.
\begin{lemma}\label{lem4}
Suppose Assumptions 1--3 hold. For all $i \in [ n ]$, let $\{ {{x_{i,t}}}\}$ be the sequences generated by Algorithm~1 and $\{ {{y_t}} \}$ be an arbitrary sequence in $\mathbb{X}$, then
\begin{flalign}
\nonumber
&\;\;\;\;\;\frac{1}{n}\sum\limits_{i = 1}^n {q_{i,t + 1}^T{g_{i,t}}( {{x_{i,t}}} )}  + \frac{1}{n}\sum\limits_{i = 1}^n {\big( {{f_{i,t}}( {{x_{i,t}}} ) - {f_{i,t}}( y )} \big)} \\
\nonumber
& \le \frac{1}{n}\sum\limits_{i = 1}^n {q_{i,t + 1}^T{g_{i,t}}( y )} + \frac{1}{n}\sum\limits_{i = 1}^n {{\Delta _{i,t}}( {\hat y} )} + \frac{{{{\tilde \Delta }_t}}}{n} \\
& + \frac{1}{n}\sum\limits_{i = 1}^n {{G_2}} \big( {R( \mathbb{X} ){\xi _t} + {\delta _t}} \big)\| {{q_{i,t + 1}}} \| + {G_1}R( \mathbb{X} ){\xi _t} + {G_1}{\delta _t} , \label{lemma4-eq1}
\end{flalign}
where
\begin{flalign}
\nonumber
&{\Delta _{i,t}}( {\hat y} ) = \frac{1}{{{2\alpha _t}}}{\mathbf{E}_{{\mathfrak{U}_t}}}\big[{{{\| {\hat y - {x_{i,t}}} \|}^2} - {{\| {\hat y - {x_{i,t + 1}}} \|}^2}} \big], \\
\nonumber
&{{\tilde \Delta }_t} = \sum\limits_{i = 1}^n {( {p{G_1} + p{G_2}\| {{q_{i,t + 1}}} \|} ){\mathbf{E}_{{\mathfrak{U}_t}}}[\| {\varepsilon _{i,t}^z} \|]}  - \sum\limits_{i = 1}^n {\frac{{{\mathbf{E}_{{\mathfrak{U}_t}}}[{{\| {\varepsilon _{i,t}^z} \|}^2}]}}{{2{\alpha _t}}}}.
\end{flalign}
\end{lemma}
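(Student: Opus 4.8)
The plan is to read \eqref{lemma4-eq1} as a per-iteration, network-level regret decomposition obtained by applying the projection inequality \eqref{lemma1-eq1a} to the primal update \eqref{Algorithm1-eq4}, and then converting the resulting inner products into loss and constraint value differences through convexity and the smoothing estimates of Lemma~2. Throughout I take $\hat y := (1-\xi_t)y$, which lies in $(1-\xi_t)\mathbb{X}\subseteq(1-\xi_{t+1})\mathbb{X}$ by Assumption~1 (since $\xi_t$ is non-increasing and $\mathbb{X}$ contains the origin). Thus $\hat y$ is simultaneously a valid comparison point for the projection onto $(1-\xi_{t+1})\mathbb{X}$ in \eqref{Algorithm1-eq4} and a point of the set on which $\hat f_{i,t}$ and $\hat g_{i,t}$ are convex. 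I will also use that $x_{i,t}\in(1-\xi_t)\mathbb{X}$, since by \eqref{Algorithm1-eq1} it is a convex combination of the $z_{j,t}\in(1-\xi_t)\mathbb{X}$ and $W_t$ is doubly stochastic.

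First I would apply Lemma~1 to \eqref{Algorithm1-eq4} with $b=x_{i,t}$, $a=\alpha_t\hat\omega_{i,t+1}$, $x_c=z_{i,t+1}$ and comparison point $\hat y$, giving $2\alpha_t\langle z_{i,t+1}-\hat y,\hat\omega_{i,t+1}\rangle \le \|\hat y-x_{i,t}\|^2-\|\hat y-z_{i,t+1}\|^2-\|\varepsilon_{i,t}^z\|^2$ with $\varepsilon_{i,t}^z=z_{i,t+1}-x_{i,t}$. Writing $\langle z_{i,t+1}-\hat y,\hat\omega_{i,t+1}\rangle = \langle x_{i,t}-\hat y,\hat\omega_{i,t+1}\rangle + \langle\varepsilon_{i,t}^z,\hat\omega_{i,t+1}\rangle$ and bounding the cross term by $-\langle\varepsilon_{i,t}^z,\hat\omega_{i,t+1}\rangle\le\|\varepsilon_{i,t}^z\|\,\|\hat\omega_{i,t+1}\|$, I would then invoke \eqref{lemma2-eq1c} and \eqref{lemma2-eq1f} together with \eqref{Algorithm1-eq3} to get $\|\hat\omega_{i,t+1}\|\le pG_1+pG_2\|q_{i,t+1}\|$. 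This isolates precisely the two pieces that assemble into $\tilde\Delta_t$.

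Next I would take the conditional expectation $\mathbf{E}_{\mathfrak{U}_t}$. The crucial point is that $x_{i,t}$ and $q_{i,t+1}=\gamma_t[g_{i,t}(x_{i,t})]_+$ are determined by the history and independent of $u_{\cdot,t}$, so the estimators are unbiased: by \eqref{lemma2-eq1a} and \eqref{lemma2-eq1d}, $\mathbf{E}_{\mathfrak{U}_t}[\hat\omega_{i,t+1}]=\partial\hat f_{i,t}(x_{i,t})+\partial\hat g_{i,t}(x_{i,t})q_{i,t+1}$, which is a subgradient at $x_{i,t}$ of the convex function $\hat f_{i,t}+q_{i,t+1}^T\hat g_{i,t}$ (convexity from Lemma~2 and $q_{i,t+1}\ge\mathbf{0}_{m_i}$). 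The subgradient inequality at $\hat y$ then lower-bounds $\langle x_{i,t}-\hat y,\mathbf{E}_{\mathfrak{U}_t}[\hat\omega_{i,t+1}]\rangle$ by $(\hat f_{i,t}(x_{i,t})-\hat f_{i,t}(\hat y))+(q_{i,t+1}^T\hat g_{i,t}(x_{i,t})-q_{i,t+1}^T\hat g_{i,t}(\hat y))$, and applying the sandwich estimates \eqref{lemma2-eq1b} and \eqref{lemma2-eq1e} replaces the smoothed functions by the true ones at the cost of the $G_1\delta_t$ and $G_2\delta_t\|q_{i,t+1}\|$ error terms.

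Finally I would sum over $i\in[n]$, divide by $n$, and tidy up. For the telescoping term I would use double stochasticity of $W_{t+1}$, convexity of $\|\cdot\|^2$ and \eqref{Algorithm1-eq1} to obtain $\sum_i\|\hat y-x_{i,t+1}\|^2\le\sum_i\|\hat y-z_{i,t+1}\|^2$, which lets me replace $z_{i,t+1}$ by $x_{i,t+1}$ and recognize $\tfrac1n\sum_i\Delta_{i,t}(\hat y)$. Then I would pass from $\hat y$ back to $y$ via the Lipschitz bounds \eqref{ass4-eq2a} and \eqref{ass4-eq2c}: since $\|y-\hat y\|=\xi_t\|y\|\le R(\mathbb{X})\xi_t$ by \eqref{ass1-eq1}, this yields $f_{i,t}(\hat y)-f_{i,t}(y)\le G_1 R(\mathbb{X})\xi_t$ and $q_{i,t+1}^T(g_{i,t}(y)-g_{i,t}(\hat y))\le G_2 R(\mathbb{X})\xi_t\|q_{i,t+1}\|$, producing the remaining boundary terms. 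Collecting everything gives \eqref{lemma4-eq1}. The main obstacle I anticipate is the bookkeeping of the conditioning: one must treat $x_{i,t}$ and $q_{i,t+1}$ as deterministic under $\mathbf{E}_{\mathfrak{U}_t}$ so that unbiasedness turns the random estimators into genuine subgradients of the smoothed surrogates, while the cross term $\langle\varepsilon_{i,t}^z,\hat\omega_{i,t+1}\rangle$—whose two factors both depend on $u_{\cdot,t}$—must be controlled by the norm bound \emph{before} taking expectations rather than by factoring the expectation.
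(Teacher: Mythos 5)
Your proposal is correct and takes essentially the same route as the paper's proof: both apply \eqref{lemma1-eq1a} to the update \eqref{Algorithm1-eq4} with comparison point $\hat y=(1-\xi_t)y$, use unbiasedness and convexity of the smoothed surrogates from Lemma~2 (exploiting that $x_{i,t}$ and $q_{i,t+1}$ are independent of ${\mathfrak{U}_t}$), invoke the sandwich bounds \eqref{lemma2-eq1b} and \eqref{lemma2-eq1e} plus the Lipschitz estimates to pass from $\hat y$ back to $y$, and use double stochasticity of $W_{t+1}$ with convexity of ${\|\cdot\|^2}$ to replace $z_{i,t+1}$ by $x_{i,t+1}$ in the telescoping term. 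The differences are purely organizational---you bound the single cross term $\langle \varepsilon_{i,t}^z,\hat\omega_{i,t+1}\rangle$ by $\|\varepsilon_{i,t}^z\|(pG_1+pG_2\|q_{i,t+1}\|)$ and apply the subgradient inequality once to the combined function $\hat f_{i,t}+q_{i,t+1}^T\hat g_{i,t}$, whereas the paper splits these into the separate displays \eqref{lemma4-eq8}--\eqref{lemma4-eq11}---and your explicit check that $\hat y\in(1-\xi_t)\mathbb{X}\subseteq(1-\xi_{t+1})\mathbb{X}$ is a point the paper leaves implicit.
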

\begin{proof}
From Assumption~3, we have
\begin{flalign}
{f_{i,t}}( y ) &\ge {f_{i,t}}( x ) + \langle {\partial {f_{i,t}}( x ),y - x} \rangle ,\forall x,y \in \mathbb{X}, \label{lemma4-eq2}\\
{g_{i,t}}( y ) &\ge {g_{i,t}}( x ) + \partial {g_{i,t}}( x )( {y - x} ),\forall x,y \in \mathbb{X}. \label{lemma4-eq4}
\end{flalign}

From \eqref{ass1-eq1}, \eqref{ass4-eq2a}, \eqref{lemma2-eq1b} and $\hat y = ( {1 - {\xi _t}} )y$ for any $t \in {\mathbb{N}_ + }$, we have
\begin{flalign}
\nonumber
{{\hat f}_{i,t}}( {\hat y} ) - {f_{i,t}}( y ) &= {f_{i,t}}( {\hat y} ) - {f_{i,t}}( y ) + {{\hat f}_{i,t}}( {\hat y} ) - {f_{i,t}}( {\hat y} ) \\
\nonumber
& \le {G_1}\| {\hat y - y} \| + {{\hat f}_{i,t}}( {\hat y} ) - {f_{i,t}}( {\hat y} ) \\
& \le {G_1}R( \mathbb{X} ){\xi _t} + {G_1}{\delta _t}. \label{lemma4-eq6}
\end{flalign}
We have
\begin{flalign}
\nonumber
q_{i,t + 1}^T{{\hat g}_{i,t}}( {\hat y} ) &\le q_{i,t + 1}^T{g_{i,t}}( {\hat y} ) + {G_2}{\delta _t}\| {{q_{i,t + 1}}} \| \\
\nonumber
& = q_{i,t + 1}^T\big( {{g_{i,t}}( y ) + {g_{i,t}}( {\hat y} ) - {g_{i,t}}( y )} \big) + {G_2}{\delta _t}\| {{q_{i,t + 1}}} \| \\
\nonumber
& \le q_{i,t + 1}^T{g_{i,t}}( y ) + {G_2}\| {\hat y - y} \|\| {{q_{i,t + 1}}} \| + {G_2}{\delta _t}\| {{q_{i,t + 1}}} \| \\
& \le q_{i,t + 1}^T{g_{i,t}}( y ) + {G_2}\big( {R( \mathbb{X} ){\xi _t} + {\delta _t}} \big)\| {{q_{i,t + 1}}} \|, \label{lemma4-eq7}
\end{flalign}
where the first inequality holds due to \eqref{lemma2-eq1e}; the second inequality holds due to \eqref{ass4-eq1c} and \eqref{lemma4-eq4}; and the last inequality holds due to \eqref{ass1-eq1} and $\hat y = \left( {1 - {\xi _t}} \right)y$.

We have
\begin{flalign}
\nonumber
{{\hat f}_{i,t}}( {{x_{i,t}}} ) - {{\hat f}_{i,t}}( {\hat y} )
& \le \langle {\partial {{\hat f}_{i,t}}( {{x_{i,t}}} ),{x_{i,t}} - \hat y} \rangle \\
\nonumber
&  = \langle {{\mathbf{E}_{{\mathfrak{U}_t}}}[ {\hat \partial {f_{i,t}}( {{x_{i,t}}} )} ],{x_{i,t}} - \hat y} \rangle \\
\nonumber
&  = {\mathbf{E}_{{\mathfrak{U}_t}}}[ {\langle {\hat \partial {f_{i,t}}( {{x_{i,t}}} ),{x_{i,t}} - \hat y} \rangle } ] \\
\nonumber
&  = {\mathbf{E}_{{\mathfrak{U}_t}}}[ {\langle {\hat \partial {f_{i,t}}( {{x_{i,t}}} ),{x_{i,t}} - z_{i,t + 1}} \rangle  + \langle {\hat \partial {f_{i,t}}( {{x_{i,t}}} ),{z_{i,t + 1}} - \hat y} \rangle } ] \\
& \le {\mathbf{E}_{{\mathfrak{U}_t}}}[ {p{G_1}\| {\varepsilon _{i,t}^z} \| + \langle {\hat \partial {f_{i,t}}( {{x_{i,t}}} ),{z_{i,t + 1}} - \hat y} \rangle } ], \label{lemma4-eq8}
\end{flalign}
where the first inequality holds since ${x_{i,t}}$, $\hat y \in ( {1 - {\xi _t}} )\mathbb{X}$ and ${f_{i,t}}( x )$ is convex on $( {1 - {\xi _t}} )\mathbb{X}$; the first equality holds due to \eqref{lemma2-eq1a}; the second equality holds since ${x_{i,t}}$ and $\hat y$ are independent of ${\mathfrak{U}_t}$; and the last inequality holds due to \eqref{lemma2-eq1c}.

From \eqref{Algorithm1-eq3}, we have
\begin{flalign}
\nonumber
\langle {\hat \partial {f_{i,t}}( {{x_{i,t}}} ),{z_{i,t + 1}} - \hat y} \rangle
&  = \langle {{{\hat \omega }_{i,t + 1}},{z_{i,t + 1}} - \hat y} \rangle  + \langle {\hat \partial {g_{i,t}}( {{x_{i,t}}} ){q_{i,t + 1}},\hat y - {z_{i,t + 1}}} \rangle \\
\nonumber
&  = \langle {{{\hat \omega }_{i,t + 1}},{z_{i,t + 1}} - \hat y} \rangle  + \langle {\hat \partial {g_{i,t}}( {{x_{i,t}}} ){q_{i,t + 1}},\hat y - {x_{i,t}}} \rangle \\
&\;\;\;\;  + \langle {\hat \partial {g_{i,t}}( {{x_{i,t}}} ){q_{i,t + 1}},{x_{i,t}} - {z_{i,t + 1}}} \rangle. \label{lemma4-eq9}
\end{flalign}

We have
\begin{flalign}
\nonumber
{\mathbf{E}_{{\mathfrak{U}_t}}}[ {\langle {\hat \partial {g_{i,t}}( {{x_{i,t}}} ){q_{i,t + 1}},\hat y - {x_{i,t}}} \rangle } ]
&  = \langle {{\mathbf{E}_{{\mathfrak{U}_t}}}[ {\hat \partial {g_{i,t}}( {{x_{i,t}}} )} ]{q_{i,t + 1}},\hat y - {x_{i,t}}} \rangle  \\
\nonumber
&  = \langle {\partial {{\hat g}_{i,t}}( {{x_{i,t}}} ){q_{i,t + 1}},\hat y - {x_{i,t}}} \rangle \\
\nonumber
& \le q_{i,t + 1}^T{{\hat g}_{i,t}}( {\hat y} ) - q_{i,t + 1}^T{{\hat g}_{i,t}}( {{x_{i,t}}} ) \\
& \le q_{i,t + 1}^T{g_{i,t}}( y ) - q_{i,t + 1}^T{g_{i,t}}( {{x_{i,t}}} ) + {G_2}\big( {R( \mathbb{X} ){\xi _t} + {\delta _t}} \big)\| {{q_{i,t + 1}}} \|, \label{lemma4-eq10}
\end{flalign}
where the first equality holds since ${x_{i,t}}$, $q_{i,t + 1}$ and $\hat y$ are independent of ${\mathfrak{U}_t}$; the last equality holds due to \eqref{lemma2-eq1d}; the first inequality holds since ${q_{i,t}} \ge {\mathbf{0}_{{m_i}}}$, ${x_{i,t}},\hat y \in ( {1 - {\xi _t}} )\mathbb{X}$ and ${{\hat g}_{i,t}}( x )$ is convex on $( {1 - {\xi _t}} )\mathbb{X}$ as shown in Lemma~2; and the last inequality holds due to \eqref{lemma2-eq1e} and \eqref{lemma4-eq7}.

From the Cauchy-Schwarz inequality and \eqref{lemma2-eq1f}, we have
\begin{flalign}
\big\langle {{ {\hat{\partial} {g_{i,t}}( {{x_{i,t}}} )} }{q_{i,t + 1}},{x_{i,t}} - {z_{i,t + 1}}} \big\rangle  \le p{G_2}\| {{q_{i,t + 1}}} \|\| {\varepsilon _{i,t}^z} \|. \label{lemma4-eq11}
\end{flalign}

Applying \eqref{lemma1-eq1a} to update \eqref{Algorithm1-eq4}, we have
\begin{flalign}
\nonumber
&\;\;\;\;\;\langle {{\hat{\omega} _{i,t + 1}}, {z_{i,t + 1}} - \hat y} \rangle \\
\nonumber
& \le \frac{1}{{2{\alpha _t}}}\big( {{{\| {\hat y - {x_{i,t}}} \|}^2} - {{\| {\hat y - {z_{i,t + 1}}} \|}^2} - \| {\varepsilon _{i,t}^z} \|^2} \big) \\
\nonumber
& = \frac{1}{{2{\alpha _t}}}\big( {{{\| {\hat y - {x_{i,t}}} \|}^2} - {{\| {\hat y - {x_{i,t + 1}}} \|}^2} + {{\| {\hat y - {x_{i,t + 1}}} \|}^2} - {{\| {\hat y - {z_{i,t + 1}}} \|}^2} - \| {\varepsilon _{i,t}^z} \|^2} \big) \\
\nonumber
& = {\Delta _{i,t}}( {\hat y} ) + \frac{1}{{2{\alpha _t}}}\Big( {{\Big\| {\hat y - \sum\limits_{j = 1}^n {{{[ {{W_{t + 1}}} ]}_{ij}}{z_{j,t + 1}}} } \Big\|^2} - {{\| {\hat y - {z_{i,t + 1}}} \|}^2} - \| {\varepsilon _{i,t}^z} \|^2} \Big) \\
& \le {\Delta _{i,t}}( {\hat y} ) + \frac{1}{{2{\alpha _t}}}\Big( {\sum\limits_{j = 1}^n {{{[ {{W_{t + 1}}} ]}_{ij}}{{\| {\hat y - {z_{j,t + 1}}} \|}^2}}  - {{\| {\hat y - {z_{i,t + 1}}} \|}^2} - \| {\varepsilon _{i,t}^z} \|^2} \Big), \label{lemma4-eq12}
\end{flalign}
where the second equality holds due to \eqref{Algorithm1-eq1}; and the last inequality holds since ${W_{t + 1}}$ is doubly stochastic and ${\|  \cdot  \|^2}$ are convex.

Summing \eqref{lemma2-eq1b}, \eqref{lemma4-eq6}, \eqref{lemma4-eq8}--\eqref{lemma4-eq12} over $i \in [ n ]$, dividing by $n$, using $\sum\nolimits_{i = 1}^n {{{[ {{W_t}} ]}_{ij}} = 1} ,\forall t \in {\mathbb{N}_ + }$, and rearranging terms yields \eqref{lemma4-eq1}.
\end{proof}

Finally, we bound local regret and (squared) cumulative constraint violation, the accumulated (squared) consensus error, and the changes caused by composite mirror descent in the following.
\begin{lemma}\label{lem5}
Suppose Assumptions 1--2 and 4--6 hold. For all $i \in [ n ]$, let $\{ {{x_{i,t}}}\}$ be the sequences generated by Algorithm~1 with ${\gamma _t} = {\gamma _0}/{\alpha _t}$, where ${\gamma _0} \in \big( {0,1 /( {4(p^2+1)G_2^2} )} \big]$ is a constant. Then, for any $T \in {\mathbb{N}_ + }$,
\begin{flalign}
&\frac{1}{n}\sum\limits_{i = 1}^n {\sum\limits_{t = 1}^T {\big( {{f_{i,t}}( {{x_{i,t}}} ) - {f_{i,t}}( y ) + \frac{{ {\mathbf{E}_{{\mathfrak{U}_t}}}[{{\| {\varepsilon _{i,t}^z} \|}^2}]}}{{4{\alpha _t}}}} \big)} } \le {m_T} + \frac{1}{n}\sum\limits_{i = 1}^n {\sum\limits_{t = 1}^T {{\Delta _{i,t}}( {\hat y} )} }, \label{lemma5-eqa}\\
&\sum\limits_{i = 1}^n {\sum\limits_{t = 1}^T {\frac{1}{2}} } \Big( {\frac{{q_{i,t + 1}^T{g_{i,t}}( {{x_{i,t}}} )}}{{{\gamma _t}}} + \frac{{\ {\mathbf{E}_{{\mathfrak{U}_t}}}[{{\| {\varepsilon _{i,t}^z} \|}^2}]}}{{2{\gamma _0}}}} \Big) \le {h_T}( y ) + {{\tilde h}_T}( {\hat y} ), \label{lemma5-eqb}\\
&\frac{1}{n}\sum\limits_{t = 1}^T {\sum\limits_{i = 1}^n {\sum\limits_{j = 1}^n {\| {{x_{i,t}} - {x_{j,t}}} \|} } } \le n{\tilde{\varepsilon} _1} + {\tilde \varepsilon _2}\sum\limits_{t = 1}^T {\sum\limits_{i = 1}^n {\| {\varepsilon _{i,t}^z} \|} }, \label{lemma5-eqc}\\
&\frac{1}{n}\sum\limits_{t = 1}^T {\sum\limits_{i = 1}^n {\sum\limits_{j = 1}^n {{{\| {{x_{i,t}} - {x_{j,t}}} \|}^2}} } } \le {{\tilde \varepsilon }_3} + {{\tilde \varepsilon }_4}\sum\limits_{t = 1}^T {\sum\limits_{i = 1}^n {{{\| {\varepsilon _{i,t}^z} \|}^2}} }, \label{lemma5-eqd}\\
&{\mathbf{E}_{{\mathfrak{U}_t}}}[\| {\varepsilon _{i,t}^z} \|] \le {p{G_1}{\alpha _t} + p{G_2}{\gamma _0}\| {{{[ {{g_{i,t}}( {{x_{i,t}}} )} ]}_ + }} \|}. \label{lemma5-eqe}
\end{flalign}
where
\begin{flalign}
\nonumber
&{m_T} = \sum\limits_{t = 1}^T {{2{p^2{G_1^2}}{\alpha _t}}}  + \sum\limits_{t = 1}^T {\frac{{ R{{(\mathbb{X})}^2}\xi _t^2}}{{4{\alpha _t}}}}  + \sum\limits_{t = 1}^T {\frac{{ \delta _t^2}}{{4{\alpha _t}}}}  + \sum\limits_{t = 1}^T {{G_1}R(\mathbb{X}){\xi _t}}  + \sum\limits_{t = 1}^T {{G_1}{\delta _t}}, \\
\nonumber
&{h_T}( y ) = \sum\limits_{i = 1}^n {\sum\limits_{t = 1}^T {\frac{{q_{i,t + 1}^T{g_{i,t}}( y )}}{{{\gamma _t}}}} }, \\
\nonumber
&{{\tilde h}_T}( {\hat y} ) = \sum\limits_{i = 1}^n \frac{{\mathbf{E}_{{\mathfrak{U}_t}}}[{{{\| {\hat y - {x_{i,1}}} \|}^2}}]}{{{2\gamma _0}}}  + \sum\limits_{t = 1}^T {\frac{{2nF}}{{{\gamma _t}}}} + \sum\limits_{t = 1}^T {\frac{{2n{p^2{G_1^2}}{\gamma _0}}}{{\gamma _t^2}}} + \sum\limits_{t = 1}^T {\frac{{nR{{( \mathbb{X} )}^2}\xi _t^2}}{{4{\gamma _0}}}}  + \sum\limits_{t = 1}^T {\frac{{n\delta _t^2}}{{4{\gamma _0}}}} \\
\nonumber
&\;\;\;\;\;\;\;\;\;\; + \sum\limits_{t = 1}^T {\frac{{n{G_1}R( \mathbb{X} ){\xi _t} + n{G_1}{\delta _t}}}{{{\gamma _t}}}}, {\tilde{\varepsilon} _1} = \frac{{2\tau }}{{\lambda ( {1 - \lambda } )}}\sum\limits_{i = 1}^n {\| {{z_{i,1}}} \|}, \\
\nonumber
&{{\tilde \varepsilon }_2} = \frac{{2( {n\tau  + 2 - 2\lambda } )}}{{1 - \lambda }}, {{\tilde \varepsilon }_3} = \frac{{16n{\tau ^2}}}{{{\lambda ^2}( {1 - {\lambda ^2}} )}}{\Big( {\sum\limits_{i = 1}^n {\| {{z_{i,1}}} \|} } \Big)^2}, {{\tilde \varepsilon }_4} = \frac{{16{n^2}{\tau ^2}}}{{{{( {1 - \lambda } )}^2}}} + 32,
\end{flalign}
\end{lemma}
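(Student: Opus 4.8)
The plan is to treat the five inequalities in two groups. The estimates \eqref{lemma5-eqe}, \eqref{lemma5-eqc} and \eqref{lemma5-eqd} are norm-bookkeeping facts about the iterates and should fall out of Lemmas~1 and~3; the estimates \eqref{lemma5-eqa} and \eqref{lemma5-eqb} are the genuine primal--dual regret/violation bounds and carry the real difficulty, both stemming from the per-iteration inequality \eqref{lemma4-eq1} of Lemma~4.

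For \eqref{lemma5-eqe}, I would apply \eqref{lemma1-eq1b} to the projection update \eqref{Algorithm1-eq4}, taking $b=x_{i,t}$ and $a=\alpha_t\hat\omega_{i,t+1}$, after first checking that $x_{i,t}\in(1-\xi_{t+1})\mathbb{X}$; this holds because $W_t$ is doubly stochastic, $z_{j,t}\in(1-\xi_t)\mathbb{X}$, $\{\xi_t\}$ is non-increasing, and $0\in\mathbb{X}$. This yields $\|\varepsilon_{i,t}^z\|\le\alpha_t\|\hat\omega_{i,t+1}\|$, after which I bound $\|\hat\omega_{i,t+1}\|\le pG_1+pG_2\|q_{i,t+1}\|$ via \eqref{Algorithm1-eq3}, \eqref{lemma2-eq1c} and \eqref{lemma2-eq1f}, substitute $q_{i,t+1}=\gamma_t[g_{i,t}(x_{i,t})]_+$ together with $\alpha_t\gamma_t=\gamma_0$, and take $\mathbf{E}_{\mathfrak{U}_t}$ (legitimate since $q_{i,t+1}$ is $\mathfrak{U}_t$-measurable). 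For \eqref{lemma5-eqc} and \eqref{lemma5-eqd} I would use that double stochasticity gives $\bar x_t=\bar z_t$, so $\|x_{i,t}-x_{j,t}\|\le\|x_{i,t}-\bar z_t\|+\|x_{j,t}-\bar z_t\|$ and $\|x_{i,t}-\bar z_t\|\le\sum_k[W_t]_{ik}\|z_{k,t}-\bar z_t\|$; summing over $i,j$ and invoking column-stochasticity reduces both bounds to $\sum_t\sum_k\|z_{k,t}-\bar z_t\|$ (respectively its square), which Lemma~3 controls. The constants $\tilde\varepsilon_1,\dots,\tilde\varepsilon_4$ then emerge from summing the geometric series in $\lambda$, with $(\sum_k a_k)^2\le n\sum_k a_k^2$ and Cauchy--Schwarz on the convolution term handling the squared case; this is bookkeeping-heavy but routine.

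The core is \eqref{lemma5-eqa} and \eqref{lemma5-eqb}. For \eqref{lemma5-eqa} I would take the comparator $y\in\mathcal{X}_T$ so that $q_{i,t+1}^Tg_{i,t}(y)\le0$ and may be discarded, while keeping the nonnegative identity term $q_{i,t+1}^Tg_{i,t}(x_{i,t})=\gamma_t\|[g_{i,t}(x_{i,t})]_+\|^2$ on the left. I then expand $\tilde\Delta_t$, split $-\tfrac{\mathbf{E}_{\mathfrak{U}_t}[\|\varepsilon_{i,t}^z\|^2]}{2\alpha_t}$ into two halves, move one half to the left to create the sought $+\tfrac{\mathbf{E}_{\mathfrak{U}_t}[\|\varepsilon_{i,t}^z\|^2]}{4\alpha_t}$, and spend the other half in the Young step $(pG_1+pG_2\|q_{i,t+1}\|)\mathbf{E}_{\mathfrak{U}_t}[\|\varepsilon_{i,t}^z\|]\le\tfrac{\mathbf{E}_{\mathfrak{U}_t}[\|\varepsilon_{i,t}^z\|^2]}{4\alpha_t}+\alpha_t(pG_1+pG_2\|q_{i,t+1}\|)^2$ (using Jensen, $\mathbf{E}_{\mathfrak{U}_t}[\|\varepsilon_{i,t}^z\|]^2\le\mathbf{E}_{\mathfrak{U}_t}[\|\varepsilon_{i,t}^z\|^2]$). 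Writing $\|q_{i,t+1}\|^2=\gamma_t^2\|[g_{i,t}(x_{i,t})]_+\|^2$, the quadratic-in-violation part becomes $2p^2G_2^2\gamma_0\cdot\gamma_t\|[g_{i,t}(x_{i,t})]_+\|^2$, which the condition $\gamma_0\le1/(4(p^2+1)G_2^2)$ (hence $2p^2G_2^2\gamma_0<1$) lets me dominate by the retained $\gamma_t\|[g_{i,t}(x_{i,t})]_+\|^2$, with slack left over to also Young-absorb the linear term $G_2(R(\mathbb{X})\xi_t+\delta_t)\|q_{i,t+1}\|$; summing over $t$ and collecting residual $\alpha_t$-, $\xi_t$- and $\delta_t$-order terms produces $m_T$. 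For \eqref{lemma5-eqb} I would instead multiply \eqref{lemma4-eq1} by $n$ and divide by $\gamma_t$; the simplification $\tfrac{1}{2\alpha_t\gamma_t}=\tfrac{1}{2\gamma_0}$ makes $\Delta_{i,t}(\hat y)/\gamma_t$ telescope cleanly into the initial term $\tfrac{\mathbf{E}_{\mathfrak{U}_t}[\|\hat y-x_{i,1}\|^2]}{2\gamma_0}$ of $\tilde h_T$, while $-\tfrac{\mathbf{E}_{\mathfrak{U}_t}[\|\varepsilon_{i,t}^z\|^2]}{2\alpha_t\gamma_t}=-\tfrac{\mathbf{E}_{\mathfrak{U}_t}[\|\varepsilon_{i,t}^z\|^2]}{2\gamma_0}$ supplies the desired left-hand term; the loss gap is lower-bounded by $|f_{i,t}|\le F$ to give $\tfrac{2nF}{\gamma_t}$, the identity $q_{i,t+1}^Tg_{i,t}(x_{i,t})/\gamma_t=\|[g_{i,t}(x_{i,t})]_+\|^2$ furnishes the violation term, and \eqref{lemma5-eqe} together with the $\gamma_0$ threshold again absorbs every quadratic violation contribution into half of $\|[g_{i,t}(x_{i,t})]_+\|^2$, the remaining $\gamma_t^{-1}$- and $\gamma_t^{-2}$-weighted constants assembling into $\tilde h_T(\hat y)$.

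I expect the main obstacle to be precisely this treatment of the estimator gap in \eqref{lemma5-eqa}--\eqref{lemma5-eqb}. Because the two-point estimator obeys only $\|\hat\omega_{i,t+1}\|\le pG_1+pG_2\|q_{i,t+1}\|$, with an extra dimension factor $p$ and a dependence on the dual variable, the cross terms unavoidably generate a quadratic-in-constraint-violation contribution proportional to $\|q_{i,t+1}\|^2=\gamma_t^2\|[g_{i,t}(x_{i,t})]_+\|^2$ that has no analogue in the full-information analysis of \cite{Yi2024}. Closing the argument rests on the exact interplay $q_{i,t+1}^Tg_{i,t}(x_{i,t})=\gamma_t\|[g_{i,t}(x_{i,t})]_+\|^2$, $\|q_{i,t+1}\|^2=\gamma_t^2\|[g_{i,t}(x_{i,t})]_+\|^2$, $\alpha_t\gamma_t=\gamma_0$, and the explicit threshold $\gamma_0\le1/(4(p^2+1)G_2^2)$; making the constants truly close so that the violation terms are absorbed rather than merely bounded is the delicate point, and is exactly what later keeps Slater's condition effective.
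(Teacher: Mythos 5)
Your proposal is correct and follows essentially the same route as the paper's proof: \eqref{lemma5-eqe} via Lemma~1 applied to the update \eqref{Algorithm1-eq4} with the estimator bounds and $\alpha_t\gamma_t=\gamma_0$; \eqref{lemma5-eqc}--\eqref{lemma5-eqd} by reducing pairwise deviations to $\sum_i\|z_{i,t}-\bar z_t\|$ and invoking Lemma~3 with geometric-series and Cauchy--Schwarz bookkeeping; and \eqref{lemma5-eqa}--\eqref{lemma5-eqb} from Lemma~4 by exactly the paper's mechanism of splitting $-\mathbf{E}_{\mathfrak{U}_t}[\|\varepsilon_{i,t}^z\|^2]/(2\alpha_t)$ in half, Young-absorbing the dual-weighted cross terms into $q_{i,t+1}^Tg_{i,t}(x_{i,t})=\gamma_t\|[g_{i,t}(x_{i,t})]_+\|^2$ using $2(p^2+1)G_2^2\gamma_0\le 1/2$, and telescoping $\Delta_{i,t}(\hat y)/\gamma_t$ via $\alpha_t\gamma_t=\gamma_0$. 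Your explicit verification that $x_{i,t}\in(1-\xi_{t+1})\mathbb{X}$ before applying \eqref{lemma1-eq1b} is a point the paper leaves implicit, but it does not change the argument.
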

\begin{proof}
We will show that \eqref{lemma5-eqa}--\eqref{lemma5-eqe} hold in the following, respectively.

\noindent (i) Noting that ${g_{i,t}}( y ) \le {\mathbf{0}_{{m_i}}}$, $\forall i \in [ n ]$, $\forall t \in {\mathbb{N}_ + }$ when $y \in {\mathcal{X}_T}$, summing \eqref{lemma4-eq1} over $t \in [ T ]$ gives
\begin{flalign}
\nonumber
&\;\;\;\;\;\frac{1}{n}\sum\limits_{i = 1}^n {\sum\limits_{t = 1}^T {\big( {{f_{i,t}}( {{x_{i,t}}} ) - {f_{i,t}}( {{y}} )} \big)} } \\
\nonumber
& \le \frac{1}{n}\sum\limits_{i = 1}^n \sum\limits_{t = 1}^T \Big(-q_{i,t + 1}^T{g_{i,t}}( {{x_{i,t}}} ) + \frac{1}{n}{{\tilde \Delta }_t} + {\Delta _{i,t}}( {\hat y} ) \\
&\;\;\; + {G_2}\big( {R( \mathbb{X} ){\xi _t} + {\delta _t}} \big)\| {{q_{i,t + 1}}} \| + {G_1}R( \mathbb{X} ){\xi _t} + {G_1}{\delta _t} \Big). \label{lemma5-eq1}
\end{flalign}

We have
\begin{flalign}
\nonumber
&\;\;\;\;\; \sum\limits_{t = 1}^T {\sum\limits_{i = 1}^n {( {p{G_1} + p{G_2}\| {{q_{i,t + 1}}} \|} ){\mathbf{E}_{{\mathfrak{U}_t}}}[\| {\varepsilon _{i,t}^z} \|]} } \\
& \le \sum\limits_{t = 1}^T {\sum\limits_{i = 1}^n {({2{{p^2{G_1^2} }}{\alpha _t}} + {2p^2G_2^2{\alpha _t}{{\| {{q_{i,t + 1}}} \|}^2}} + \frac{{ {\mathbf{E}_{{\mathfrak{U}_t}}}[{{\| {\varepsilon _{i,t}^z} \|}^2}]}}{{4{\alpha _t}}})} }. \label{lemma5-eq2}
\end{flalign}

We have
\begin{flalign}
\sum\limits_{t = 1}^T {\sum\limits_{i = 1}^n {{G_2}\big( {R( \mathbb{X} ){\xi _t} + {\delta _t}} \big)\| {{q_{i,t + 1}}} \|} }  \le \sum\limits_{t = 1}^T {\sum\limits_{i = 1}^n {\big( {{2G_2^2{\alpha _t}{{\| {{q_{i,t + 1}}} \|}^2}} + \frac{{ R{{( \mathbb{X} )}^2}\xi _t^2}}{{4{\alpha _t}}} + \frac{{ \delta _t^2}}{{4{\alpha _t}}}} \big)} }. \label{lemma5-eq3}
\end{flalign}

From \eqref{Algorithm1-eq2}, for all $t \in {\mathbb{N}_ + }$, we have
\begin{flalign}
\| {{q_{i,t + 1}}} \| = {\gamma _t}\| {{{[ {{g_{i,t}}( {{x_{i,t}}} )} ]}_ + }} \|. \label{lemma5-eq4}
\end{flalign}

Then, we have
\begin{flalign}
{2(p^2+1)G_2^2{\alpha _t}{{\| {{q_{i,t + 1}}} \|}^2}} - q_{i,t + 1}^T{g_{i,t}}( {{x_{i,t}}} ) = ( {{2(p^2+1)G_2^2{\gamma _0}} - 1} ){\gamma _t}{\| {{{[ {{g_{i,t}}( {{x_{i,t}}} )} ]}_ + }} \|^2} \le 0. \label{lemma5-eq5}
\end{flalign}
where the equality holds due to the fact that $[ b ]_ + ^Tb = {\| {{{[ b ]}_ + }} \|^2}$ for any vector $b$ and the inequality holds due to ${\gamma _0} \le 1 /( {2(p^2+1)G_2^2} )$.

Combining \eqref{lemma5-eq1}--\eqref{lemma5-eq3} and \eqref{lemma5-eq5} yields \eqref{lemma5-eqa}.

\noindent (ii) From \eqref{ass2-eq1a}, we have
\begin{flalign}
{f_{i,t}}( y ) - {f_{i,t}}( {{x_{i,t}}} )  \le 2F, \forall y \in \mathbb{X}. \label{lemma5-2-eq1}
\end{flalign}

Dividing \eqref{lemma4-eq1} by ${{\gamma _t}}$, using \eqref{lemma5-2-eq1}, and summing over $t \in [ T ]$ gives
\begin{flalign}
\nonumber
&\;\;\;\;\;\sum\limits_{i = 1}^n {\sum\limits_{t = 1}^T {\frac{{q_{i,t + 1}^T{g_{i,t}}( {{x_{i,t}}} )}}{{{\gamma _t}}}} } \\
\nonumber
&\;\; \le {h_T}( y ) + \sum\limits_{t = 1}^T {\frac{{2nF}}{{{\gamma _t}}}}  + \sum\limits_{t = 1}^T {\frac{{{{\tilde \Delta }_t}}}{{{\gamma _t}}}}  + \sum\limits_{i = 1}^n {\sum\limits_{t = 1}^T {\frac{{{\Delta _{i,t}}( {\hat y} )}}{{{\gamma _t}}}} }  \\
&\;\; + \sum\limits_{i = 1}^n {\sum\limits_{t = 1}^T {\frac{{{G_2}\big( {R( \mathbb{X} ){\xi _t} + {\delta _t}} \big)\| {{q_{i,t + 1}}} \|}}{{{\gamma _t}}}} }  + \sum\limits_{t = 1}^T {\frac{{n{G_1}R( \mathbb{X} ){\xi _t} + n{G_1}{\delta _t}}}{{{\gamma _t}}}}. \label{lemma5-2-eq2}
\end{flalign}

We have
\begin{flalign}
\nonumber
&\;\;\;\;\; \sum\limits_{i = 1}^n {\sum\limits_{t = 1}^T {\frac{{( {p{G_1} + p{G_2}\| {{q_{i,t + 1}}} \|} ){\mathbf{E}_{{\mathfrak{U}_t}}}[\| {\varepsilon _{i,t}^z} \|]}}{{{\gamma _t}}}} } \\
&\;\; \le \sum\limits_{i = 1}^n {\sum\limits_{t = 1}^T {\big( {\frac{{2{ p^2{G_1^2} }{\gamma _0}}}{{\gamma _t^2}} + \frac{{2p^2G_2^2{\gamma _0}{{\| {{q_{i,t + 1}}} \|}^2}}}{{\gamma _t^2}} + \frac{{ {\mathbf{E}_{{\mathfrak{U}_t}}}[{{\| {\varepsilon _{i,t}^z} \|}^2}]}}{{4{\gamma _0}}}} \big)} }. \label{lemma5-2-eq3}
\end{flalign}

We have
\begin{flalign}
\sum\limits_{i = 1}^n {\sum\limits_{t = 1}^T {\frac{{{G_2}\big( {R( \mathbb{X} ){\xi _t} + {\delta _t}} \big)\| {{q_{i,t + 1}}} \|}}{{{\gamma _t}}}} } 
\le \sum\limits_{i = 1}^n {\sum\limits_{t = 1}^T {\big( {\frac{{2G_2^2{\gamma _0}{{\| {{q_{i,t + 1}}} \|}^2}}}{{\gamma _t^2}} + \frac{{R{{( \mathbb{X} )}^2}\xi _t^2}}{{4{\gamma _0}}} + \frac{{\delta _t^2}}{{4{\gamma _0}}}} \big)} }. \label{lemma5-2-eq4}
\end{flalign}

It follows from ${\gamma _t}{\alpha _t} = {\gamma _0}$ and \eqref{ass1-eq1} that
\begin{flalign}
\sum\limits_{t = 1}^T {\frac{{{\Delta _{i,t}}( {\hat y} )}}{{{\gamma _t}}}}  = \sum\limits_{t = 1}^T {\frac{1}{{{2\gamma _0}}}} {\mathbf{E}_{{\mathfrak{U}_t}}}[{{{\| {\hat y - {x_{i,t}}} \|}^2} - {{\| {\hat y - {x_{i,t + 1}}} \|}^2}}] 
\le \frac{{\mathbf{E}_{{\mathfrak{U}_t}}}[{{{\| {\hat y - {x_{i,1}}} \|}^2}}]}{{{2\gamma _0}}}. \label{lemma5-2-eq5}
\end{flalign}

From \eqref{Algorithm1-eq2} and the fact that $[ b ]_ + ^Tb = {\| {{{[ b ]}_ + }} \|^2}$ for any vector $b$, we have
\begin{flalign}
\frac{{q_{i,t + 1}^T{g_{i,t}}( {{x_{i,t}}} )}}{{{\gamma _t}}} = [ {{g_{i,t}}( {{x_{i,t}}} )} ]_ + ^T{g_{i,t}}( {{x_{i,t}}} ) = {\| [ {{g_{i,t}}( {{x_{i,t}}} )} ]_ + \|^2}. \label{lemma5-2-eq6}
\end{flalign}

Combining \eqref{lemma5-2-eq2}--\eqref{lemma5-2-eq6} and \eqref{lemma5-eq4}, and using $1/2 \ge 2(p^2+1)G_2^2{\gamma _0}$ yields \eqref{lemma5-eqb}.

\noindent (iii) From \eqref{Algorithm1-eq1}, ${\|  \cdot  \|}$ is convex, and $\sum\nolimits_{i = 1}^n {{{\left[ {{W_t}} \right]}_{ij}}}  = \sum\nolimits_{j = 1}^n {{{\left[ {{W_t}} \right]}_{ij}}}  = 1$, we have
\begin{flalign}
\nonumber
\frac{1}{n}\sum\limits_{i = 1}^n {\sum\limits_{j = 1}^n {\| {{x_{i,t}} - {x_{j,t}}} \|} }  &= \frac{1}{n}\sum\limits_{i = 1}^n {\sum\limits_{j = 1}^n {\Big\| {\sum\limits_{l = 1}^n {{{[ {{W_t}} ]}_{il}}{z_{l,t}}}  - {{\bar z}_t} + {{\bar z}_t} - \sum\limits_{l = 1}^n {{{[ {{W_t}} ]}_{j,l}}} {z_{l,t}}} \Big\|} } \\
\nonumber
& \le 2\sum\limits_{i = 1}^n {\Big\| {\sum\limits_{j = 1}^n {{{[ {{W_t}} ]}_{ij}}{z_{j,t}}}  - {{\bar z}_t}} \Big\|}  = 2\sum\limits_{i = 1}^n {\Big\| {\sum\limits_{j = 1}^n {{{[ {{W_t}} ]}_{ij}}( {{z_{j,t}} - {{\bar z}_t}} )} } \Big\|} \\
& \le 2\sum\limits_{i = 1}^n {\sum\limits_{j = 1}^n {{{[ {{W_t}} ]}_{ij}}\| {{z_{j,t}} - {{\bar z}_t}} \|} }  = 2\sum\limits_{i = 1}^n {\| {{z_{i,t}} - {{\bar z}_t}} \|}. \label{lemma5-3-eq1}
\end{flalign}

We have
\begin{flalign}
\sum\limits_{t = 3}^T {\sum\limits_{s = 1}^{t - 2} {{\lambda ^{t - s - 2}}} } \sum\limits_{j = 1}^n {\| {\varepsilon _{j,s}^z} \|}  = \sum\limits_{t = 1}^{T - 2} {\sum\limits_{j = 1}^n {\| {\varepsilon _{j,t}^z} \|} } \sum\limits_{s = 0}^{T - t - 2} {{\lambda ^s}}  \le \frac{1}{{1 - \lambda }}\sum\limits_{t = 1}^{T - 2} {\sum\limits_{j = 1}^n {\| {\varepsilon _{j,t}^z} \|} }. \label{lemma5-3-eq2}
\end{flalign}

We have
\begin{flalign}
\nonumber
&\;\;\;\;\; \frac{1}{n}\sum\limits_{t = 1}^T {\sum\limits_{i = 1}^n {\sum\limits_{j = 1}^n {\| {{x_{i,t}} - {x_{j,t}}} \|} } } \\
\nonumber
& \le \sum\limits_{t = 1}^T {\sum\limits_{i = 1}^n {2\| {{z_{i,t}} - {{\bar z}_t}} \|} } \\
\nonumber
& \le \sum\limits_{t = 1}^T {\sum\limits_{i = 1}^n {2\tau {\lambda ^{t - 2}}\sum\limits_{j = 1}^n {\| {{z_{j,1}}} \|} } }  + \sum\limits_{t = 2}^T {\sum\limits_{i = 1}^n {2\Big( {\frac{1}{n}\sum\limits_{j = 1}^n {\| {\varepsilon _{j,t - 1}^z} \|}  + \| {\varepsilon _{i,t - 1}^z} \|} \Big)} } \\
\nonumber
&\;\; + \sum\limits_{t = 3}^T {\sum\limits_{i = 1}^n {2\tau } \sum\limits_{s = 1}^{t - 2} {{\lambda ^{t - s - 2}}} \sum\limits_{j = 1}^n {\| {\varepsilon _{j,s}^z} \|} } \\
& \le n{\tilde{\varepsilon} _1} + \sum\limits_{t = 2}^T {\sum\limits_{i = 1}^n {4\| {\varepsilon _{i,t - 1}^z} \|} }  + \frac{{2n\tau }}{{1 - \lambda }}\sum\limits_{t = 1}^{T - 2} {\sum\limits_{j = 1}^n {\| {\varepsilon _{j,t}^z} \|} }, \label{lemma5-3-eq3}
\end{flalign}
where the first inequality holds due to \eqref{lemma5-3-eq1}; the second inequality holds due to \eqref{lemma3-eq1}; and the last inequality holds due to \eqref{lemma5-3-eq2}.\
It follows from \eqref{lemma5-3-eq3} that \eqref{lemma5-eqc} holds.

\noindent (iv) Similar to the way to get \eqref{lemma5-3-eq1}, from \eqref{Algorithm1-eq1}, ${\|  \cdot  \|^2}$ is convex, and $\sum\nolimits_{i = 1}^n {{{\left[ {{W_t}} \right]}_{ij}}}  = \sum\nolimits_{j = 1}^n {{{\left[ {{W_t}} \right]}_{ij}}}  = 1$, we have
\begin{flalign}
\frac{1}{n}\sum\limits_{i = 1}^n {\sum\limits_{j = 1}^n {{{\| {{x_{i,t}} - {x_{j,t}}} \|}^2}} }  \le \sum\limits_{i = 1}^n {4{{\| {{z_{i,t}} - {{\bar z}_t}} \|}^2}}. \label{lemma5-4-eq1}
\end{flalign}

From \eqref{lemma3-eq1}, we have
\begin{flalign}
\nonumber
&\;\;\;\;\; 4\sum\limits_{i = 1}^n {\sum\limits_{t = 1}^T {{{\| {{z_{i,t}} - {{\bar z}_t}} \|}^2}} } \\
\nonumber
& \le 4\sum\limits_{i = 1}^n {\sum\limits_{t = 1}^T {{\Big( {\tau {\lambda ^{t - 2}}\sum\limits_{j = 1}^n {\| {{z_{j,1}}} \|}  + \| {\varepsilon _{i,t - 1}^z} \| + \frac{1}{n}\sum\limits_{j = 1}^n {\| {\varepsilon _{j,t - 1}^z} \|}  + \tau \sum\limits_{s = 1}^{t - 2} {{\lambda ^{t - s - 2}}\sum\limits_{j = 1}^n {\| {\varepsilon _{j,s}^z} \|} } } \Big)^2}} } \\
\nonumber
& \le 16\sum\limits_{i = 1}^n {\sum\limits_{t = 1}^T {\Big( {{\Big( {\tau {\lambda ^{t - 2}}\sum\limits_{j = 1}^n {\| {{z_{j,1}}} \|} } \Big)^2} + {{\| {\varepsilon _{i,t - 1}^z} \|}^2} + {\Big( {\frac{1}{n}\sum\limits_{j = 1}^n {\| {\varepsilon _{j,t - 1}^z} \|} } \Big)^2} + {\Big( {\tau \sum\limits_{s = 1}^{t - 2} {{\lambda ^{t - s - 2}}\sum\limits_{j = 1}^n {\| {\varepsilon _{j,s}^z} \|} } } \Big)^2}} \Big)} } \\
\nonumber
& \le 16\sum\limits_{i = 1}^n {\sum\limits_{t = 1}^T {\Big( {{\Big( {\tau {\lambda ^{t - 2}}\sum\limits_{j = 1}^n {\| {{z_{j,1}}} \|} } \Big)^2} + {\| {\varepsilon _{i,t - 1}^z} \|^2} + \frac{1}{n}\sum\limits_{j = 1}^n {{\| {\varepsilon _{j,t - 1}^z} \|^2}}  + {\tau ^2}\sum\limits_{s = 1}^{t - 2} {{\lambda ^{t - s - 2}}\sum\limits_{s = 1}^{t - 2} {{\lambda ^{t - s - 2}}} {\Big( {\sum\limits_{j = 1}^n {\| {\varepsilon _{j,s}^z} \|} } \Big)^2}} } \Big)} } \\
\nonumber
& \le 16\sum\limits_{i = 1}^n {\sum\limits_{t = 1}^T {\Big( {{\Big( {\tau {\lambda ^{t - 2}}\sum\limits_{j = 1}^n {\| {{z_{j,1}}} \|} } \Big)^2} + 2{\| {\varepsilon _{i,t - 1}^z} \|^2} + \frac{{n{\tau ^2}}}{{1 - \lambda }}\sum\limits_{s = 1}^{t - 2} {{\lambda ^{t - s - 2}}} \sum\limits_{j = 1}^n {{\| {\varepsilon _{j,s}^z} \|^2}} } \Big)} } \\
\nonumber
& = 16\sum\limits_{i = 1}^n {\sum\limits_{t = 1}^T {\Big( {{\Big( {\tau {\lambda ^{t - 2}}\sum\limits_{j = 1}^n {\| {{z_{j,1}}} \|} } \Big)^2} + 2{\| {\varepsilon _{i,t - 1}^z} \|^2}} \Big)} }  + \frac{{16{n^2}{\tau ^2}}}{{1 - \lambda }}\sum\limits_{j = 1}^n {\sum\limits_{t = 1}^{T - 2} {{{\| {\varepsilon _{j,t}^z} \|}^2}\sum\limits_{s = 0}^{T - t - 2} {{\lambda ^s}} } } \\
& \le {{\tilde \varepsilon }_3} + {{\tilde \varepsilon }_4}\sum\limits_{t = 1}^T {\sum\limits_{i = 1}^n {{{\| {\varepsilon _{i,t}^z} \|}^2}} }, \label{lemma5-4-eq2}
\end{flalign}
where the third inequality holds due to the H\"{o}lder’s inequality. It follows form \eqref{lemma5-4-eq1}--\eqref{lemma5-4-eq2} that \eqref{lemma5-eqd} holds.

\noindent (v)
Applying \eqref{lemma1-eq1b} to update \eqref{Algorithm1-eq4} gives
\begin{flalign}
\nonumber
{\mathbf{E}_{{\mathfrak{U}_t}}}[\| {\varepsilon _{i,t}^z} \|] &= {\mathbf{E}_{{\mathfrak{U}_t}}}[\| {{z_{i,t + 1}} - {x_{i,t}}} \|] \le {{\alpha _t}{\mathbf{E}_{{\mathfrak{U}_t}}}[\| {{{\hat \omega }_{i,t + 1}}} \|]} \\
\nonumber
& = {{\alpha _t}{\mathbf{E}_{{\mathfrak{U}_t}}}[\| {\hat \partial {f_{i,t}}( {{x_{i,t}}} ) + \hat \partial {g_{i,t}}( {{x_{i,t}}} ){q_{i,t + 1}}} \|]} \\
\nonumber
& \le {{\alpha _t}}\big( {p{G_1} + p{G_2}{\gamma _t}\| {{{[ {{g_{i,t}}( {{x_{i,t}}} )} ]}_ + }} \|} \big) \\
& = {p{G_1}{\alpha _t} + p{G_2}{\gamma _0}\| {{{[ {{g_{i,t}}( {{x_{i,t}}} )} ]}_ + }} \|}. \label{lemma5-5-eq1}
\end{flalign}
where the second equality holds due to \eqref{Algorithm1-eq3}; the last inequality holds due to \eqref{lemma2-eq1c}, \eqref{lemma2-eq1f} and \eqref{lemma5-eq4}; the last equality holds due to ${\gamma _t}{\alpha _t} = {\gamma _0}$. It follows form \eqref{lemma5-5-eq1} that \eqref{lemma5-eqe} holds.
\end{proof}

Next, network regret and cumulative constraint violation bounds for the general cases are provided in the following lemma.
\begin{lemma}\label{lem6}
Under the same condition as stated in Lemma~5, for any $T \in {\mathbb{N}_ + }$, it holds that
\begin{flalign}
& {\mathbf{E}}\Big[\frac{1}{n}\sum\limits_{i = 1}^n {\sum\limits_{t = 1}^T {{f_t}( {{x_{i,t}}} )} }  - \sum\limits_{t = 1}^T {{f_t}( y )}\Big]
\le {\varepsilon _1} {{G_1}}  + {n_T} + \frac{1}{n}\sum\limits_{i = 1}^n {\sum\limits_{t = 1}^T {\mathbf{E}}[{{\Delta _{i,t}}( {\hat y} )}] }, \label{lemma6-eqa}\\
& {\mathbf{E}}\Big[\frac{1}{n}\sum\limits_{i = 1}^n {\sum\limits_{t = 1}^T {\| {{{[ {{g_t}( {{x_{i,t}}} )} ]}_ + }} \|} }\Big]  \le \sqrt {2G_2^2{{\tilde \varepsilon }_3}T + {\varepsilon _2}T{\mathbf{E}}[{{\tilde h}_T}( {\hat y} )}], \label{lemma6-eqb}\\
& {\mathbf{E}}\Big[\frac{1}{n}\sum\limits_{i = 1}^n {\sum\limits_{t = 1}^T {\| {{{[ {{g_t}( {{x_{i,t}}} )} ]}_ + }} \|} }\Big]  \le n{G_2}{\tilde{\varepsilon} _1} + {\varepsilon _3}\sum\limits_{t = 1}^T {{\alpha _t}}  + {\varepsilon _4}\sum\limits_{t = 1}^T {\sum\limits_{i = 1}^n {\| {{{[ {{g_{i,t}}( {{x_{i,t}}} )} ]}_ + }} \|} }, \label{lemma6-eqc}
\end{flalign}
where
\begin{flalign}
\nonumber
&{n_T} = \sum\limits_{t = 1}^T {{\tilde \varepsilon _2^2{{ {{G_1^2} } }}{\alpha _t}}}  + {m_T}, {\varepsilon _2} = \frac{{4\max \{ {1,G_2^2{{\tilde \varepsilon }_4}} \}}}{{\min \{ {1,\frac{1 }{{2{\gamma _0}}}} \}}}, \\
\nonumber
&{\varepsilon _3} = {np{{G_1} }{G_2}{{\tilde \varepsilon }_2}}, {\varepsilon _4} = {pG_2^2{{\tilde \varepsilon }_2}{\gamma _0} + 1 }.
\end{flalign}
\end{lemma}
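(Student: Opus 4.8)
The three inequalities share a common strategy: transfer every \emph{network} quantity (the global loss $f_t$ and the global constraint $g_t$ evaluated at each agent's iterate $x_{i,t}$) to the corresponding \emph{per-agent} quantities already controlled by Lemma~5, paying for the transfer with consensus errors that are in turn absorbed through \eqref{lemma5-eqc}, \eqref{lemma5-eqd} and \eqref{lemma5-eqe}.

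For the regret bound \eqref{lemma6-eqa}, I would first write $f_t(x_{i,t})=\frac1n\sum_{j=1}^n f_{j,t}(x_{i,t})$ and use the Lipschitz property \eqref{ass4-eq2a} to get $f_{j,t}(x_{i,t})\le f_{j,t}(x_{j,t})+G_1\|x_{i,t}-x_{j,t}\|$. Averaging over $i$ and subtracting $f_t(y)=\frac1n\sum_j f_{j,t}(y)$ turns the network regret into the local regret $\frac1n\sum_j(f_{j,t}(x_{j,t})-f_{j,t}(y))$ plus a consensus term $\frac{G_1}{n^2}\sum_{i,j}\|x_{i,t}-x_{j,t}\|$. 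After summing over $t$ and taking expectations, I would bound the local-regret part by \eqref{lemma5-eqa} (which crucially retains the nonnegative term $\frac1n\sum_{i,t}\frac{\mathbf{E}_{\mathfrak{U}_t}[\|\varepsilon_{i,t}^z\|^2]}{4\alpha_t}$ on its left-hand side) and the consensus part by \eqref{lemma5-eqc}. The residual $\frac{G_1\tilde\varepsilon_2}{n}\sum_{i,t}\|\varepsilon_{i,t}^z\|$ produced by \eqref{lemma5-eqc} is then eliminated termwise by Young's inequality, $G_1\tilde\varepsilon_2\|\varepsilon_{i,t}^z\|\le G_1^2\tilde\varepsilon_2^2\alpha_t+\frac{\|\varepsilon_{i,t}^z\|^2}{4\alpha_t}$, so that its quadratic part cancels exactly against the retained term while its remainder accumulates into $\sum_t \tilde\varepsilon_2^2 G_1^2\alpha_t$; together with $m_T$ this gives $n_T$, and the $G_1\tilde\varepsilon_1$ leftover is the $\varepsilon_1 G_1$ term in \eqref{lemma6-eqa} (with $\varepsilon_1=\tilde\varepsilon_1$).

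For both constraint-violation bounds I would exploit the block structure $g_t=\mathrm{col}(g_{1,t},\dots,g_{n,t})$, the nonexpansiveness $\|[a]_+-[b]_+\|\le\|a-b\|$, and the Lipschitz bound \eqref{ass4-eq2c} to obtain, for each block, $\|[g_{j,t}(x_{i,t})]_+\|\le\|[g_{j,t}(x_{j,t})]_+\|+G_2\|x_{i,t}-x_{j,t}\|$. For the ``linear'' bound \eqref{lemma6-eqc} I would pass from the Euclidean norm of the stacked vector to the sum of block norms via $\|\cdot\|_2\le\|\cdot\|_1$, average over $i$, sum over $t$, and control the consensus sum by \eqref{lemma5-eqc}; substituting the step-error bound \eqref{lemma5-eqe} splits the result into the $\varepsilon_3\sum_t\alpha_t$ term and a multiple of $\sum_{i,t}\|[g_{i,t}(x_{i,t})]_+\|$, which merges with the leading local-violation sum to produce the coefficient $\varepsilon_4=1+pG_2^2\tilde\varepsilon_2\gamma_0$. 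For the square-root bound \eqref{lemma6-eqb} I would instead work with squares, using $(a+b)^2\le2a^2+2b^2$ to get $\|[g_t(x_{i,t})]_+\|^2\le2\sum_j\|[g_{j,t}(x_{j,t})]_+\|^2+2G_2^2\sum_j\|x_{i,t}-x_{j,t}\|^2$, and then apply two successive Cauchy--Schwarz steps (first over $i$, then over $t$) together with Jensen's inequality to show that the left-hand side of \eqref{lemma6-eqb} is at most $\sqrt{T\cdot\frac1n\sum_{i,t}\mathbf{E}[\|[g_t(x_{i,t})]_+\|^2]}$. The squared consensus term is handled by \eqref{lemma5-eqd}, and the decisive input is \eqref{lemma5-eqb}: since $y\in\mathcal X_T$ forces $g_{i,t}(y)\le\mathbf{0}_{m_i}$ and hence $h_T(y)\le0$, combining \eqref{lemma5-eqb} with \eqref{lemma5-2-eq6} bounds $P:=\sum_{i,t}\mathbf{E}[\|[g_{i,t}(x_{i,t})]_+\|^2]$ and $Q:=\sum_{i,t}\mathbf{E}[\|\varepsilon_{i,t}^z\|^2]$ jointly through $P+Q/(2\gamma_0)\le2\mathbf{E}[\tilde h_T(\hat y)]$. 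I would then estimate the target combination $2P+2G_2^2\tilde\varepsilon_4 Q$ by factoring out $\max\{1,G_2^2\tilde\varepsilon_4\}$ and using $\min\{1,1/(2\gamma_0)\}(P+Q)\le P+Q/(2\gamma_0)$, which delivers precisely the constant $\varepsilon_2$.

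The main obstacle is the last bound: one must simultaneously juggle the $\ell_2$/$\ell_1$ passage across the $n$ constraint blocks without losing a factor of $n$, apply Cauchy--Schwarz in the correct order to expose the $\sqrt T$ while keeping Jensen's inequality valid under the expectation, and weight the violation-squared sum $P$ against the step-error-squared sum $Q$ using the single joint budget in \eqref{lemma5-eqb} so that the $\max/\min$ structure of $\varepsilon_2$ emerges cleanly. Correctly propagating the feasibility of $y$ to obtain $h_T(y)\le0$ is what ultimately makes the Slater-free estimate \eqref{lemma6-eqb} possible, while the parallel linear estimate \eqref{lemma6-eqc} is kept in reserve for the sharper Slater-based refinement.
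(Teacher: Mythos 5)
Your proposal is correct and follows essentially the same route as the paper's proof: the Lipschitz/nonexpansive transfer from network to local quantities, the Young-inequality cancellation against the retained $\frac{\|\varepsilon_{i,t}^z\|^2}{4\alpha_t}$ term in \eqref{lemma5-eqa}, the Cauchy--Schwarz/Jensen passage to squares combined with \eqref{lemma5-eqb}, \eqref{lemma5-2-eq6} and $h_T(y)\le 0$ to produce $\varepsilon_2$, and the linear $\ell_2\le\ell_1$ block argument with \eqref{lemma5-eqc} and \eqref{lemma5-eqe} for \eqref{lemma6-eqc}. You also correctly identify $\varepsilon_1=\tilde\varepsilon_1$ and recover the exact constants $\varepsilon_2,\varepsilon_3,\varepsilon_4$, so no gaps remain.
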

\begin{proof}
We will show that \eqref{lemma6-eqa}--\eqref{lemma6-eqc} hold in the following, respectively.

\noindent (i)
From ${f_t}( x ) = \frac{1}{n}\sum\nolimits_{j = 1}^n {{f_{j,t}}( x )} $, we have
\begin{flalign}
\nonumber
\sum\limits_{i = 1}^n {{f_t}( {{x_{i,t}}} )}  &= \frac{1}{n}\sum\limits_{i = 1}^n {\sum\limits_{j = 1}^n {{f_{j,t}}( {{x_{i,t}}} )} }  = \frac{1}{n}\sum\limits_{i = 1}^n {\sum\limits_{j = 1}^n {{f_{j,t}}( {{x_{j,t}}} )} }  + \frac{1}{n}\sum\limits_{i = 1}^n {\sum\limits_{j = 1}^n {\big( {{f_{j,t}}( {{x_{i,t}}} ) - {f_{j,t}}( {{x_{j,t}}} )} \big)} } \\
\nonumber
& = \sum\limits_{i = 1}^n {{f_{i,t}}( {{x_{i,t}}} )}  + \frac{1}{n}\sum\limits_{i = 1}^n {\sum\limits_{j = 1}^n {\big( {{f_{j,t}}( {{x_{i,t}}} ) - {f_{j,t}}( {{x_{j,t}}} )} \big)} } \\
& \le \sum\limits_{i = 1}^n {{f_{i,t}}( {{x_{i,t}}} )}  + \frac{1}{n}\sum\limits_{i = 1}^n {\sum\limits_{j = 1}^n {G_1\| {{x_{i,t}} - {x_{j,t}}} \|} }, \label{lemma6-1-eq1}
\end{flalign}
where the inequality holds due to \eqref{ass4-eq2a}.

From \eqref{lemma5-eqc}, we have
\begin{flalign}
\frac{1}{n}\sum\limits_{t = 1}^T {\sum\limits_{i = 1}^n {\sum\limits_{j = 1}^n {{G_1}\| {{x_{i,t}} - {x_{j,t}}} \|} } } \le n{\varepsilon _1} {{G_1} } + \sum\limits_{t = 1}^T {\sum\limits_{i = 1}^n {\Big( {{\tilde \varepsilon _2^2{{ {{G_1^2}} }}{\alpha _t}} + \frac{{ {{\| {\varepsilon _{i,t}^z} \|}^2}}}{{4{\alpha _t}}}} \Big)} }. \label{lemma6-1-eq2}
\end{flalign}

Combining \eqref{lemma6-1-eq1}--\eqref{lemma6-1-eq2} and \eqref{lemma5-eqa} yields \eqref{lemma6-eqa}.

\noindent (ii)
We have
\begin{flalign}
\nonumber
{\| {{{[ {{g_{i,t}}( {{x_{i,t}}} )} ]}_ + }} \|^2} &= {\| {{{[ {{g_{i,t}}( {{x_{i,t}}} )} ]}_ + } - {{[ {{g_{i,t}}( {{x_{j,t}}} )} ]}_ + } + {{[ {{g_{i,t}}( {{x_{j,t}}} )} ]}_ + }} \|^2} \\
\nonumber
& \ge \frac{1}{2}{\| {{{[ {{g_{i,t}}( {{x_{j,t}}} )} ]}_ + }} \|^2} - {\| {{{[ {{g_{i,t}}( {{x_{i,t}}} )} ]}_ + } - {{[ {{g_{i,t}}( {{x_{j,t}}} )} ]}_ + }} \|^2} \\
\nonumber
& \ge \frac{1}{2}{\| {{{[ {{g_{i,t}}( {{x_{j,t}}} )} ]}_ + }} \|^2} - {\| {{g_{i,t}}( {{x_{i,t}}} ) - {g_{i,t}}( {{x_{j,t}}} )} \|^2} \\
& \ge \frac{1}{2}{\| {{{[ {{g_{i,t}}( {{x_{j,t}}} )} ]}_ + }} \|^2} - G_2^2{\| {{x_{i,t}} - {x_{j,t}}} \|^2}, \label{lemma6-2-eq1}
\end{flalign}
where the second and the third inequalities hold due to the nonexpansive property of the projection ${[  \cdot  ]_ + }$ and \eqref{ass4-eq2c}, respectively.

From ${g_t}( x ) = {\rm{col}}\big( {{g_{1,t}}( x ), \cdot  \cdot  \cdot ,{g_{n,t}}( x )} \big)$, we have
\begin{flalign}
\sum\limits_{t = 1}^T {\sum\limits_{i = 1}^n {\sum\limits_{j = 1}^n {{{\| {{{[ {{g_{i,t}}( {{x_{j,t}}} )} ]}_ + }} \|}^2}} } }  = \sum\limits_{t = 1}^T {\sum\limits_{j = 1}^n {{{\| {{{[ {{g_t}( {{x_{j,t}}} )} ]}_ + }} \|}^2}} }. \label{lemma6-2-eq2}
\end{flalign}

Summing \eqref{lemma6-2-eq1} over $i,j \in [ n ],t \in [ T ]$, dividing by $n$, and using \eqref{lemma6-2-eq2} and \eqref{lemma5-eqd} gives
\begin{flalign}
\frac{1}{n}\sum\limits_{t = 1}^T {\sum\limits_{j = 1}^n {{{\| {{{[ {{g_t}( {{x_{j,t}}} )} ]}_ + }} \|}^2}} }  \le 2G_2^2{{\tilde \varepsilon }_3} + \sum\limits_{t = 1}^T {\sum\limits_{i = 1}^n {2\big( {{{\| {{{[ {{g_{i,t}}( {{x_{i,t}}} )} ]}_ + }} \|}^2} + G_2^2{{\tilde \varepsilon }_4}{{\| {\varepsilon _{i,t}^z} \|}^2}} \big)} }. \label{lemma6-2-eq3}
\end{flalign}

From ${g_{i,t}}( y ) \le {\mathbf{0}_{{m_i}}}, \forall i \in [ n ], \forall t \in {\mathbb{N}_ + }$, when $y \in {\mathcal{X}_T}$, we have
\begin{flalign}
{h_T}\left( y \right) \le 0. \label{lemma6-2-eq4}
\end{flalign}

Combining \eqref{lemma6-2-eq3}, \eqref{lemma6-2-eq4} and \eqref{lemma5-eqb} gives
\begin{flalign}
{\mathbf{E}}\Big[\frac{1}{n}\sum\limits_{i = 1}^n {\sum\limits_{t = 1}^T {{{\| {{{[ {{g_t}( {{x_{i,t}}} )} ]}_ + }} \|}^2}} }\Big]  \le 2G_2^2{{\tilde \varepsilon }_3} + {\varepsilon _2}{\mathbf{E}}[{{\tilde h}_T}( {\hat y} )], \forall y \in {\mathcal{X}_T}. \label{lemma6-2-eq5}
\end{flalign}

We have
\begin{flalign}
{\Big( {\frac{1}{n}\sum\limits_{i = 1}^n {\sum\limits_{t = 1}^T {\| {{{[ {{g_t}( {{x_{i,t}}} )} ]}_ + }} \|} } } \Big)^2} \le \frac{T}{n}\sum\limits_{i = 1}^n {\sum\limits_{t = 1}^T {{{\| {{{[ {{g_t}( {{x_{i,t}}} )} ]}_ + }} \|}^2}} }. \label{lemma6-2-eq6}
\end{flalign}

Combining \eqref{lemma6-2-eq5} and \eqref{lemma6-2-eq6} yields \eqref{lemma6-eqb}.

\noindent (iii)
From \eqref{lemma5-eqc} and \eqref{lemma5-eqe}, we have
\begin{flalign}
\nonumber
{\mathbf{E}}\Big[\frac{1}{n}\sum\limits_{t = 1}^T {\sum\limits_{i = 1}^n {\sum\limits_{j = 1}^n {\| {{x_{i,t}} - {x_{j,t}}} \|} } }\Big]  &\le n{\tilde{\varepsilon} _1} + {{\tilde \varepsilon }_2}\sum\limits_{t = 1}^T {\sum\limits_{i = 1}^n {\mathbf{E}_{{\mathfrak{U}_t}}}[{\| {\varepsilon _{i,t}^z} \|}] } \\
& \le n{\tilde{\varepsilon} _1} + {{{\tilde \varepsilon }_2}}\sum\limits_{t = 1}^T {\sum\limits_{i = 1}^n {\big( {p{G_1}{\alpha _t} + p{G_2}{\gamma _0}\| {{{[ {{g_{i,t}}( {{x_{i,t}}} )} ]}_ + }} \|} \big)} }. \label{lemma6-3-eq1}
\end{flalign}

From \eqref{ass4-eq2c}, we have
\begin{flalign}
\nonumber
&\;\;\;\;\; \frac{1}{n}\sum\limits_{j = 1}^n {\sum\limits_{t = 1}^T {\| {{{[ {{g_t}( {{x_{j,t}}} )} ]}_ + }} \|} }  \le \frac{1}{n}\sum\limits_{i = 1}^n {\sum\limits_{j = 1}^n {\sum\limits_{t = 1}^T {\| {{{[ {{g_{i,t}}( {{x_{j,t}}} )} ]}_ + }} \|} } } \\
\nonumber
& = \frac{1}{n}\sum\limits_{t = 1}^T {\sum\limits_{i = 1}^n {\sum\limits_{j = 1}^n {\| {{{[ {{g_{i,t}}( {{x_{i,t}}} )} ]}_ + } + {{[ {{g_{i,t}}( {{x_{j,t}}} )} ]}_ + } - {{[ {{g_{i,t}}( {{x_{i,t}}} )} ]}_ + }} \|} } } \\
& \le \frac{1}{n}\sum\limits_{t = 1}^T {\sum\limits_{i = 1}^n {\sum\limits_{j = 1}^n {\big( {\| {{{[ {{g_{i,t}}( {{x_{i,t}}} )} ]}_ + }} \| + {G_2}\| {{x_{i,t}} - {x_{j,t}}} \|} \big)} } }. \label{lemma6-3-eq2}
\end{flalign}

Combining \eqref{lemma6-3-eq1} and \eqref{lemma6-3-eq2} yields \eqref{lemma6-eqc}.
\end{proof}

\hspace{-3mm}\emph{B. Proof of Theorem~1}

We will show that \eqref{theorem1-eq2} and \eqref{theorem1-eq3} in Theorem~1 hold in the following, respectively.

\noindent (i)
From \eqref{ass1-eq1}, we have
\begin{flalign}
\nonumber
&\;\;\sum\limits_{t = 1}^T {\frac{1}{{{\alpha _t}}}} \big( {{{\| {\hat y - {x_{i,t}}} \|}^2} - {{\| {\hat y - {x_{i,t + 1}}} \|}^2}} \big)\\
\nonumber
& = \sum\limits_{t = 1}^T {\Big( {\frac{1}{{{\alpha _{t - 1}}}}{{\| {\hat y - {x_{i,t}}} \|}^2} - \frac{1}{{{\alpha _t}}}{{\| {\hat y - {x_{i,t + 1}}} \|}^2} + \Big( {\frac{1}{{{\alpha _t}}} - \frac{1}{{{\alpha _{t - 1}}}}} \Big){{\| {\hat y - {x_{i,t}}} \|}^2}} \Big)} \\
& \le \frac{1}{{{\alpha _0}}}{\| {\hat y - {x_{i,1}}} \|^2} - \frac{1}{{{\alpha _T}}}{\| {\hat y - {x_{i,T + 1}}} \|^2} + \Big( {\frac{1}{{{\alpha _T}}} - \frac{1}{{{\alpha _0}}}} \Big)4R{( \mathbb{X} )^2} \le \frac{{4R{{( \mathbb{X} )}^2}}}{{{\alpha _T}}}. \label{theorem1-1-eq1}
\end{flalign}

From \eqref{theorem1-eq1}, we have
\begin{flalign}
\sum\limits_{t = 1}^T {{\alpha _t}}  &= \sum\limits_{t = 2}^T {\frac{1}{{{t^c}}}}  + 1 \le \int_1^T {\frac{1}{{{t^c}}}} dt + 1 \le \frac{{{T^{1 - c}}}}{{1 - c}}, \label{theorem1-1-eq2} \\
\sum\limits_{t = 1}^T {\frac{{\xi _t^2}}{{{\alpha _t}}}} &= \sum\limits_{t = 1}^T {{\alpha _t}}  \le \frac{{{T^{1 - c}}}}{{1 - c}}, \label{theorem1-1-eq3}\\
\sum\limits_{t = 1}^T {\frac{{\delta _t^2}}{{{\alpha _t}}}}  &= r{( \mathbb{X} )^2}\sum\limits_{t = 1}^T {{\alpha _t}}  \le \frac{{r{{( \mathbb{X} )}^2}{T^{1 - c}}}}{{1 - c}}, \label{theorem1-1-eq4}\\
\sum\limits_{t = 1}^T {{\xi _t}}  &= \sum\limits_{t = 1}^T {{\alpha _t}}  \le \frac{{{T^{1 - c}}}}{{1 - c}}, \label{theorem1-1-eq5}\\
\sum\limits_{t = 1}^T {{\delta _t}}  &= r{( \mathbb{X} )}\sum\limits_{t = 1}^T {{\alpha _t}}  \le \frac{{r{{( \mathbb{X} )}}{T^{1 - c}}}}{{1 - c}}, \label{theorem1-1-eq6}\\
\frac{{2R{{( \mathbb{X} )}^2}}}{{{\alpha _T}}} &= 2R{{( \mathbb{X} )}^2}{T^c}. \label{theorem1-1-eq7}
\end{flalign}

Denote
\begin{flalign}
\nonumber
{x^*} = \mathop {\arg \min }\limits_{x \in {\mathcal{X}_T}} \sum\limits_{t = 1}^T {{l_t}( x )}.
\end{flalign}

Choosing $y = {x^*}$, and combining \eqref{lemma6-eqa} and \eqref{theorem1-1-eq1}--\eqref{theorem1-1-eq7} gives
\begin{flalign}
\nonumber
{\mathbf{E}}[ {{\rm{Net} \mbox{-} \rm{Reg}}( T )} ] &\le {\varepsilon _1} {{G_1}} + \frac{{{{ {{G_1^2}} }}\tilde \varepsilon _2^2{T^{1 - c}}}}{{ ( {1 - c} )}} + \frac{{2{{( {p^2{G_1^2} } )}}{T^{1 - c}}}}{{ ( {1 - c} )}} + \frac{{ R{{( \mathbb{X} )}^2}{T^{1 - c}}}}{{4( {1 - c} )}} \\
&\;\; + \frac{{ r{(\mathbb{X})^2}{T^{1 - c}}}}{{4( {1 - c} )}} + \frac{{ {{G_1}} R( \mathbb{X} ){T^{1 - c}}}}{{1 - c}} + \frac{{{G_1}r( \mathbb{X} ){T^{1 - c}}}}{{1 - c}} + 2R{{( \mathbb{X} )}^2}{T^c}, \label{theorem1-1-eq8}
\end{flalign}
which yields \eqref{theorem1-eq2}.

\noindent (ii)
From \eqref{ass1-eq1}, we have
\begin{flalign}
\sum\limits_{i = 1}^n \frac{{\mathbf{E}_{{\mathfrak{U}_t}}}[{{{\| {\hat y - {x_{i,1}}} \|}^2}}]}{{{2\gamma _0}}} \le \frac{{2nR{{( \mathbb{X} )}^2}}}{{{\gamma _0}}}. \label{theorem1-2-eq1}
\end{flalign}

From \eqref{theorem1-eq1}, we have
\begin{flalign}
\sum\limits_{t = 1}^T {\frac{1}{{{\gamma _t}}}} = \sum\limits_{t = 1}^T {\frac{{{\alpha _t}}}{{{\gamma _0}}}}  \le \frac{{{T^{1 - c}}}}{{{\gamma _0}( {1 - c} )}}. \label{theorem1-2-eq2}
\end{flalign}

When $c \in ( {0,1/2} )$, from \eqref{theorem1-eq1}, we have
\begin{flalign}
\nonumber
&\sum\limits_{t = 1}^T {\alpha _t^2}  \le \frac{{{T^{1 - 2c}}}}{{1 - 2c}}, \\
&\sum\limits_{t = 1}^T {\frac{1}{{\gamma _t^2}}}  = \sum\limits_{t = 1}^T {\frac{{\alpha _t^2}}{{\gamma _0^2}}}  \le \frac{{{T^{1 - 2c}}}}{{\gamma _0^2( {1 - 2c} )}}, \label{theorem1-2-eq3}\\
&\sum\limits_{t = 1}^T {\xi _t^2}  = \sum\limits_{t = 1}^T {\alpha _t^2}  \le \frac{{{T^{1 - 2c}}}}{{1 - 2c}}, \label{theorem1-2-eq4}\\
&\sum\limits_{t = 1}^T {\delta _t^2}  = r{(\mathbb{X})^2}\sum\limits_{t = 1}^T {\alpha _t^2}  \le \frac{{r{{(\mathbb{X})}^2}{T^{1 - 2c}}}}{{1 - 2c}}, \label{theorem1-2-eq5}\\
&\sum\limits_{t = 1}^T {\frac{{{\xi _t}}}{{{\gamma _t}}}}  = \frac{1}{{{\gamma _0}}}\sum\limits_{t = 1}^T {\alpha _t^2}  \le \frac{{{T^{1 - 2c}}}}{{{\gamma _0}( {1 - 2c} )}}, \label{theorem1-2-eq6}\\
&\sum\limits_{t = 1}^T {\frac{{{\delta _t}}}{{{\gamma _t}}}}  = \frac{{r(\mathbb{X})}}{{{\gamma _0}}}\sum\limits_{t = 1}^T {\alpha _t^2}  \le \frac{{r(\mathbb{X}){T^{1 - 2c}}}}{{{\gamma _0}( {1 - 2c} )}}. \label{theorem1-2-eq7}
\end{flalign}

Combining \eqref{lemma6-eqb} and \eqref{theorem1-2-eq1}--\eqref{theorem1-2-eq7} yields
\begin{flalign}
\nonumber
&\;\;\;\;\;{\mathbf{E}}\Big[{\Big( {\frac{1}{n}\sum\limits_{i = 1}^n {\sum\limits_{t = 1}^T {\| {{{[ {{g_t}( {{x_{i,t}}} )} ]}_ + }} \|} } } \Big)^2}\Big] \\
\nonumber
&\le 2G_2^2{{\tilde \varepsilon }_3}T + \frac{{2n{R{{( \mathbb{X} )}^2}}{\varepsilon _2}T}}{{{\gamma _0}}} + \frac{{2nF{\varepsilon _2}{T^{2 - c}}}}{{{\gamma _0}( {1 - c} )}} + \frac{{2n{ {p^2{G_1^2} }}{\varepsilon _2}{T^{2 - 2c}}}}{{ {\gamma _0}( {1 - 2c} )}} + \frac{{n R{{( \mathbb{X} )}^2}{\varepsilon _2}{T^{2 - 2c}}}}{{4{\gamma _0}( {1 - 2c} )}}\\
&\;\; + \frac{{n r{{( \mathbb{X} )}^2}{\varepsilon _2}{T^{2 - 2c}}}}{{4{\gamma _0}( {1 - 2c} )}} + \frac{{n{{G_1} }R( \mathbb{X} ){\varepsilon _2}{T^{2 - 2c}}}}{{{\gamma _0}( {1 - 2c} )}} + \frac{{n{G_1}r( \mathbb{X} ){\varepsilon _2}{T^{2 - 2c}}}}{{{\gamma _0}( {1 - 2c} )}}. \label{theorem1-2-eq8}
\end{flalign}

When $c = 1/2$, from \eqref{theorem1-eq1}, we have
\begin{flalign}
\nonumber
&\sum\limits_{t = 1}^T {\alpha _t^2}  \le 2\log ( T ), \\
&\sum\limits_{t = 1}^T {\frac{1}{{\gamma _t^2}}}  = \sum\limits_{t = 1}^T {\frac{{\alpha _t^2}}{{\gamma _0^2}}}  \le \frac{{2\log ( T )}}{{\gamma _0^2}}, \label{theorem1-2-eq9}\\
&\sum\limits_{t = 1}^T {\xi _t^2}  = \sum\limits_{t = 1}^T {\alpha _t^2}  \le 2\log ( T ), \label{theorem1-2-eq10}\\
&\sum\limits_{t = 1}^T {\delta _t^2}  = r{(\mathbb{X})^2}\sum\limits_{t = 1}^T {\alpha _t^2}  \le 2r{(\mathbb{X})^2}\log ( T ), \label{theorem1-2-eq11}\\
&\sum\limits_{t = 1}^T {\frac{{{\xi _t}}}{{{\gamma _t}}}}  = \frac{1}{{{\gamma _0}}}\sum\limits_{t = 1}^T {\alpha _t^2}  \le \frac{{2\log ( T )}}{{{\gamma _0}}}, \label{theorem1-2-eq12}\\
&\sum\limits_{t = 1}^T {\frac{{{\delta _t}}}{{{\gamma _t}}}}  = \frac{{r(\mathbb{X})}}{{{\gamma _0}}}\sum\limits_{t = 1}^T {\alpha _t^2}  \le \frac{{2r(\mathbb{X})\log ( T )}}{{{\gamma _0}}}. \label{theorem1-2-eq13}
\end{flalign}

Combining \eqref{lemma6-eqb}, \eqref{theorem1-2-eq1}, \eqref{theorem1-2-eq2}, and \eqref{theorem1-2-eq9}--\eqref{theorem1-2-eq13} yields
\begin{flalign}
\nonumber
&\;\;\;\;\;{\mathbf{E}}\Big[{\Big( {\frac{1}{n}\sum\limits_{i = 1}^n {\sum\limits_{t = 1}^T {\| {{{[ {{g_t}( {{x_{i,t}}} )} ]}_ + }} \|} } } \Big)^2}\Big] \\
\nonumber
&\le 2G_2^2{{\tilde \varepsilon }_3}T + \frac{{2nR{( \mathbb{X} )^2}{\varepsilon _2}T}}{{{\gamma _0}}} + \frac{{2nF{\varepsilon _2}{T^{2 - c}}}}{{{\gamma _0}( {1 - c} )}} + \frac{{4n{{ {p^2{G_1^2} } }}{\varepsilon _2}T\log ( T )}}{{ {\gamma _0}}} + \frac{{n R{{( \mathbb{X} )}^2}{\varepsilon _2}T\log ( T )}}{{2{\gamma _0}}}\\
&\;\; + \frac{{n r{{( \mathbb{X} )}^2}{\varepsilon _2}T\log ( T )}}{{2{\gamma _0}}} + \frac{{2n {{G_1} } R( \mathbb{X} ){\varepsilon _2}T\log ( T )}}{{{\gamma _0}}} + \frac{{2n{G_1}r( \mathbb{X} ){\varepsilon _2}T\log ( T )}}{{{\gamma _0}}}. \label{theorem1-2-eq14}
\end{flalign}

When $c = ( {1/2,1} )$, from \eqref{theorem1-eq1}, there exists a constant $Q > 0$ such that
\begin{flalign}
\nonumber
&\sum\limits_{t = 1}^T {\alpha _t^2}  \le Q, \\
&\sum\limits_{t = 1}^T {\frac{1}{{\gamma _t^2}}}  = \sum\limits_{t = 1}^T {\frac{{\alpha _t^2}}{{\gamma _0^2}}}  \le \frac{Q}{{\gamma _0^2}}, \label{theorem1-2-eq15}\\
&\sum\limits_{t = 1}^T {\xi _t^2}  = \sum\limits_{t = 1}^T {\alpha _t^2}  \le Q, \label{theorem1-2-eq16}\\
&\sum\limits_{t = 1}^T {\delta _t^2}  = r{(\mathbb{X})^2}\sum\limits_{t = 1}^T {\alpha _t^2}  \le r{(\mathbb{X})^2}Q, \label{theorem1-2-eq17}\\
&\sum\limits_{t = 1}^T {\frac{{{\xi _t}}}{{{\gamma _t}}}}  = \frac{1}{{{\gamma _0}}}\sum\limits_{t = 1}^T {\alpha _t^2}  \le \frac{Q}{{{\gamma _0}}}, \label{theorem1-2-eq18}\\
&\sum\limits_{t = 1}^T {\frac{{{\delta _t}}}{{{\gamma _t}}}}  = \frac{{r(\mathbb{X})}}{{{\gamma _0}}}\sum\limits_{t = 1}^T {\alpha _t^2}  \le \frac{{r(\mathbb{X})Q}}{{{\gamma _0}}}. \label{theorem1-2-eq19}
\end{flalign}

Combining \eqref{lemma6-eqb}, \eqref{theorem1-2-eq1}, \eqref{theorem1-2-eq2}, and \eqref{theorem1-2-eq15}--\eqref{theorem1-2-eq19} yields
\begin{flalign}
\nonumber
&\;\;\;\;\;{\mathbf{E}}\Big[{\Big( {\frac{1}{n}\sum\limits_{i = 1}^n {\sum\limits_{t = 1}^T {\| {{{[ {{g_t}( {{x_{i,t}}} )} ]}_ + }} \|} } } \Big)^2}\Big] \\
\nonumber
&\le 2G_2^2{{\tilde \varepsilon }_3}T + \frac{{2n{R{{( \mathbb{X} )}^2}}{\varepsilon _2}T}}{{{\gamma _0}}} + \frac{{2nF{\varepsilon _2}{T^{2 - c}}}}{{{\gamma _0}( {1 - c} )}} + \frac{{2n{{ {p^2{G_1^2} } }}{\varepsilon _2}QT}}{{ {\gamma _0}}} + \frac{{n R{{( \mathbb{X} )}^2}{\varepsilon _2}QT}}{{4{\gamma _0}}}\\
&\;\; + \frac{{n r{{( \mathbb{X} )}^2}{\varepsilon _2}QT}}{{4{\gamma _0}}} + \frac{{n {{G_1} } R( \mathbb{X} ){\varepsilon _2}QT}}{{{\gamma _0}}} + \frac{{n{G_1}r( \mathbb{X} ){\varepsilon _2}QT}}{{{\gamma _0}}}. \label{theorem1-2-eq20}
\end{flalign}

It follows form \eqref{theorem1-2-eq8}, \eqref{theorem1-2-eq14}, and \eqref{theorem1-2-eq20} that \eqref{theorem1-eq3} holds.

\hspace{-3mm}\emph{C. Proof of Theorem~2}

We will show that \eqref{theorem2-eq1} and \eqref{theorem2-eq2} in Theorem~2 hold in the following, respectively.

\noindent (i)
From \eqref{theorem1-1-eq8}, we have \eqref{theorem2-eq1}.

\noindent (ii)
When Slater’s condition holds, we have
\begin{flalign}
\nonumber
{h_T}( {{x_s}} ) &= \sum\limits_{i = 1}^n {\sum\limits_{t = 1}^T {\frac{{q_{i,t + 1}^T{g_{i,t}}( {{x_s}} )}}{{{\gamma _t}}}} }  = \sum\limits_{i = 1}^n {\sum\limits_{t = 1}^T {[ {{g_{i,t}}( {{x_{i,t}}} )} ]_ + ^T{g_{i,t}}( {{x_s}} )} } \\
\nonumber
& \le  - \sum\limits_{i = 1}^n {\sum\limits_{t = 1}^T {{\varsigma _s}[ {{g_{i,t}}( {{x_{i,t}}} )} ]_ + ^T{1_{{m_i}}}} }
=  - {\varsigma _s}{\sum\limits_{i = 1}^n \sum\limits_{t = 1}^T \| {{{[ {{g_{i,t}}( {{x_{i,t}}} )} ]}_ + }} \| _1} \\
& \le  - {\varsigma _s}\sum\limits_{i = 1}^n {\sum\limits_{t = 1}^T {\| {{{[ {{g_{i,t}}( {{x_{i,t}}} )} ]}_ + }} \|} }, \label{theorem2-2-eq1}
\end{flalign}
where the second equality holds due to \eqref{Algorithm1-eq2} and the first inequality holds due to \eqref{ass8-eq1}.

Choosing $y = {x_s}$ in \eqref{lemma5-eqb}, and using \eqref{Algorithm1-eq2} and \eqref{theorem2-2-eq1} gives
\begin{flalign}
{\varsigma _s}\sum\limits_{t = 1}^T {\sum\limits_{i = 1}^n {\| {{{[ {{g_{i,t}}( {{x_{i,t}}} )} ]}_ + }} \|} }  \le {{\tilde h}_T}( {{\hat y}} ). \label{theorem2-2-eq2}
\end{flalign}

When $c \in ( {0,1/2} )$, combining \eqref{theorem2-2-eq2}, \eqref{theorem1-eq1} and \eqref{theorem1-2-eq1}--\eqref{theorem1-2-eq7} gives
\begin{flalign}
\nonumber
&\;\;\;\;\;{\mathbf{E}}\Big[\sum\limits_{i = 1}^n {\sum\limits_{t = 1}^T {\| {{{[ {{g_{i,t}}( {{x_{i,t}}} )} ]}_ + }} \|} }\Big] \\
\nonumber
& \le \frac{{2n{R{{( \mathbb{X} )}^2}}}}{{{\varsigma _s}{\gamma _0}}} + \frac{{2nF{T^{1 - c}}}}{{{\varsigma _s}{\gamma _0}( {1 - c} )}} + \frac{{2n{{ {p^2{G_1^2}} }}{T^{1 - 2c}}}}{{ {\varsigma _s}{\gamma _0}( {1 - 2c} )}} + \frac{{n R{{( \mathbb{X} )}^2}{T^{1 - 2c}}}}{{4{\varsigma _s}{\gamma _0}( {1 - 2c} )}} + \frac{{n r{{( \mathbb{X} )}^2}{T^{1 - 2c}}}}{{4{\varsigma _s}{\gamma _0}( {1 - 2c} )}}\\
&\;\; + \frac{{n {{G_1}} R( \mathbb{X} ){T^{1 - 2c}}}}{{{\varsigma _s}{\gamma _0}( {1 - 2c} )}} + \frac{{n{G_1}r( \mathbb{X} ){T^{1 - 2c}}}}{{{\varsigma _s}{\gamma _0}( {1 - 2c} )}}. \label{theorem2-2-eq3}
\end{flalign}

From \eqref{theorem1-eq1}, \eqref{lemma6-eqc}, \eqref{theorem2-2-eq3}, we have
\begin{flalign}
\nonumber
&\;\;\;\;\;{\mathbf{E}}[{\rm{Net}\mbox{-}\rm{CCV}}( T )] \\
\nonumber
& \le n{G_2}{\tilde{\varepsilon} _1} + \frac{{{\varepsilon _3}{T^{1 - c}}}}{{1 - c}} + \frac{{2n{R{{( \mathbb{X} )}^2}}{\varepsilon _4}}}{{{\varsigma _s}{\gamma _0}}} + \frac{{2nF{\varepsilon _4}{T^{1 - c}}}}{{{\varsigma _s}{\gamma _0}( {1 - c} )}} + \frac{{2n{{ {p^2{G_1^2} } }}{\varepsilon _4}{T^{1 - 2c}}}}{{ {\varsigma _s}{\gamma _0}( {1 - 2c} )}} + \frac{{n R{{( \mathbb{X} )}^2}{\varepsilon _4}{T^{1 - 2c}}}}{{4{\varsigma _s}{\gamma _0}( {1 - 2c} )}}  \\
&\;\; + \frac{{n r{{( \mathbb{X} )}^2}{\varepsilon _4}{T^{1 - 2c}}}}{{4{\varsigma _s}{\gamma _0}( {1 - 2c} )}} + \frac{{n {{G_1} } R( \mathbb{X} ){\varepsilon _4}{T^{1 - 2c}}}}{{{\varsigma _s}{\gamma _0}( {1 - 2c} )}} + \frac{{n{G_1}r( \mathbb{X} ){\varepsilon _4}{T^{1 - 2c}}}}{{{\varsigma _s}{\gamma _0}( {1 - 2c} )}}. \label{theorem2-2-eq4}
\end{flalign}

When $c = 1/2$, combining \eqref{theorem2-2-eq2}, \eqref{theorem1-eq1}, \eqref{theorem1-2-eq1}, \eqref{theorem1-2-eq2}, and \eqref{theorem1-2-eq9}--\eqref{theorem1-2-eq13} gives
\begin{flalign}
\nonumber
&\;\;\;\;\;{\mathbf{E}}\Big[\sum\limits_{i = 1}^n {\sum\limits_{t = 1}^T {\| {{{[ {{g_{i,t}}( {{x_{i,t}}} )} ]}_ + }} \|} }\Big] \\
\nonumber
& \le \frac{{2n{R{{( \mathbb{X} )}^2}}}}{{{\varsigma _s}{\gamma _0}}} + \frac{{2nF{T^{1 - c}}}}{{{\varsigma _s}{\gamma _0}( {1 - c} )}} + \frac{{4n{{ {p^2{G_1^2} } }}\log ( T )}}{{ {\varsigma _s}{\gamma _0}}} + \frac{{n R{{( \mathbb{X} )}^2}\log ( T )}}{{2{\varsigma _s}{\gamma _0}}} + \frac{{n r{{( \mathbb{X} )}^2}\log ( T )}}{{2{\varsigma _s}{\gamma _0}}}\\
&\;\; + \frac{{2n {{G_1} } R( \mathbb{X} )\log ( T )}}{{{\varsigma _s}{\gamma _0}}} + \frac{{2n{G_1}r( \mathbb{X} )\log ( T )}}{{{\varsigma _s}{\gamma _0}}}. \label{theorem2-2-eq5}
\end{flalign}

From \eqref{theorem1-eq1}, \eqref{lemma6-eqc}, \eqref{theorem2-2-eq5}, we have
\begin{flalign}
\nonumber
&\;\;\;\;\;{\mathbf{E}}[{\rm{Net}\mbox{-}\rm{CCV}}( T )] \\
\nonumber
& \le n{G_2}{\tilde{\varepsilon} _1} + \frac{{{\varepsilon _3}{T^{1 - c}}}}{{1 - c}} + \frac{{2n{R{{( \mathbb{X} )}^2}}{\varepsilon _4}}}{{{\varsigma _s}{\gamma _0}}} + \frac{{2nF{\varepsilon _4}{T^{1 - c}}}}{{{\varsigma _s}{\gamma _0}( {1 - c} )}} + \frac{{4n{{ {p^2{G_1^2} } }}{\varepsilon _4}\log ( T )}}{{ {\varsigma _s}{\gamma _0}}} + \frac{{n R{{( \mathbb{X} )}^2}{\varepsilon _4}\log ( T )}}{{2{\varsigma _s}{\gamma _0}}} \\
&\;\; + \frac{{n r{{( \mathbb{X} )}^2}{\varepsilon _4}\log ( T )}}{{2{\varsigma _s}{\gamma _0}}} + \frac{{2n {{G_1} } R( \mathbb{X} ){\varepsilon _4}\log ( T )}}{{{\varsigma _s}{\gamma _0}}} + \frac{{2n{G_1}r( \mathbb{X} ){\varepsilon _4}\log ( T )}}{{{\varsigma _s}{\gamma _0}}}. \label{theorem2-2-eq6}
\end{flalign}

When  $c = ( {1/2,1} )$, combining \eqref{theorem2-2-eq2}, \eqref{theorem1-eq1}, \eqref{theorem1-2-eq1}, \eqref{theorem1-2-eq2}, and \eqref{theorem1-2-eq15}--\eqref{theorem1-2-eq19} gives
\begin{flalign}
\nonumber
&\;\;\;\;\;{\mathbf{E}}\Big[\sum\limits_{i = 1}^n {\sum\limits_{t = 1}^T {\| {{{[ {{g_{i,t}}( {{x_{i,t}}} )} ]}_ + }} \|} }\Big] \\
\nonumber
& \le \frac{{2n{R{{( \mathbb{X} )}^2}}}}{{{\varsigma _s}{\gamma _0}}} + \frac{{2nF{T^{1 - c}}}}{{{\varsigma _s}{\gamma _0}( {1 - c} )}} + \frac{{2n{{ {p^2{G_1^2}} }}Q}}{{ {\varsigma _s}{\gamma _0}}} + \frac{{n R{{( \mathbb{X} )}^2}Q}}{{4{\varsigma _s}{\gamma _0}}} + \frac{{n r{{( \mathbb{X} )}^2}Q}}{{4{\varsigma _s}{\gamma _0}}} + \frac{{n {{G_1} } R( \mathbb{X} )Q}}{{{\varsigma _s}{\gamma _0}}}\\
&\;\;  + \frac{{n{G_1}r( \mathbb{X} )Q}}{{{\varsigma _s}{\gamma _0}}}. \label{theorem2-2-eq7}
\end{flalign}

From \eqref{theorem1-eq1}, \eqref{lemma6-eqc}, \eqref{theorem2-2-eq7}, we have
\begin{flalign}
\nonumber
&\;\;\;\;\;{\mathbf{E}}[{\rm{Net}\mbox{-}\rm{CCV}}( T )] \\
\nonumber
& \le n{G_2}{\tilde{\varepsilon} _1} + \frac{{{\varepsilon _3}{T^{1 - c}}}}{{1 - c}} + \frac{{2n{R{{( \mathbb{X} )}^2}}{\varepsilon _4}}}{{{\varsigma _s}{\gamma _0}}} + \frac{{2nF{\varepsilon _4}{T^{1 - c}}}}{{{\varsigma _s}{\gamma _0}( {1 - c} )}} + \frac{{2n{{ {p^2{G_1^2} } }}Q{\varepsilon _4}}}{{ {\varsigma _s}{\gamma _0}}} + \frac{{n R{{( \mathbb{X} )}^2}Q{\varepsilon _4}}}{{4{\varsigma _s}{\gamma _0}}} \\
&\;\; + \frac{{n r{{( \mathbb{X} )}^2}Q{\varepsilon _4}}}{{4{\varsigma _s}{\gamma _0}}} + \frac{{n {{G_1} } R( \mathbb{X} )Q{\varepsilon _4}}}{{{\varsigma _s}{\gamma _0}}} + \frac{{n{G_1}r( \mathbb{X} )Q{\varepsilon _4}}}{{{\varsigma _s}{\gamma _0}}}. \label{theorem2-2-eq8}
\end{flalign}

It follows from \eqref{theorem2-2-eq4}, \eqref{theorem2-2-eq6}, and \eqref{theorem2-2-eq8} that \eqref{theorem2-eq2} holds.

\hspace{-3mm}\emph{D. Proof of Theorem~3}

We will show that \eqref{Theorem3-eq1}--\eqref{Theorem3-eq3} in Theorem~3 hold in the following, respectively.

Since Assumption~6 holds, \eqref{lemma4-eq8} can be replaced by
\begin{flalign}
{\hat f_{i,t}}( {{x_{i,t}}} ) - {\hat f_{i,t}}( {\hat y} ) \le {\mathbf{E}_{{\mathfrak{U}_t}}}[ {p{G_1}\| {\varepsilon _{i,t}^z} \| + \langle {\hat \partial {f_{i,t}}( {{x_{i,t}}} ),{z_{i,t + 1}} - \hat y} \rangle  - \frac{\mu }{2}{\| {\hat y - {x_{i,t}}} \|^2}} ]. \label{theorem3-eq1}
\end{flalign}

Note that different from \eqref{lemma4-eq8}, there exists an extra term $\frac{\mu }{2}{\| {\hat y - {x_{i,t}}} \|^2}$ in \eqref{theorem3-eq1}, and thus \eqref{lemma6-eqa} can be replaced by
\begin{flalign}
\nonumber
&\;\;\;\;\;{\mathbf{E}}\Big[\frac{1}{n}\sum\limits_{i = 1}^n {\sum\limits_{t = 1}^T {{f_t}( {{x_{i,t}}} )} }  - \sum\limits_{t = 1}^T {{f_t}( y )}\Big] \\
& \le {\varepsilon _1}{G_1} + {n_T} + {\mathbf{E}}[\frac{1}{n}\sum\limits_{i = 1}^n {\sum\limits_{t = 1}^T {\big( {{\Delta _{i,t}}(\hat y) - \frac{\mu }{2}{\| {\hat y - {x_{i,t}}} \|^2}} \big)} }]. \label{theorem3-eq2}
\end{flalign}

Moreover, \eqref{lemma5-eqb} can be replaced by
\begin{flalign}
\sum\limits_{t = 1}^T {\sum\limits_{i = 1}^n {\frac{1}{2}} } \Big( {\frac{{q_{i,t + 1}^T{g_{i,t}}( {{x_{i,t}}} )}}{{{\gamma _t}}} + \frac{{ {\mathbf{E}_{{\mathfrak{U}_t}}}[{{\| {\varepsilon _{i,t}^z} \|}^2}]}}{{2{\gamma _0}}}} \Big) \le {h_T}( y ) + {{\tilde h}_T}( {\hat y} ) + {{\hat h}_T}( {\hat y} ), \label{theorem3-eq3}
\end{flalign}
where
\begin{flalign}
\nonumber
{{\hat h}_T}( {\hat y} ) =  - \sum\limits_{i = 1}^n {\sum\limits_{t = 1}^T {\frac{{\mu {\mathbf{E}_{{\mathfrak{U}_t}}}[{{{\| {\hat y - {x_{i,t}}} \|}^2}}]}}{{{2\gamma _t}}}} }.
\end{flalign}

As a result, \eqref{lemma6-eqb} can be replaced by
\begin{flalign}
{\mathbf{E}}\Big[\frac{1}{n}\sum\limits_{i = 1}^n {\sum\limits_{t = 1}^T {\| {{{[ {{g_t}( {{x_{i,t}}} )} ]}_ + }} \|} }\Big]  \le \sqrt {2G_2^2{{\tilde \varepsilon }_3}T + {\varepsilon _2}T{\mathbf{E}}\big[\big( {{{\tilde h}_T}( {\hat y} ) + {{\hat h}_T}( {\hat y} )} \big)\big]}. \label{theorem3-eq4}
\end{flalign}

\noindent (i)
We have
\begin{flalign}
\nonumber
&\;\;\;\;\;\frac{1}{n}\sum\limits_{t = 1}^T {\sum\limits_{i = 1}^n {\big( {{\Delta _{i,t}}( {\hat y} ) { - \frac{\mu }{2}{{\| {\hat y - {x_{i,t}}} \|}^2}}} \big)} } \\
\nonumber
& = \frac{1}{n}\sum\limits_{t = 1}^T {\sum\limits_{i = 1}^n } \Big( {{\frac{1}{{2{\alpha _t}}}}{\mathbf{E}_{{\mathfrak{U}_t}}}[{{\| {\hat y - {x_{i,t}}} \|}^2} - {{\| {\hat y - {x_{i,t + 1}}} \|}^2}] - \frac{\mu }{2}{{\| {\hat y - {x_{i,t}}} \|}^2}} \Big)\\
\nonumber
& = \frac{1}{{2n}}\sum\limits_{t = 1}^T {\sum\limits_{i = 1}^n {\Big( {\frac{1}{{{\alpha _{t - 1}}}}{\mathbf{E}_{{\mathfrak{U}_t}}}[{{\| {\hat y - {x_{i,t}}} \|}^2}] - \frac{1}{{{\alpha _t}}}{\mathbf{E}_{{\mathfrak{U}_t}}}[{{\| {\hat y - {x_{i,t + 1}}} \|}^2}] + \Big( {\frac{1}{{{\alpha _t}}} - \frac{1}{{{\alpha _{t - 1}}}} - \mu} \Big){\mathbf{E}_{{\mathfrak{U}_t}}}[{{\| {\hat y - {x_{i,t}}} \|}^2}]} \Big)} }  \\
& \le \frac{1}{{2n}}\sum\limits_{i = 1}^n {\Big( {\frac{1}{{{\alpha _0}}}{\mathbf{E}_{{\mathfrak{U}_t}}}[{{\| {\hat y - {x_{i,1}}} \|}^2}] - \frac{1}{{{\alpha _T}}}{\mathbf{E}_{{\mathfrak{U}_t}}}[{{\| {\hat y - {x_{i,T + 1}}} \|}^2}] + \sum\limits_{t = 1}^T {\Big( {\frac{1}{{{\alpha _t}}} - \frac{1}{{{\alpha _{t - 1}}}} - \mu} \Big){\mathbf{E}_{{\mathfrak{U}_t}}}[{{\| {\hat y - {x_{i,t}}} \|}^2}}] } \Big)}. \label{theorem3-eq5}
\end{flalign}

Denote
\begin{flalign}
\nonumber
{\varepsilon _5} = \Big\lceil {{{\Big( {\frac{1}{\mu }} \Big)}^{\frac{1}{{1 - c}}}}} \Big\rceil.
\end{flalign}

From \eqref{theorem1-eq1}, we have
\begin{flalign}
\frac{1}{{{\alpha _{t + 1}}}} - \frac{1}{{{\alpha _t}}} - \mu  = \frac{{t + 1}}{{{{( {t + 1} )}^{1 - c}}}} - \frac{t}{{{t^{1 - c}}}} - \mu  < \frac{1}{{{t^{1 - c}}}} - \mu \le 0, \forall t \ge {\varepsilon _5}. \label{theorem3-eq6}
\end{flalign}

Choosing $y = {x^*} \in {\mathcal{X}_T}$, and combining \eqref{ass1-eq1}, \eqref{theorem1-1-eq2}--\eqref{theorem1-1-eq6}, \eqref{theorem3-eq2}, and \eqref{theorem3-eq5}--\eqref{theorem3-eq6} yields
\begin{flalign}
\nonumber
{\mathbf{E}}[ {{\rm{Net} \mbox{-} \rm{Reg}}( T )} ] &\le {\varepsilon _1} {{G_1} } + \frac{{{{ {{G_1^2} } }}\tilde \varepsilon _2^2{T^{1 - c}}}}{{ ( {1 - c} )}} + \frac{{2{{ {p^2{G_1^2} } }}{T^{1 - c}}}}{{ ( {1 - c} )}} + \frac{{ R{{( \mathbb{X} )}^2}{T^{1 - c}}}}{{4( {1 - c} )}} + \frac{{ r{(\mathbb{X})^2}{T^{1 - c}}}}{{4( {1 - c} )}}\\
\nonumber
&\;\;  + \frac{{ {{G_1}} R( \mathbb{X} ){T^{1 - c}}}}{{1 - c}} + \frac{{{G_1}r( \mathbb{X} ){T^{1 - c}}}}{{1 - c}} + \frac{1}{{2n{\alpha _0}}}\sum\limits_{i = 1}^n {\mathbf{E}_{{\mathfrak{U}_t}}}[{{{\| {\hat y - {x_{i,1}}} \|}^2}}] \\
\nonumber
&\;\; + \frac{1}{n}\sum\limits_{i = 1}^n {\sum\limits_{t = 1}^{{\varepsilon _5}} {\Big( {\frac{1}{{{\alpha _t}}} - \frac{1}{{{\alpha _{t - 1}}}} - \mu } \Big)} } {\mathbf{E}_{{\mathfrak{U}_t}}}[{\| {\hat y - {x_{i,t}}} \|^2}] \\
\nonumber
&\le {\varepsilon _1} {{G_1} }  + \frac{{{{ {{G_1^2} } }}\tilde \varepsilon _2^2{T^{1 - c}}}}{{ ( {1 - c} )}} + \frac{{2{{ {p^2{G_1^2} } }}{T^{1 - c}}}}{{ ( {1 - c} )}} + \frac{{ R{{( \mathbb{X} )}^2}{T^{1 - c}}}}{{4( {1 - c} )}} + \frac{{ r{(\mathbb{X})^2}{T^{1 - c}}}}{{4( {1 - c} )}}\\
&\;\;  + \frac{{ {{G_1} } R( \mathbb{X} ){T^{1 - c}}}}{{1 - c}} + \frac{{{G_1}r( \mathbb{X} ){T^{1 - c}}}}{{1 - c}}  + 4{\varepsilon _5}{[ {1 - \mu } ]_ + }{R{{( \mathbb{X} )}^2}}.
\end{flalign}
Hence, \eqref{Theorem3-eq1} holds.

\noindent (ii)
From \eqref{theorem1-2-eq8}, \eqref{theorem1-2-eq14}, \eqref{theorem1-2-eq20}, and \eqref{theorem3-eq3}--\eqref{theorem3-eq4}, we have \eqref{Theorem3-eq2}.

\noindent (iii)
From \eqref{theorem2-2-eq4}, \eqref{theorem2-2-eq6}, and \eqref{theorem2-2-eq8}, and \eqref{theorem3-eq3}--\eqref{theorem3-eq4}, we have \eqref{Theorem3-eq3}.

\hspace{-3mm}\emph{E. Proof of Theorem~4}

We will show that \eqref{theorem4-eq2}--\eqref{theorem4-eq4} in Theorem~4 hold in the following, respectively.

Note that \eqref{theorem3-eq2}--\eqref{theorem3-eq4} still hold.

\noindent (i)
From \eqref{theorem4-eq1} and \eqref{theorem3-eq5}, we have
\begin{flalign}
\frac{1}{n}\sum\limits_{t = 1}^T {\sum\limits_{i = 1}^n {\big( {{\Delta _{i,t}}( {\hat y} ) { - \frac{\mu }{2}{{\| {\hat y - {x_{i,t}}} \|}^2}}} \big)} } 
\le 0. \label{theorem4-1-eq1}
\end{flalign}

From \eqref{theorem4-eq1}, we have
\begin{flalign}
\sum\limits_{t = 1}^T {{\alpha _t}}  &= \sum\limits_{t = 2}^T {\frac{1}{{\mu t}}}  + \frac{1}{\mu } \le \int_1^T {\frac{1}{{\mu t}}} dt + \frac{1}{\mu } \le \frac{1}{\mu }\big( {\log ( T ) + 1} \big), \label{theorem4-1-eq2}\\
\sum\limits_{t = 1}^T {\frac{{\xi _t^2}}{{{\alpha _t}}}} & = \sum\limits_{t = 1}^T {{\alpha _t}}  \le \frac{1}{\mu }\big( {\log ( T ) + 1} \big), \label{theorem4-1-eq3}\\
\sum\limits_{t = 1}^T {\frac{{\delta _t^2}}{{{\alpha _t}}}} & = r{(\mathbb{X})^2}\sum\limits_{t = 1}^T {{\alpha _t}}  \le \frac{{r{{(\mathbb{X})}^2}}}{\mu }\big( {\log ( T ) + 1} \big), \label{theorem4-1-eq4}\\
\sum\limits_{t = 1}^T {{\xi _t}}  &= \sum\limits_{t = 1}^T {{\alpha _t}}  \le \frac{1}{\mu }\big( {\log ( T ) + 1} \big), \label{theorem4-1-eq5}\\
\sum\limits_{t = 1}^T {{\delta _t}}  &= r(\mathbb{X})\sum\limits_{t = 1}^T {{\alpha _t}}  \le \frac{{r(\mathbb{X})}}{\mu }\big( {\log ( T ) + 1} \big). \label{theorem4-1-eq6}
\end{flalign}

Choosing $y = {x^*} \in {\mathcal{X}_T}$, and combining \eqref{theorem3-eq2}, and \eqref{theorem4-1-eq1}--\eqref{theorem4-1-eq6} yields
\begin{flalign}
\nonumber
{\mathbf{E}}[ {{\rm{Net} \mbox{-} \rm{Reg}}( T )} ] &\le {\varepsilon _1} {{G_1} }  + \frac{{{{{G_1^2} }}\tilde \varepsilon _2^2\big( {\log ( T ) + 1} \big)}}{{ \mu }} + \frac{{2{{p^2{G_1^2} }}\big( {\log ( T ) + 1} \big)}}{{ \mu }}  \\
\nonumber
&\;\; + \frac{{ R{{(\mathbb{X})}^2}\big( {\log ( T ) + 1} \big)}}{{4\mu }} + \frac{{ r{{(\mathbb{X})}^2}\big( {\log ( T ) + 1} \big)}}{{4\mu }}  \\
&\;\; + \frac{{{G_1} R(\mathbb{X})\big( {\log ( T ) + 1} \big)}}{\mu }  + \frac{{{G_1}r(\mathbb{X})\big( {\log ( T ) + 1} \big)}}{\mu }. \label{theorem4-1-eq7}
\end{flalign}
Hence, \eqref{theorem4-eq2} holds.

\noindent (ii)
From \eqref{theorem4-eq1}, we have
\begin{flalign}
&\sum\limits_{t = 1}^T {\frac{1}{{{\gamma _t}}} = \sum\limits_{t = 1}^T {\frac{{{\alpha _t}}}{{{\gamma _0}}}}  \le } \frac{1}{{{\gamma _0}\mu }}\big( {\log ( T ) + 1} \big), \label{theorem4-2-eq1}\\
&\sum\limits_{t = 1}^T {\alpha _t^2}  = \frac{1}{{{\mu ^2}}}\sum\limits_{t = 1}^T {\frac{1}{{{t^2}}}}  = \frac{1}{{{\mu ^2}}}\Big( {\sum\limits_{t = 2}^T {\frac{1}{{{t^2}}}}  + 1} \Big) \le \frac{1}{{{\mu ^2}}}\Big( {\int_{t = 2}^T {\frac{1}{{{t^2}}}dt}  + 1} \Big) \le \frac{2}{{{\mu ^2}}}, \label{theorem4-2-eq2}\\
&\sum\limits_{t = 1}^T {\frac{1}{{\gamma _t^2}}}  = \sum\limits_{t = 1}^T {\frac{{\alpha _t^2}}{{\gamma _0^2}}}  \le \frac{2}{{\gamma _0^2{\mu ^2}}}, \label{theorem4-2-eq3}\\
&\sum\limits_{t = 1}^T {\xi _t^2}  = \sum\limits_{t = 1}^T {\alpha _t^2}  \le \frac{2}{{{\mu ^2}}}, \label{theorem4-2-eq4}\\
&\sum\limits_{t = 1}^T {\delta _t^2}  = r{(\mathbb{X})^2}\sum\limits_{t = 1}^T {\alpha _t^2}  \le \frac{{2r{{(\mathbb{X})}^2}}}{{{\mu ^2}}}, \label{theorem4-2-eq5}\\
&\sum\limits_{t = 1}^T {\frac{{{\xi _t}}}{{{\gamma _t}}}}  = \frac{1}{{{\gamma _0}}}\sum\limits_{t = 1}^T {\alpha _t^2}  \le \frac{2}{{{\gamma _0}{\mu ^2}}}, \label{theorem4-2-eq6}\\
&\sum\limits_{t = 1}^T {\frac{{{\delta _t}}}{{{\gamma _t}}}}  = \frac{{r(\mathbb{X})}}{{{\gamma _0}}}\sum\limits_{t = 1}^T {\alpha _t^2}  \le \frac{{2r(\mathbb{X})}}{{{\gamma _0}{\mu ^2}}}, \label{theorem4-2-eq7}\\
&\sum\limits_{i = 1}^n \frac{{\mathbf{E}_{{\mathfrak{U}_t}}}[{{{\| {\hat y - {x_{i,1}}} \|}^2}}]}{{{2\gamma _0}}} + {{\hat h}_T}( {\hat y} ) \le \sum\limits_{i = 1}^n \frac{{\mathbf{E}_{{\mathfrak{U}_t}}}[{{{\| {\hat y - {x_{i,1}}} \|}^2}}]}{{{2\gamma _0}}}  - \sum\limits_{i = 1}^n \frac{\mu{\mathbf{E}_{{\mathfrak{U}_t}}}[{{{\| {\hat y - {x_{i,1}}} \|}^2}}]}{{{2\gamma _1}}}  = 0. \label{theorem4-2-eq8}
\end{flalign}

From \eqref{theorem3-eq4}, and \eqref{theorem4-2-eq1}--\eqref{theorem4-2-eq8}, we have
\begin{flalign}
\nonumber
&\;\;\;\;\;{\mathbf{E}}\Big[{\Big( {\frac{1}{n}\sum\limits_{i = 1}^n {\sum\limits_{t = 1}^T {\| {{{[ {{g_t}( {{x_{i,t}}} )} ]}_ + }} \|} } } \Big)^2}\Big] \\
\nonumber
&\le 2G_2^2{{\tilde \varepsilon }_3}T + \frac{{2nF{\varepsilon _2}T\big( {\log ( T ) + 1} \big)}}{{{\gamma _0}\mu }} + \frac{{4n{{p^2{G_1^2} }}{\varepsilon _2}T}}{{ {\gamma _0}{\mu ^2}}} + \frac{{n R{{(\mathbb{X})}^2}{\varepsilon _2}T}}{{2{\gamma _0}{\mu ^2}}} + \frac{{n r{{(\mathbb{X})}^2}{\varepsilon _2}T}}{{2{\gamma _0}{\mu ^2}}}\\
&\;\; + \frac{{2n{G_1} R(\mathbb{X}){\varepsilon _2}T}}{{{\gamma _0}{\mu ^2}}} + \frac{{2n{G_1}r(\mathbb{X}){\varepsilon _2}T}}{{{\gamma _0}{\mu ^2}}}. \label{theorem4-2-eq9}
\end{flalign}

It follows from \eqref{theorem4-2-eq9} that \eqref{theorem4-eq3} holds.

\noindent (iii)
Choosing $y = {x_s}$ in \eqref{theorem3-eq3}, and using \eqref{Algorithm1-eq2} and \eqref{theorem2-2-eq1} gives
\begin{flalign}
{\varsigma _s}\sum\limits_{t = 1}^T {\sum\limits_{i = 1}^n {\| {{{[ {{g_{i,t}}( {{x_{i,t}}} )} ]}_ + }} \|} }  \le {{\tilde h}_T}( {\hat y} ) + {{\hat h}_T}( {\hat y} ). \label{theorem4-3-eq1}
\end{flalign}
Combining \eqref{theorem4-2-eq1}--\eqref{theorem4-2-eq8}, and \eqref{theorem4-3-eq1} gives
\begin{flalign}
\nonumber
&\;\;\;\;\;{\mathbf{E}}\Big[\sum\limits_{i = 1}^n {\sum\limits_{t = 1}^T {\| {{{[ {{g_{i,t}}( {{x_{i,t}}} )} ]}_ + }} \|} }\Big] \\
& \le \frac{{2nF\big( {\log ( T ) + 1} \big)}}{{{\gamma _0}\mu {\varsigma _s}}} + \frac{{4n{{p^2{G_1^2}}}}}{{ {\gamma _0}{\mu ^2}{\varsigma _s}}} + \frac{{n R{{(\mathbb{X})}^2}}}{{2{\gamma _0}{\mu ^2}{\varsigma _s}}} + \frac{{n r{{(\mathbb{X})}^2}}}{{2{\gamma _0}{\mu ^2}{\varsigma _s}}} + \frac{{2n{G_1}R(\mathbb{X})}}{{{\gamma _0}{\mu ^2}{\varsigma _s}}} + \frac{{2n{G_1}r(\mathbb{X})}}{{{\gamma _0}{\mu ^2}{\varsigma _s}}}. \label{theorem4-3-eq2}
\end{flalign}

From \eqref{theorem4-eq1}, \eqref{lemma6-eqc}, \eqref{theorem4-3-eq1}, and \eqref{theorem4-3-eq2}, we have
\begin{flalign}
\nonumber
&\;\;\;\;\;{\mathbf{E}}[{\rm{Net}\mbox{-}\rm{CCV}}( T )] \\
\nonumber
& \le n{G_2}{\tilde{\varepsilon} _1} + \frac{{{\varepsilon _3}\big( {\log ( T ) + 1} \big)}}{\mu } + \frac{{2nF{\varepsilon _4}\big( {\log ( T ) + 1} \big)}}{{{\gamma _0}\mu {\varsigma _s}}} + \frac{{4n{{p^2{G_1^2} }}{\varepsilon _4}}}{{ {\gamma _0}{\mu ^2}{\varsigma _s}}} + \frac{{n R{{(\mathbb{X})}^2}{\varepsilon _4}}}{{2{\gamma _0}{\mu ^2}{\varsigma _s}}} \\
&\;\; + \frac{{n r{{(\mathbb{X})}^2}{\varepsilon _4}}}{{2{\gamma _0}{\mu ^2}{\varsigma _s}}} + \frac{{2n{G_1}R(\mathbb{X}){\varepsilon _4}}}{{{\gamma _0}{\mu ^2}{\varsigma _s}}} + \frac{{2n{G_1}r(\mathbb{X}){\varepsilon _4}}}{{{\gamma _0}{\mu ^2}{\varsigma _s}}}. \label{theorem4-4-eq3}
\end{flalign}

It follows from \eqref{theorem4-4-eq3} that \eqref{theorem4-eq4} holds.
\bibliographystyle{IEEEtran}
\bibliography{reference_online}

\begin{thebibliography}{10}
\providecommand{\url}[1]{#1}
\csname url@samestyle\endcsname
\providecommand{\newblock}{\relax}
\providecommand{\bibinfo}[2]{#2}
\providecommand{\BIBentrySTDinterwordspacing}{\spaceskip=0pt\relax}
\providecommand{\BIBentryALTinterwordstretchfactor}{4}
\providecommand{\BIBentryALTinterwordspacing}{\spaceskip=\fontdimen2\font plus
\BIBentryALTinterwordstretchfactor\fontdimen3\font minus
  \fontdimen4\font\relax}
\providecommand{\BIBforeignlanguage}[2]{{%
\expandafter\ifx\csname l@#1\endcsname\relax
\typeout{** WARNING: IEEEtran.bst: No hyphenation pattern has been}%
\typeout{** loaded for the language `#1'. Using the pattern for}%
\typeout{** the default language instead.}%
\else
\language=\csname l@#1\endcsname
\fi
#2}}
\providecommand{\BIBdecl}{\relax}
\BIBdecl

\bibitem{Hazan2016a}
E.~Hazan, ``Introduction to online convex optimization,'' \emph{Foundations and
  Trends in Optimization}, vol.~2, no. 3-4, pp. 157--325, 2016.

\bibitem{Li2023}
X.~Li, L.~Xie, and N.~Li, ``A survey on distributed online optimization and
  online games,'' \emph{Annual Reviews in Control}, vol.~56, p. 100904, 2023.

\bibitem{Lu2013}
L.~Lu, J.~Tu, C.-K. Chau, M.~Chen, and X.~Lin, ``Online energy generation
  scheduling for microgrids with intermittent energy sources and
  co-generation,'' \emph{ACM SIGMETRICS Performance Evaluation Review},
  vol.~41, no.~1, pp. 53--66, 2013.

\bibitem{Zhang2018a}
Y.~Zhang, M.~H. Hajiesmaili, S.~Cai, M.~Chen, and Q.~Zhu, ``Peak-aware online
  economic dispatching for microgrids,'' \emph{IEEE Transactions on Smart
  Grid}, vol.~9, no.~1, pp. 323--335, 2018.

\bibitem{Lin2012}
M.~Lin, A.~Wierman, L.~L. Andrew, and E.~Thereska, ``Dynamic right-sizing for
  power-proportional data centers,'' \emph{IEEE/ACM Transactions on
  Networking}, vol.~21, no.~5, pp. 1378--1391, 2012.

\bibitem{Liu2013}
Z.~Liu, A.~Wierman, Y.~Chen, B.~Razon, and N.~Chen, ``Data center demand
  response: Avoiding the coincident peak via workload shifting and local
  generation,'' in \emph{the ACM SIGMETRICS/International Conference on
  Measurement and Modeling of Computer Systems}, 2013, pp. 341--342.

\bibitem{Liu2014}
Z.~Liu, I.~Liu, S.~Low, and A.~Wierman, ``Pricing data center demand
  response,'' \emph{ACM SIGMETRICS Performance Evaluation Review}, vol.~42,
  no.~1, pp. 111--123, 2014.

\bibitem{Flaxman2005}
A.~D. Flaxman, A.~T. Kalai, and H.~B. McMahan, ``Online convex optimization in
  the bandit setting: Gradient descent without a gradient,'' in \emph{the
  Sixteenth Annual ACM-SIAM Symposium on Discrete Algorithms}, 2005, pp.
  385--394.

\bibitem{Saha2011}
A.~Saha and A.~Tewari, ``Improved regret guarantees for online smooth convex
  optimization with bandit feedback,'' in \emph{International Conference on
  Artificial Intelligence and Statistics}, 2011, pp. 636--642.

\bibitem{Dani2008}
V.~Dani, S.~M. Kakade, and T.~Hayes, ``The price of bandit information for
  online optimization,'' in \emph{Advances in Neural Information Processing
  Systems}, 2008, pp. 345--352.

\bibitem{Hazan2007}
E.~Hazan, A.~Agarwal, and S.~Kale, ``Logarithmic regret algorithms for online
  convex optimization,'' \emph{Machine Learning}, vol.~69, no. 2-3, pp.
  169--192, 2007.

\bibitem{Agarwal2010}
A.~Agarwal, O.~Dekel, and L.~Xiao, ``Optimal algorithms for online convex
  optimization with multi-point bandit feedback,'' in \emph{Conference on
  Learning Theory}, 2010, pp. 28--40.

\bibitem{Zinkevich2003}
M.~Zinkevich, ``Online convex programming and generalized infinitesimal
  gradient ascent,'' in \emph{International Conference on Machine Learning},
  2003, pp. 928--936.

\bibitem{Mahdavi2012}
M.~Mahdavi, R.~Jin, and T.~Yang, ``Trading regret for efficiency: Online convex
  optimization with long term constraints,'' \emph{Journal of Machine Learning
  Research}, vol.~13, no.~1, pp. 2503--2528, 2012.

\bibitem{Cao2019}
X.~Cao and K.~R. Liu, ``Online convex optimization with time-varying
  constraints and bandit feedback,'' \emph{IEEE Transactions on Automatic
  Control}, vol.~64, no.~7, pp. 2665--2680, 2019.

\bibitem{Nedic2018}
A.~Nedi{\'c} and J.~Liu, ``Distributed optimization for control,'' \emph{Annual
  Review of Control, Robotics, and Autonomous Systems}, vol.~1, pp. 77--103,
  2018.

\bibitem{Yang2019}
T.~Yang, X.~Yi, J.~Wu, Y.~Yuan, D.~Wu, Z.~Meng, Y.~Hong, H.~Wang, Z.~Lin, and
  K.~H. Johansson, ``A survey of distributed optimization,'' \emph{Annual
  Reviews in Control}, vol.~47, pp. 278--305, 2019.

\bibitem{Yuan2021}
D.~Yuan, Y.~Hong, D.~W.~C. Ho, and S.~Xu, ``Distributed mirror descent for
  online composite optimization,'' \emph{IEEE Transactions on Automatic
  Control}, vol.~66, no.~2, pp. 714--729, 2021.

\bibitem{Wang2020}
C.~Wang, S.~Xu, D.~Yuan, B.~Zhang, and Z.~Zhang, ``Push-sum distributed online
  optimization with bandit feedback,'' \emph{IEEE Transactions on Cybernetics},
  vol.~52, no.~4, pp. 2263--2273, 2020.

\bibitem{Cao2021}
X.~Cao and T.~Ba{\c{s}}ar, ``Decentralized online convex optimization with
  event-triggered communications,'' \emph{IEEE Transactions on Signal
  Processing}, vol.~69, pp. 284--299, 2021.

\bibitem{Li2021a}
J.~Li, C.~Li, W.~Yu, X.~Zhu, and X.~Yu, ``Distributed online bandit learning in
  dynamic environments over unbalanced digraphs,'' \emph{IEEE Transactions on
  Network Science and Engineering}, vol.~8, no.~4, pp. 3034--3047, 2021.

\bibitem{Tu2022}
Z.~Tu, X.~Wang, Y.~Hong, L.~Wang, D.~Yuan, and G.~Shi, ``Distributed online
  convex optimization with compressed communication,'' in \emph{Advances in
  Neural Information Processing Systems}, 2022, pp. 34\,492--34\,504.

\bibitem{Patel2022}
K.~K. Patel, A.~Saha, L.~Wang, and N.~Srebro, ``Distributed online and bandit
  convex optimization,'' in \emph{OPT 2022: Optimization for Machine Learning},
  2022.

\bibitem{Xiong2023}
M.~Xiong, B.~Zhang, D.~Yuan, Y.~Zhang, and J.~Chen, ``Event-triggered
  distributed online convex optimization with delayed bandit feedback,''
  \emph{Applied Mathematics and Computation}, vol. 445, p. 127865, 2023.

\bibitem{Yuan2021b}
D.~Yuan, A.~Proutiere, and G.~Shi, ``Distributed online linear regressions,''
  \emph{IEEE Transactions on Information Theory}, vol.~67, no.~1, pp. 616--639,
  2021.

\bibitem{Yuan2018}
J.~Yuan and A.~Lamperski, ``Online convex optimization for cumulative
  constraints,'' in \emph{Advances in Neural Information Processing Systems},
  2018, pp. 6140--6149.

\bibitem{Yuan2022}
D.~Yuan, A.~Proutiere, and G.~Shi, ``Distributed online optimization with
  long-term constraints,'' \emph{IEEE Transactions on Automatic Control},
  vol.~67, no.~3, pp. 1089--1104, 2022.

\bibitem{Yi2023}
X.~Yi, X.~Li, T.~Yang, L.~Xie, T.~Chai, and K.~H. Johansson, ``Regret and
  cumulative constraint violation analysis for distributed online constrained
  convex optimization,'' \emph{IEEE Transactions on Automatic Control},
  vol.~68, no.~5, pp. 2875--2890, 2023.

\bibitem{Yi2024}
X.~Yi, X.~Li, T.~Yang, L.~Xie, Y.~Hong, T.~Chai, and K.~H. Johansson,
  ``Distributed online convex optimization with adversarial constraints:
  Reduced cumulative constraint violation bounds under slater's condition,''
  \emph{arXiv preprint arXiv:2306.00149}, 2023.

\bibitem{Boyd2004}
S.~Boyd and L.~Vandenberghe, \emph{Convex Optimization}.\hskip 1em plus 0.5em
  minus 0.4em\relax Cambridge University Press, 2004.

\bibitem{Yu2017}
H.~Yu, M.~Neely, and X.~Wei, ``Online convex optimization with stochastic
  constraints,'' \emph{Advances in Neural Information Processing Systems},
  vol.~30, 2017.

\bibitem{Neely2017}
M.~J. Neely and H.~Yu, ``Online convex optimization with time-varying
  constraints,'' \emph{arXiv preprint arXiv:1702.04783}, 2017.

\bibitem{ShalevShwartz2012}
S.~Shalev-Shwartz, ``Online learning and online convex optimization,''
  \emph{Foundations and Trends in Machine Learning}, vol.~4, no.~2, pp.
  107--194, 2012.

\bibitem{Yi2020}
X.~Yi, X.~Li, L.~Xie, and K.~H. Johansson, ``Distributed online convex
  optimization with time-varying coupled inequality constraints,'' \emph{IEEE
  Transactions on Signal Processing}, vol.~68, pp. 731--746, 2020.

\bibitem{Yi2021b}
X.~Yi, X.~Li, T.~Yang, L.~Xie, T.~Chai, and K.~H. Johansson, ``Distributed
  bandit online convex optimization with time-varying coupled inequality
  constraints,'' \emph{IEEE Transactions on Automatic Control}, vol.~66,
  no.~10, pp. 4620--4635, 2021.

\bibitem{Shamir2017}
O.~Shamir, ``An optimal algorithm for bandit and zero-order convex optimization
  with two-point feedback,'' \emph{Journal of Machine Learning Research},
  vol.~18, no.~1, pp. 1703--1713, 2017.

\end{thebibliography}

\end{document}